\newcommand{\datax}{\boldsymbol{x}}
\newcommand{\dataz}{\boldsymbol{z}}
\newcommand{\processx}{\boldsymbol{X}}
\newcommand{\velocity}{\boldsymbol{v}}
\newcommand{\expt}{\mathbb{E}}
\newcommand{\lossl}{\mathcal{L}}
\def\eqref#1{equation~\ref{#1}}
\def\myeqref#1{Eq.~\ref{#1}}
\def\Figref#1{Fig.~\ref{#1}}
\def\Algref#1{Algorithm~\ref{#1}}
\newtheorem{theorem}{Theorem}
\newtheorem{proposition}{Remark}
\newtheorem*{proof}{Proof}
\begin{document}

\title{Bi-modality medical images synthesis  by a bi-directional discrete  process matching method}

% \author{Zhe Xiong, Qiaoqiao Ding,
% Xiaoqun Zhang}

\author{IEEE Publication Technology,~\IEEEmembership{Staff,~IEEE,}
        % <-this % stops a space
\thanks{This paper was produced by the IEEE Publication Technology Group. They are in Piscataway, NJ.}% <-this % stops a space
\thanks{Manuscript received xxx; revised xxx.}
}

\author{Zhe Xiong, Qiaoqiao Ding and Xiaoqun Zhang
		
		\IEEEcompsocitemizethanks{
			%		\IEEEcompsocthanksitem 
			\IEEEcompsocthanksitem Zhe Xiong (e-mail: aristotle-x@sjtu.edu.cn) is with School of Mathematical Sciences, Shanghai Jiao Tong University, 200240 Shanghai, China.
			%\vspace{10pt}
			\IEEEcompsocthanksitem  Qiaoqiao Ding (e-mail: dingqiaoqiao@sjtu.edu.cn)  is with  Institute of Natural Sciences, Shanghai Jiao Tong University, 200240 Shanghai, China.
             \IEEEcompsocthanksitem  Xiaoqun Zhang (e-mail: xqzhang@sjtu.edu.cn)  is with  Institute of Natural Sciences, School of Mathematical Sciences and MOE-LSC, Shanghai Jiao Tong University, Shanghai 200240, China
			%\vspace{10pt} 
		}
	}
% The paper headers
\markboth{Journal of \LaTeX\ Class Files,~Vol.~14, No.~8, August~2021}%
{Shell \MakeLowercase{\textit{et al.}}: A Sample Article Using IEEEtran.cls for IEEE Journals}

% \IEEEpubid{0000--0000/00\$00.00~\copyright~2021 IEEE}
% Remember, if you use this you must call \IEEEpubidadjcol in the second
% column for its text to clear the IEEEpubid mark.

\maketitle

\begin{abstract}
Recently, medical image synthesis gains more and more popularity, along with the rapid development of generative models. Medical image synthesis aims to generate an unacquired image modality, often from other observed data modalities. Synthesized images can be  used for clinical diagnostic assistance, data augmentation for  model training and validation or image quality improving. In the meanwhile, the flow-based models are among the successful generative models for the ability of generating  realistic and high-quality synthetic images. However, most flow-based models require to  calculate flow ordinary different equation (ODE) evolution steps in synthesis process, for which the performances are significantly limited by heavy computation time due to a large number of time iterations. In this paper, we propose a novel flow-based model, namely bi-directional Discrete Process Matching (Bi-DPM) to accomplish the bi-modality image synthesis tasks. Different to other flow matching based models,  we propose to utilize both forward and backward ODE flows and enhance the consistency on the intermediate images over a few discrete time steps, resulting in a synthesis process maintaining  high-quality generations for both modalities under the guidance of paired data. Our experiments on three datasets of MRI T1/T2 and CT/MRI demonstrate that Bi-DPM outperforms other state-of-the-art flow-based methods for bi-modality image synthesis, delivering higher image quality with accurate anatomical regions.
\end{abstract}

\begin{IEEEkeywords}
Bi-modality Images, Medical images synthesis, Flow-based Model, Bi-direction Discrete Process Matching
\end{IEEEkeywords}

\section{Introduction}

Medical imaging plays a pivotal role in clinical diagnosis, treatment planning, and monitoring of various health conditions. Various imaging modalities such as, Computed Tomography (CT), Magnetic Resonance Imaging (MRI), and Positron Emission Tomography (PET), are widely used in clinical workflows, each of which can provide unique and distinct structural, functional, and metabolic information that enhances the overall scope for making accurate and reasonable clinical decisions. Even with huge benefits, some imaging modalities such as PET and CT, come with risks of radiation exposure. Moreover, the acquisition of multi-modal images are costly and time-consuming, which may also result in potential artifacts due to long time scanning. Hence, obtaining high quality multi-modality images remains a practical challenge in various clinical applications. 

Inspired by the success of generative models for natural images, medical image synthesis provides an efficient solution  through the transformation from one source image modality to a desired target one. Medical image synthesis can be used for data augmentation for model training \cite{zhang2020medical} and  validation \cite{hu2023synthetic}. In clinical applications it can also be used for MRI-only radiation therapy treatment planning and super-resolution \cite{armanious2020medgan,dayarathna2023deep}. Many novel generative neural network structures and algorithms have emerged to enhance performance in medical image synthesis, for capturing complex non-linear relationship between different image modalities and generating synthetic images of high quality. In early time, Generative Adversarial Networks (GANs) \cite{goodfellow2014generative} are commonly used as the basic model and numerous GAN-related methods are proposed for medical synthesis and have remarkable performances \cite{suganthi2021review,cao2020auto,nie2018medical,zhu2017unpaired}. 

Recently, the emergence of diffusion-based methods offer a different while effective tool for image generation and also promote the development on  medical image synthesis \cite{dorjsembe2022three,pan20232d,ozbey2023unsupervised,muller2023multimodal}. From of point of view of image synthesis, the generation process of classical diffusion models can be viewed as generating an image from a  Gaussian variable \cite{ho2020denoising,song2019generative}. Thus it can not be directly used to find a transformation between two specific image styles. Consequently, some flow-based models with similar network structure are put forward, which can generate impressive images with specified style or modality, such as Conditional Flow Matching (CFM)~\cite{tong2023improving} and Rectified Flow (RF)~\cite{liu2022flow}. Generally speaking, the image synthesis process can be described by a flow ODE:
\begin{equation}\label{eq:process}
    \frac{d \processx_t}{dt} = \velocity(\processx_t, t), ~~ 0 \leq t \leq 1,
\end{equation}
where $\velocity(\cdot, \cdot)$ represents the velocity field. The objective is to convert $\processx_0$ from a source distribution $p(\datax)$ to $\processx_1$ that follows the target distribution $q(\dataz)$. To ensure the process $\processx$ satisfies the  condition that $\processx_0\sim p(\datax)$ and $\processx_1\sim q(\datax)$, both CFM and RF have elaborately designed specific transport paths. Precisely, in \cite{tong2023improving} the author puts forward a uniform framework via using the mixture of conditional probability, which generates various formulations, such as the basic CFM (I-CFM), optimal transport CFM (OT-CFM), and variance preserving CFM (VP-CFM). On the other hand, \cite{liu2022flow} directly utilizes the interpolation between $\processx_0$ and $\processx_1$ as the probability path, which makes the transport process straight and non-crossing. Furthermore, both methods are trained via flow matching~\cite{lipman2022flow}, which uses a neural network $\boldsymbol{u}_\theta(\cdot, \cdot)$ to approximate a velocity field $\velocity(\cdot, \cdot)$ in the sense of some metric $d(\cdot, \cdot)$. Correspondingly, the parameterized  velocity field $\hat{\boldsymbol{u}}_{\theta^*}(\cdot,\cdot)$ is obtained as follows:
\begin{equation}
    \hat{\boldsymbol{u}}_{\theta^*}(\cdot, \cdot) = \underset{\theta}{\arg\min}~ \expt_{t\sim\mathcal{U}([0,1])}\expt_{\processx_t}[d(\boldsymbol{u}_\theta(\processx_t, t), \velocity(\processx_t, t))].
\end{equation}
In both RF and CFM, the velocity field is pre-definded as the linear  interpolation between the source and target distributions. However, for the medical image from different modality, the intermediate states resulting from the interpolation may tend to lack meaningful interpretations. And more importantly, some paired images are available in most cases and the objective is not only generate high-quality synthesis images but also preserve the paired information throughout the synthesis process. For instance, the same anatomical region of a patient is suppose to retain consistent tissue structure between the CT and MRI images. Thus the pair information may be crucial to be  well utilized for the synthesis. On the other hand, in synthesis process, it is cumbersome to calculate the ODE flow along time from zero to one step by step, for which a small step-size takes a considerable amount of time while a large step-size might not be efficient for generating high  quality  images. Consequently, the choice of the step-size in flow-based methods like RF and CFM is crucial and requires careful consideration for different tasks as well.

In this paper, we propose a novel flow-based method, namely bi-directional Discrete Process Matching (Bi-DPM). Our approach ensures consistency between the intermediate steps of the forward and backward equations to learn the transformation between a source image modality and a target modality. Unlike recent mainstream flow-based models, Bi-DPM does not impose constraints on the transport paths. Instead, it focuses on matching intermediate states at pre-selected time steps from both the forward and backward directions of the flow ODE. We design a loss function that handles both fully paired and partially paired data, making our method applicable to a wide range of real world scenarios. We conduct numerical experiments on various medical image modality transfer tasks, and the results demonstrate that Bi-DPM generates high-quality synthesized images, outperforming other flow-matching methods in terms of FID, SSIM, and PSNR metrics. Additionally, Bi-DPM allows for a faster transfer process, as larger ODE step sizes can be used. Finally, clinical evaluations of the synthesized medical images by doctors highlight the potential for clinical application.

\section{Methodology}\label{sec:method}
\subsection{Bi-directional discrete process matching}

Suppose that $\{\datax_i\} \sim p(\datax)$ and $\{\dataz_i\} \sim q(\dataz)$ are two set of bi-modality image observations respectively. Let $\{\processx_t\}_{0\leq t\leq 1}$ be a random process defined on time interval $[0, 1]$. Then considering the flow ODE in \myeqref{eq:process} with the given initial condition $\processx_0 = \datax$ and the reverse process with initialization $\processx_1 = \dataz$, we have
\begin{flalign}\label{eq:forwardODE}
    (\text{Forward ODE}) & \left\{
    \begin{aligned}
    & \frac{d \processx_t}{dt} = \velocity(\processx_t, t), ~~ 0 \leq t \leq 1, \\
    & \processx_0 = \datax,
    \end{aligned}
    \right.
    \\
     (\text{Backward ODE}) &\left\{
    \begin{aligned}
    & \frac{d \processx_t}{dt} = -\velocity(\processx_t, t), ~~ 0 \leq t \leq 1, \\
    & \processx_1 = \dataz.
    \end{aligned}
    \right.
\end{flalign}
Then it is obvious that when the velocity is known, we can obtain $\processx_1 \sim q$ from $\processx_0 \sim p$ via the ODE from time $t = 0$ to $t = 1$ and vise versa. More generally,  for $\forall t\in[0,1]$ we have that
\begin{equation}\label{eq:forward_and_backward}
\begin{aligned}
    \processx_t & = \processx_0 + \int_0^t \velocity(\processx_s, s|\processx_0=\datax) ds \\
    & = \processx_1 - \int_t^1 \velocity(\processx_s, s|\processx_1=\dataz)ds,
\end{aligned}
\end{equation}
where $\velocity(\processx_s, s|\processx_0)$ and $\velocity(\processx_s, s|\processx_1)$ are both equal to $\velocity(\processx_s, s)$, connecting $\datax$ and $\dataz$. \Figref{fig:pipeline} displays the overall process of Bi-DPM, whose main idea is to choose a sequence of time point $0 = t_0 < t_1 < \cdots < t_N = 1$ and request the value of ODE \myeqref{eq:forwardODE} coincides with each other on these time points. Precisely, suppose $\boldsymbol{u}_\theta(\cdot, \cdot)$ represents our neural network with parameters $\theta$, and we call the process defined in \myeqref{eq:forwardODE} the \textit{forward process} and the \textit{backward process} with regard to velocity field $\boldsymbol{u}_\theta(\cdot, \cdot)$ which is denoted by $\processx^f$ and $\processx^b$ respectively:
% \begin{equation}
% \begin{aligned}
%     \processx_t^f & = \processx_0 + \int_0^t \boldsymbol{u}_\theta(\processx_s^f, s|\processx_0=\datax) ds \\
%     \processx_t^b & = \processx_1 - \int_t^1 \boldsymbol{u}_\theta(\processx_s^b, s|\processx_1=\dataz)ds
% \end{aligned}
% \end{equation}
\begin{equation}
\begin{aligned}
    & \processx_t^f  = \processx_0 + \int_0^t \boldsymbol{u}_\theta(\processx_s^f, s|\processx_0=\datax) ds, \\
    & \processx_t^b  = \processx_1 - \int_t^1 \boldsymbol{u}_\theta(\processx_s^b, s|\processx_1=\dataz)ds
\end{aligned}
\end{equation}
Then for each discrete time point $t_n$ we can use a one-step numerical ODE solver to estimate $\processx_{t_n}$ from $\processx_{t_{n-1}}$ in forward iteration and opposite for the backward process, which is defined as follows:
\begin{equation}\label{eq:iteration process}
    \begin{aligned}
        &\processx_{t_n}^f  = \processx_{t_{n-1}}^f + \boldsymbol{u}_\theta(\processx_{t_{n-1}}^f, t_{n-1})(t_n - t_{n-1}), \\
        &\processx_{t_{n-1}}^b  = \processx_{t_n}^b + \boldsymbol{u}_\theta(\processx_{t_{n}}^b, t_{n})(t_{n - 1} - t_{n}).
    \end{aligned}
\end{equation}
Here we use Euler formula for solving the ODE. Then we can use a metric $d(\cdot, \cdot)$ to measures the distance between $\processx_{t_n}^f$ and $\processx_{t_n}^b$ for $\forall n\in\{0,1,\cdots,N\}$. 
Hence, we propose our training objective function as follows:
\begin{equation}\label{eq:training loss}
    \lossl(\theta) = \sum_{n=0}^N w_nd(\processx_{t_n}^f,\processx_{t_n}^b),
\end{equation}
where $w_n$ is the weight at time $t_n$. With different type of training data, we can choose different metric $d(\cdot, \cdot)$ to match the characteristic properly. Precisely, in our experiments, we consider both cases of totally paired datasets and partially paired datasets. For paired data, we use Learned Perceptual Image Patch Similarity \cite{zhang2018unreasonable}(LPIPS) as the metric $d(\cdot, \cdot)$ while for unpaired data, we take Maximum Mean Discrepancy \cite{smola2006maximum}(MMD) to measure the distance between them \cite{dziugaite2015training, sutherland2016generative, li2017mmd}. Precisely, 
suppose $\{(\datax^p_i, \dataz^p_i)\}$ are paired data and $\{\datax^u_m\}\sim p(\datax)$ are $\{\dataz^u_n\} \sim q(\dataz)$ are unpaired data. Then the training loss for paired data and unpaired ones are given as
\begin{equation}\label{eq:lpips loss}
\begin{aligned}
    \mathcal{L}^p(\theta) & = \sum_{i}\sum_{n=0}^N\mathrm{LPIPS}(\datax^f_{i,t_n}, \dataz^b_{i,t_n}), \\
     = \sum_{i}&\sum_{n=0}^N\frac{1}{H_l W_l}\sum_{h,w}^{H_l,W_l}\|w_l \odot \left[\phi_l(\datax^f_{i,t_n})_{h,w} - \phi_l(\dataz^b_{i,t_n})_{h,w}\right]\|_2^2 .
\end{aligned}
\end{equation}
\begin{equation}\label{eq:mmd loss}
\begin{aligned}
    \mathcal{L}^u(\theta) & = \sum_{p,q}\sum_{n=0}^N\mathrm{MMD}(\datax^f_{p,t_n}, \dataz^b_{q,t_n}), \\
    & = \sum_{n=0}^N\bigg[\frac{1}{m^2}\sum_{p,p^\prime}k(\datax^f_{p,t_n}, \datax^f_{p^\prime,t_n}) + \frac{1}{n^2}\sum_{q,q^\prime}k(\dataz^b_{q,t_n}, \dataz^b_{q^\prime,t_n}) \\ 
    &\qquad - \frac{2}{mn}\sum_{p,q}k(\datax^f_{p,t_n}, \dataz^b_{z,t_n})\bigg].
\end{aligned}
\end{equation}
where the $\mathcal{L}^p$ and $\mathcal{L}^u$ are paired loss and unpaired loss respectively and $\theta$ are the trainable parameters of the velocity field model. The $\datax^f_{i, t_n}$ represents the intermediate state of sample $\datax_i$ at time $t_n$ in the forward process while $\dataz^b_{i,t_n}$ are the corresponding state in the backward process. In (\ref{eq:lpips loss}), $\phi_l$ represents the $l$-th layer of a pretrained VGG net \cite{simonyan2014very} and $H_l, W_l$ are the height and width of the corresponding feature. In (\ref{eq:mmd loss}), the $k(\cdot,\cdot)$ is a fixed kernel function. Therefore, our empirical training loss is defined as
\begin{equation}\label{eq:partial paired loss}
    \mathcal{L}(\theta) = \mathcal{L}^p(\theta) + \lambda_u \mathcal{L}^u(\theta),
\end{equation}
where $\lambda_u$ is a hyperparameter that controls the weight of MMD between unpaired data. Especially, for totally paired dataset, we only use LPIPS as loss function and $\lambda_u$ is equal to 0 correspondingly.

On the other hand, after obtaining a well-trained velocity field $\boldsymbol{u}_{\theta^*}(\cdot, \cdot)$, we can synthesis from $\processx_0(\processx_1)$ to $\processx_1(\processx_0)$ along the forward (backward) ODE along the direction $t_0 \leftrightarrows t_1\leftrightarrows\cdots\leftrightarrows t_N$ and the corresponding $\processx^f_1$ ($\processx^b_0$) can be regarded as the final synthesis results. The algorithms for training and synthesis process are illustrated in \Algref{alg:training} and \Algref{alg:transfer}.

\begin{figure*}[htbp]
\centering        
    \includegraphics[width=1.\textwidth]{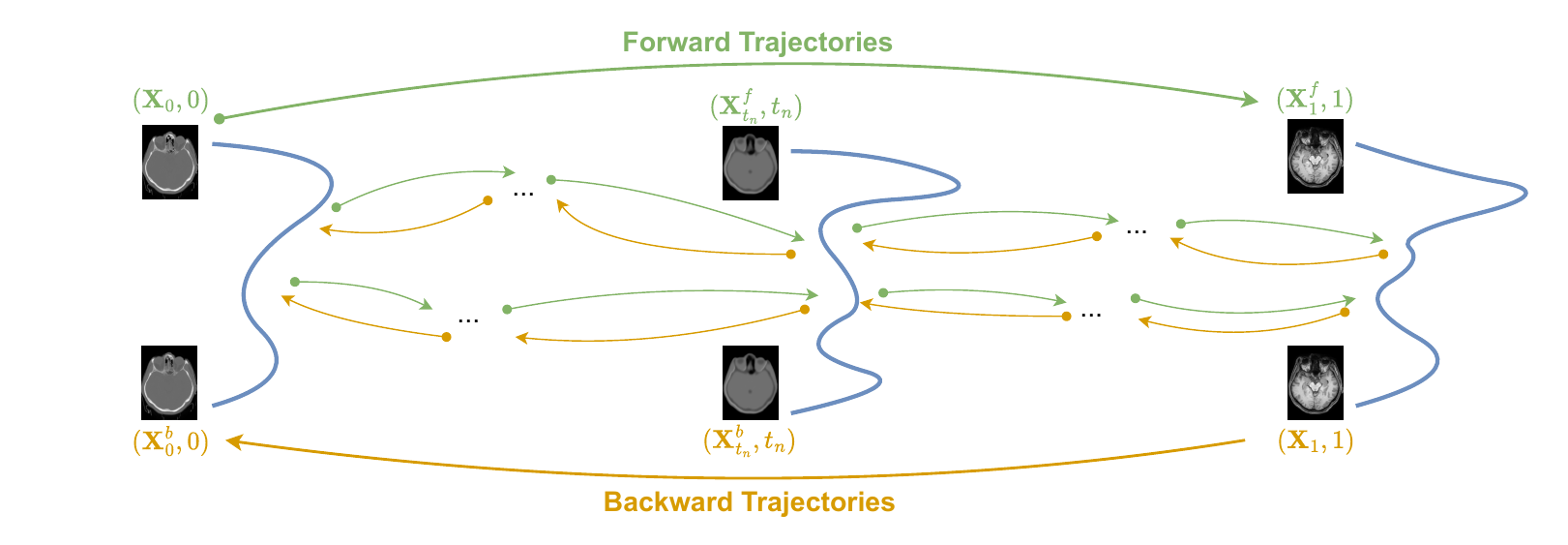}
    \caption{The overall pipeline of Bi-DPM.}
    \label{fig:pipeline}
\end{figure*}

\begin{algorithm}[htbp]
  \SetAlgoLined
  \KwIn{time steps $\{t_0, t_1, \cdots, t_N\}$ with $t_0 = 0$ and $t_N = 1$, initial velocity model $\boldsymbol{u}_\theta(\cdot, \cdot)$, weight parameter $\{w_0, w_1,\cdots, w_N\}$, learning rate $\eta$, a metric $d(\cdot, \cdot)$.}
  \KwData{dataset $\mathcal{D}_1, \mathcal{D}_2$.}
  \Repeat{convergence}{
    Sample $\datax \sim \mathcal{D}_1$ and $\dataz \sim \mathcal{D}_2$\;
    Initialize $\processx_0^f \gets \datax$ and $\processx_1^b \gets \dataz$ \;
    \For{$n = 1, \cdots, N$}{
        $\processx_{t_n}^f \gets \processx_{t_{n-1}}^f + \boldsymbol{u}_\theta(\processx_{t_{n-1}}^f, t_{n-1})(t_n - t_{n-1})$ \;
        $\processx_{t_{n-1}}^b \gets \processx_{t_n}^b + \boldsymbol{u}_\theta(\processx_{t_n}^b, t_n)(t_{n-1} - t_n)$ \;
    }
    $\lossl(\theta) \gets \sum_{n=0}^N w_n d(\processx_{t_n}^f, \processx_{t_n}^b)$ \;
    $\theta \gets \theta - \eta\nabla_\theta \lossl(\theta)$ \;
  }
  \caption{Training of Bi-DPM}
  \label{alg:training}
\end{algorithm}

\begin{algorithm}[htbp]
  \SetAlgoLined
  \KwIn{well-trained velocity model $\boldsymbol{u}_{\theta^*}$, time steps $\{t_0, t_1, \cdots, t_N\}$ with $t_0 = 0$ and $t_N = 1$.}
  \KwData{initial sample $\datax \sim \mathcal{D}_1$ and $\dataz \sim \mathcal{D}_2$}
  \For{$n = 1$ \textbf{to} $N$}{
    $\datax \gets \datax + \boldsymbol{u}_{\theta^*}(\datax, t_{n-1})(t_n - t_{n-1})$\;
    $\dataz \gets \dataz + \boldsymbol{u}_{\theta^*}(\dataz, t_n)(t_{n - 1} - t_n)$\;
  }
  \KwOut{$\dataz$ and $\datax$}
  \caption{Synthesis on both direction via Bi-DPM}
  \label{alg:transfer}
\end{algorithm}

\subsection{Comparisons to other methods}\label{subsec:comparisons}

Flow-based methods, such as Rectified Flow (RF) or Conditional Flow Matching (CFM), emphasize aligning the entire of transport path. Specifically, both RF and CFM aim to minimize the following loss function with respect to a given true velocity field $v(\cdot, \cdot)$:
\begin{equation}
    \mathcal{L}_{\text{continuous flow}}(\theta) = \expt_{t\sim\mathcal{U}([0,1])}\expt_{\processx_t}\|u_\theta(\processx_t, t) - v(\processx_t, t)\|_2,
\end{equation}
In RF, the velocity field is defined as $v(\processx_t, t) = \processx_1 - \processx_0$, with the constraint $\processx_t = (1 - t)\processx_0 + t\processx_1$. And in CFM, the velocity field is defined as $v(\processx_t, t) = \frac{\sigma_t^\prime(z)}{\sigma_t(z)}(\processx_t - \mu_t(z)) + \mu_t^\prime(z)$, with the constraint $\processx_t \sim \mathcal{N}(\mu_t(z), \sigma_t(z))$, where the variables $z$, $\mu_t(z)$ and $\sigma_t(z)$ are set differently, with each configuration leading to a distinct version of CFM.

\begin{theorem}
    Suppose $h = t_n - t_{n - 1}, n = 1,2,\cdots, N$ and $u_\theta(\cdot, \cdot)$ is a solution to (\ref{eq:training loss}) with loss zero, which means that
    \begin{align}
        & \processx_n := \processx_{t_n}^f = \processx_{t_n}^b, \nonumber\\
        & \processx_n = \processx_{n - 1} + h u_\theta(\processx_{n-1}, t_{n - 1}) \nonumber \\
        &~~~~ = \processx_{n + 1} - h u_\theta(\processx_{n+1}, t_{n+1}), ~~ n = 1,2,\cdots, N - 1 \nonumber \\
        & \processx_0 = \processx_1 - h u_\theta(\processx_1, t_1), \nonumber\\
        & \processx_N = \processx_{N-1} + h u_\theta(\processx_{N-1}, t_{N-1}) \nonumber.
    \end{align}
    Then $u_\theta(\processx_n, t_n)$ satisfies that
    \begin{equation*}
        u_\theta(\processx_n, t_n) = \processx_N - \processx_0, ~~\text{for} ~~ n = 0,2,\cdots, N,
    \end{equation*}
    which is exact the direction in RF.
\end{theorem}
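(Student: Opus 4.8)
The plan is to reinterpret the zero-loss conditions as statements about the first differences of the sequence $\{\processx_n\}_{n=0}^N$, and then show those differences are all equal. First I would isolate $u_\theta$ in each Euler update. The interior forward identities $\processx_n = \processx_{n-1} + h u_\theta(\processx_{n-1}, t_{n-1})$ $(n = 1,\ldots,N-1)$ together with the terminal boundary condition $\processx_N = \processx_{N-1} + h u_\theta(\processx_{N-1}, t_{N-1})$ merge into the single family
$$u_\theta(\processx_m, t_m) = \frac{\processx_{m+1} - \processx_m}{h}, \qquad m = 0, 1, \ldots, N-1,$$
while the interior backward identities together with the initial boundary condition $\processx_0 = \processx_1 - h u_\theta(\processx_1, t_1)$ merge into
$$u_\theta(\processx_m, t_m) = \frac{\processx_m - \processx_{m-1}}{h}, \qquad m = 1, 2, \ldots, N.$$

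The second step is to exploit the overlap of these two index ranges. For every interior node $m \in \{1, \ldots, N-1\}$ both formulas describe the same quantity $u_\theta(\processx_m, t_m)$, so equating them gives the discrete second-difference relation $\processx_{m+1} - 2\processx_m + \processx_{m-1} = 0$. This is exactly the condition that the first differences $\delta_m := \processx_{m+1} - \processx_m$ are independent of $m$; hence $\{\processx_n\}$ is an arithmetic progression and there is a single increment $\delta$ with $\processx_{m+1} - \processx_m = \delta$ for all $m$.

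Finally I would evaluate the constant. Telescoping yields $\processx_N - \processx_0 = \sum_{m=0}^{N-1}\delta = N\delta$, and since the nodes are equispaced on $[0,1]$ we have $Nh = t_N - t_0 = 1$. Substituting $\delta = (\processx_N - \processx_0)/N$ into either isolated formula gives
$$u_\theta(\processx_m, t_m) = \frac{\delta}{h} = \frac{\processx_N - \processx_0}{Nh} = \processx_N - \processx_0, \qquad m = 0, 1, \ldots, N,$$
which is the claimed RF direction. The two endpoints are each covered by only one of the families ($m=0$ by the forward one, $m=N$ by the backward one), but since all increments coincide this leaves no gap.

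The argument is essentially bookkeeping once the key observation is in hand, so I do not expect a serious obstacle. The one point requiring care — and the step I would double-check most carefully — is correctly absorbing the two boundary conditions into the forward and backward families so that the resulting index ranges overlap on the entire interior $\{1,\ldots,N-1\}$. Getting this alignment right is precisely what turns the two one-sided slope formulas into the vanishing second difference, from which the arithmetic-progression structure, and therefore the constant velocity $\processx_N - \processx_0$, follows immediately.
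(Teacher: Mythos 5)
Your proof is correct and follows essentially the same route as the paper's: both exploit that at each interior node the forward and backward Euler relations pin down $u_\theta(\processx_n, t_n)$ from both sides, forcing all increments $\processx_{n+1}-\processx_n$ (equivalently all velocity values) to coincide, and then use $Nh=1$ to evaluate the common value as $\processx_N-\processx_0$. The only difference is bookkeeping: you phrase the key step locally as a vanishing second difference (so $\{\processx_n\}$ is an arithmetic progression), whereas the paper telescopes each $\processx_n$ all the way out to $\processx_0$ and to $\processx_N$ and equates the two resulting global sums.
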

\begin{proof}
    For $\forall n \in {0,1,2,\cdots, N}$, consider $\processx_n$ and one has that
    \begin{align*}
        \processx_n &= \processx_{n-1} + h u_\theta(\processx_{n-1}, t_{n-1}) = \cdots \\
        & = \processx_0 + h\sum_{k=0}^{n-1} u_\theta(\processx_k, t_k) \\
        & = \processx_{n + 1} - h u_\theta(\processx_{n + 1}, t_{n+1}) = \cdots \\
        & = \processx_N - h\sum_{k=n + 1}^{N} u_\theta(\processx_k, t_k)
    \end{align*}
    Therefore, one obtains that
    \begin{equation}
        \processx_N - \processx_0 = h\sum_{k=0}^N u_\theta(\processx_k, t_k) - u_\theta(\processx_n, t_n),
    \end{equation}
    which indicates that for all $n = 0,1,2\cdots, N$, $u_\theta(\processx_n, t_n)$ keeps invariant and 
    \begin{equation}
        u_\theta(\processx_n, t_n) =  h \sum_{k=0}^N u_\theta(\processx_k, t_k) - (\processx_N - \processx_0).
    \end{equation}
    Furthermore, with $hN = 1$ we can derive that
    \begin{equation}
    \begin{aligned}
        u_\theta(\processx_n, t_n)& = h(N + 1)u_\theta(\processx_n, t_n) - (\processx_N - \processx_0)\\
        & = \processx_N - \processx_0.
    \end{aligned}
    \end{equation}
\end{proof}

In contrast, our Bi-DPM is designed to match the intermediate states at specific time points, which introduces more flexibility into the model and eliminates the need of a predefined velocity field. Furthermore, for each time points, \myeqref{eq:forward_and_backward} indicates the relationship:
\begin{equation}
    \processx_1 - \processx_0 = \int_0^1 v(\processx_s, s) ds   
\end{equation}
which can be regarded as a generalized version of the constraint $\processx_0 - \processx_1 = v(\processx_t, t)$ used in RF. 
\begin{proposition}\label{thm: relation to rf}
     Define $\Delta t = \max_{n=1,\cdots,N} \|t_n - t_{n - 1}\|_1$ and suppose $u_\theta(\cdot, \cdot)$ is a solution to (\ref{eq:training loss}) with loss zero. Then if $u_\theta(\processx_0, t_0) = u_\theta(\processx_1, t_N)$ and $\Delta t \to 0$, it obtains that $u_\theta(\processx_0, t_0) = \processx_1 - \processx_0$.
\end{proposition}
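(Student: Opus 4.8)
The plan is to lean on the same zero-loss characterization that was used for the Theorem. When $\lossl(\theta)=0$ (with positive weights $w_n$ and a genuine metric $d$), the forward and backward iterates agree at every node, so I may set $\processx_n := \processx_{t_n}^f = \processx_{t_n}^b$ and argue purely from the one-step Euler relations of \eqref{eq:training loss}'s construction. Writing $h_n = t_n - t_{n-1}$, the forward step gives $\processx_n - \processx_{n-1} = h_n\, u_\theta(\processx_{n-1}, t_{n-1})$, while the backward step, solved for $\processx_n$, gives $\processx_n - \processx_{n-1} = h_n\, u_\theta(\processx_n, t_n)$. First I would subtract these node by node; since $h_n>0$ this forces $u_\theta(\processx_{n-1}, t_{n-1}) = u_\theta(\processx_n, t_n)$ for each $n$, so the velocity evaluated along the matched trajectory is constant across all grid nodes. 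The hypothesis $u_\theta(\processx_0, t_0) = u_\theta(\processx_1, t_N)$ is then exactly the endpoint instance of this chain, so it is consistent with (indeed implied by) the per-step equality and introduces no extra restriction.

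Next I would telescope the forward relation over $n=1,\dots,N$ to obtain
\[
  \processx_N - \processx_0 = \sum_{n=1}^N h_n\, u_\theta(\processx_{n-1}, t_{n-1}).
\]
Because $\sum_{n=1}^N h_n = t_N - t_0 = 1$ and, by the previous step, every summand equals the single constant $c := u_\theta(\processx_0, t_0)$, the right-hand side collapses to $c$, giving $c = \processx_N - \processx_0$. Identifying the terminal node $\processx_N$ with $\processx_1$ (the state at $t_N=1$) and $\processx_0$ with the initial state then yields $u_\theta(\processx_0, t_0) = \processx_1 - \processx_0$, which is the asserted rectified-flow direction.

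The genuinely delicate point is the role of $\Delta t \to 0$. The discrete argument already pins $u_\theta$ to $\processx_1 - \processx_0$ at the grid nodes for \emph{any} fixed partition, so the limit is not what produces the identity at $t_0$; rather, it is what upgrades the nodal statement to the continuous-time claim, namely that along refining grids the matched process converges to the straight interpolation and $u_\theta(\processx_t, t) = \processx_1 - \processx_0$ holds for every $t\in[0,1]$, not merely on the grid. I would make this precise by assuming $u_\theta(\cdot,\cdot)$ is continuous (or Lipschitz) along the trajectory, so that the left-endpoint sum above is a Riemann sum converging to $\int_0^1 u_\theta(\processx_s,s)\,ds$ and the Euler iterates converge to the solution of the flow ODE in \eqref{eq:forwardODE}; density of the refining nodes together with continuity then extends the constant-velocity conclusion to all of $[0,1]$. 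I expect this limiting step to be the main obstacle: the nodal identities fall out almost for free from the zero-loss relations, but carrying them to the continuum requires the regularity and convergence hypotheses to be stated carefully, and one must verify that no circularity is hidden in using the endpoint-matching assumption alongside the per-step equality it already follows from.
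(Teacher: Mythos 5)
Your proof is correct, but it follows a genuinely different route from the paper's. The paper proves this Remark by an asymptotic argument: it Taylor-expands the Euler iterates to write $\processx_n^f = \processx_0 + t_n u_\theta(\processx_0,t_0) + o(\Delta t)$ and $\processx_n^b = \processx_1 + (1-t_n)u_\theta(\processx_1,t_1) + o(\Delta t)$, equates them using zero loss, and lets $\Delta t \to 0$; this implicitly requires smoothness of $u_\theta$ along the trajectory and only yields the identity in the limit. You instead reuse the exact discrete mechanism behind Theorem~1: zero loss identifies $\processx_{t_n}^f=\processx_{t_n}^b=:\processx_n$ at every node, the forward and backward one-step relations both express $\processx_n-\processx_{n-1}$ (as $h_n u_\theta(\processx_{n-1},t_{n-1})$ and $h_n u_\theta(\processx_n,t_n)$ respectively), so the nodal velocities are all equal, and telescoping with $\sum_n h_n = 1$ gives $u_\theta(\processx_0,t_0)=\processx_N-\processx_0=\processx_1-\processx_0$ exactly. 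Your version is sharper on three counts: it holds for any fixed (even non-uniform) partition with no limit, it needs no regularity of $u_\theta$, and it shows the hypothesis $u_\theta(\processx_0,t_0)=u_\theta(\processx_1,t_N)$ is redundant rather than an extra assumption. The one place you slightly misdiagnose the situation is your closing paragraph: since the proposition only asserts a nodal identity at $t_0$, the $\Delta t\to 0$ limit does no work at all for the stated claim, and the "upgrade to all $t\in[0,1]$" you sketch (Riemann sums, convergence of Euler iterates under a Lipschitz assumption) is proving something the proposition does not assert; it is a reasonable aside but not an obstacle to the proof. Both your argument and the paper's share the same unstated premise that zero loss with positive weights forces pointwise equality of the forward and backward states (i.e., that $d$ separates points), which is harmless here since the paper builds that into its definition of a zero-loss solution.
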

\begin{proof}
    For $\forall n \in \{1,\cdots,N\}$, following \myeqref{eq:iteration process} and taking Taylor's expansion for each step, it obtains that
    \begin{equation}
    \begin{aligned}
        & \processx_n^f = \processx_0 + t_n u_\theta(\processx_0, t_0) + o(\Delta t),\\
        & \processx_n^b = \processx_1 + (1 - t_n)u_\theta(\processx_1, t_1) + o(\Delta t).
    \end{aligned}
    \end{equation}
    Since $u_\theta(\cdot, \cdot)$ is a solution to (\ref{eq:training loss}) with loss zero, one gets that $\processx_n^f = \processx_n^b$ and correspondingly,
    \begin{equation}
        \processx_1 - \processx_0 = u_\theta(\processx_1, t_1) +  o(\Delta t) = u_\theta(\processx_0, t_0) +  o(\Delta t),
    \end{equation}
    which leads to the conclusion in Remark \ref{thm: relation to rf} as $\Delta t\to 0$.
\end{proof}
According to Remark \ref{thm: relation to rf}, the ground truth velocity field defined in RF is a specific solution to our problem. However, the objective of our model allows for solutions where the directions at $t=0$ and $t=1$ are both equal to $\processx_1 - \processx_0$, without imposing restrictions on the intermediate path during the transformation process. Furthermore, since our Bi-DPM focuses on points matching rather than relying on a predefined velocity field, it can fully leverage the paired relationship through the metrics such as LPIPS or $L_2$ distance in \myeqref{eq:partial paired loss}. In contrast, methods like RF and CFM struggle to effectively utilize the guidance provided by paired data.

As illustrated in \Figref{fig:toy comparison}, we present a comparison using a toy example. In this setting, we aim to approximate the \textbf{nonlinear} transformation between two set of 8 Gaussians with different shape. Additionally, we assign part of paired relationships between the two sets. The star points in $\processx_0$ and $\processx_1$ represent the means of each Gaussian, and the green lines in Input indicate the correspondences between them. Except for the paired star points, all the remaining points are unpaired. As shown in the right three figures, while all the methods can generate a transformation between the two datasets, only our Bi-DPM is able to preserve the relationships between the paired data and accurately learn the transformation across the entire distribution under the guidance of the paired points. By comparision, the RF and CFM exhibit poor performance and tend to converge to a "simplified" solution.

This toy experiment illustrates that RF and CFM may perform poorly when the true transformation is nonlinear, as their predefined velocity fields are constrained to be linear. In contrast, our Bi-DPM does not need rely on a predefined velocity field, but instead leverages the relationships between the paired points directly, which provides more flexibility in approximating nonlinear transformation.

\begin{figure}
    \centering
    \begin{tabular}{c@{\hspace{2pt}}c@{\hspace{2pt}}c@{\hspace{2pt}}c@{\hspace{2pt}}c@{\hspace{2pt}}c@{\hspace{2pt}}c@{\hspace{2pt}}c@{\hspace{2pt}}c@{\hspace{2pt}}}
    \textbf{Input} & \textbf{CFM} \\
      \includegraphics[width=.23\textwidth]{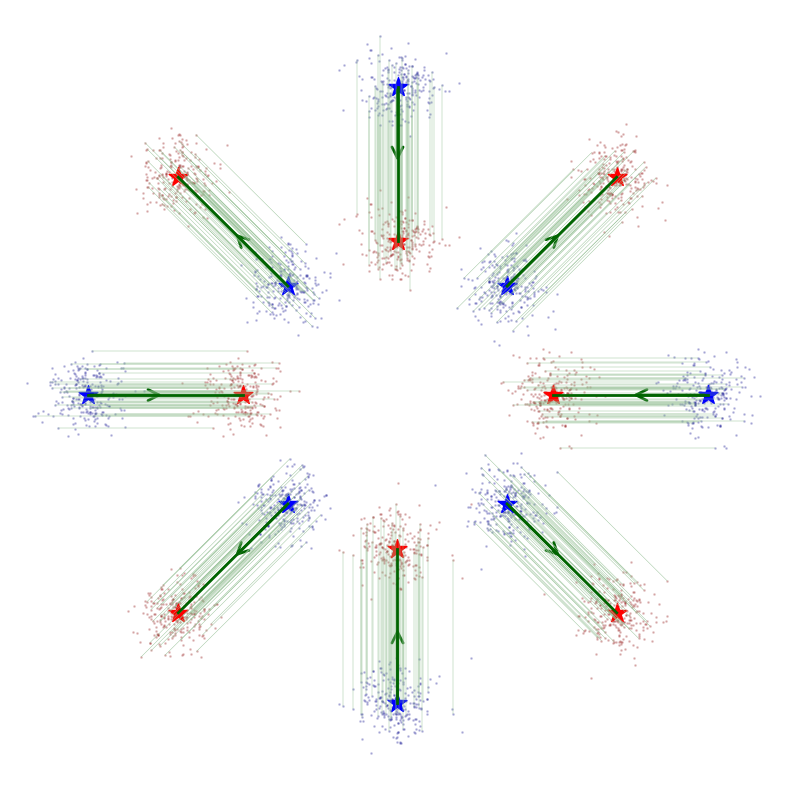}& 
      \includegraphics[width=.23\textwidth]{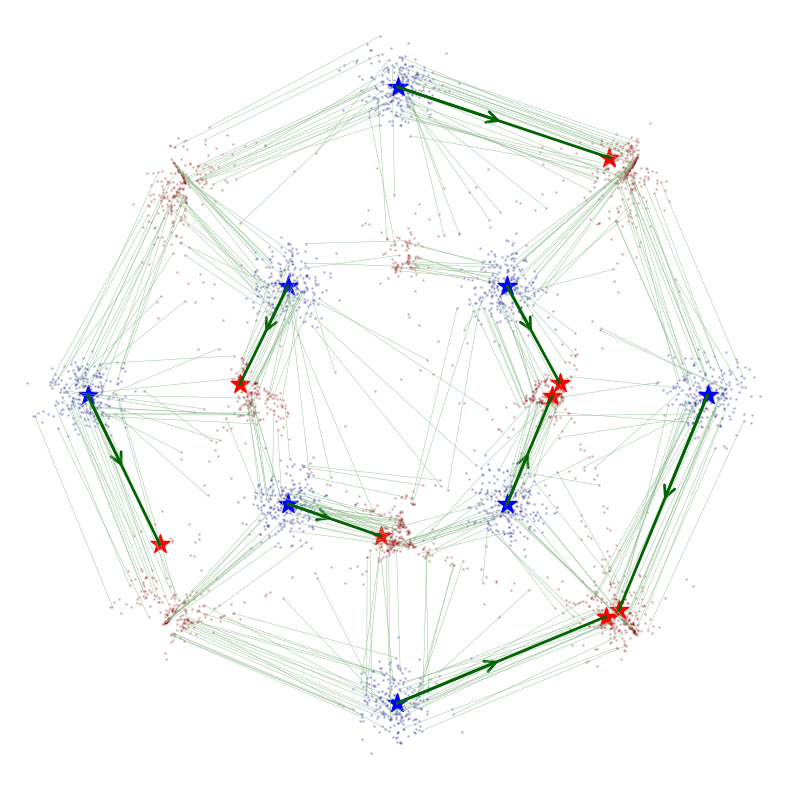}\\
        $\mathbf{\processx_0}$ & \textbf{RF}\\
      \includegraphics[width=.23\textwidth]{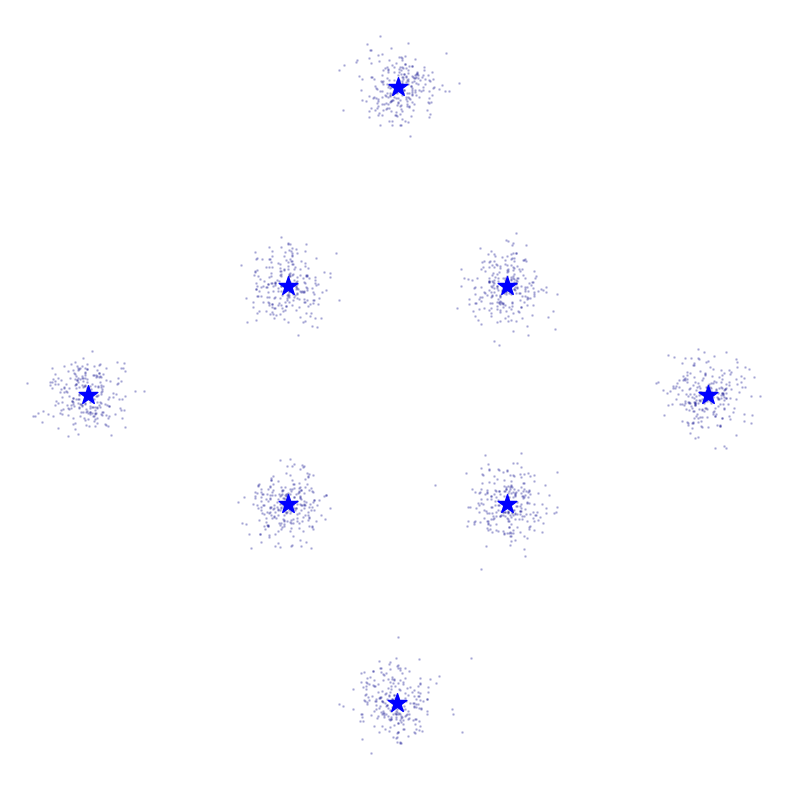}& 
      \includegraphics[width=.23\textwidth]{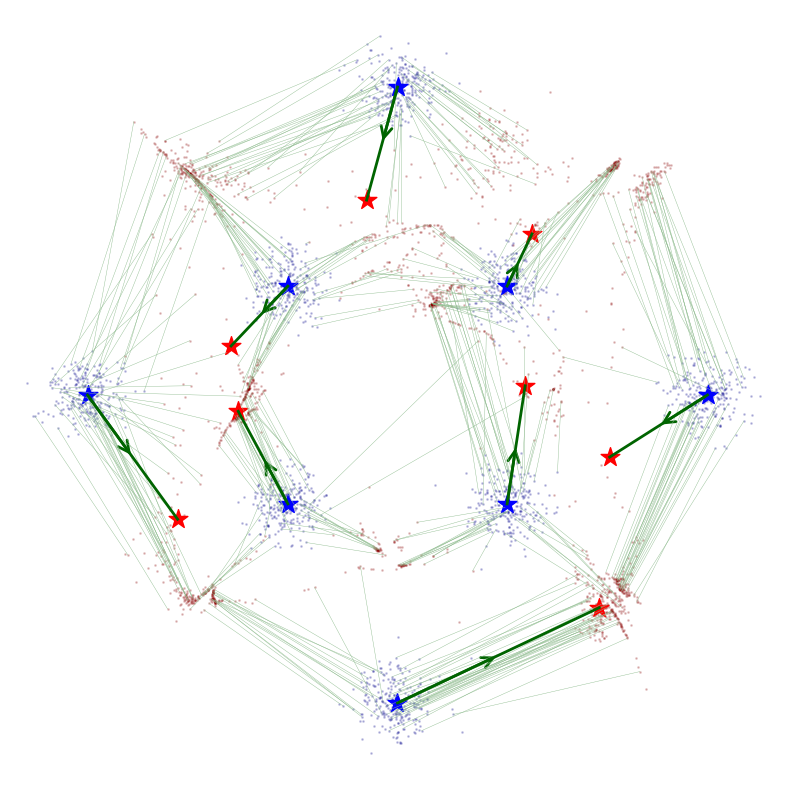}\\
      $\mathbf{\processx_1}$ & \textbf{Bi-DPM}  \\
      \includegraphics[width=.23\textwidth]{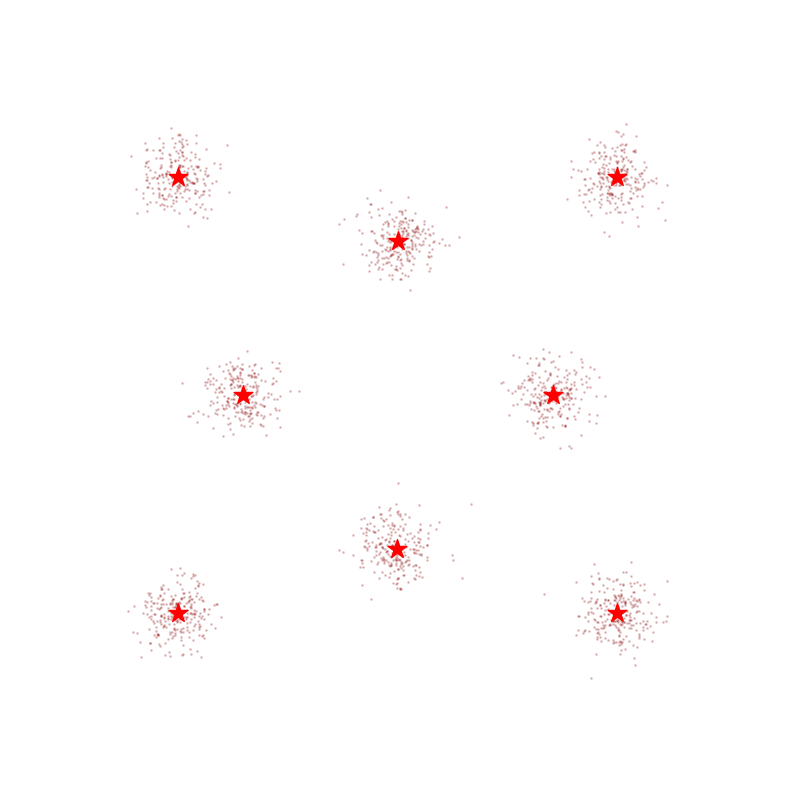} &
      \includegraphics[width=.23\textwidth]{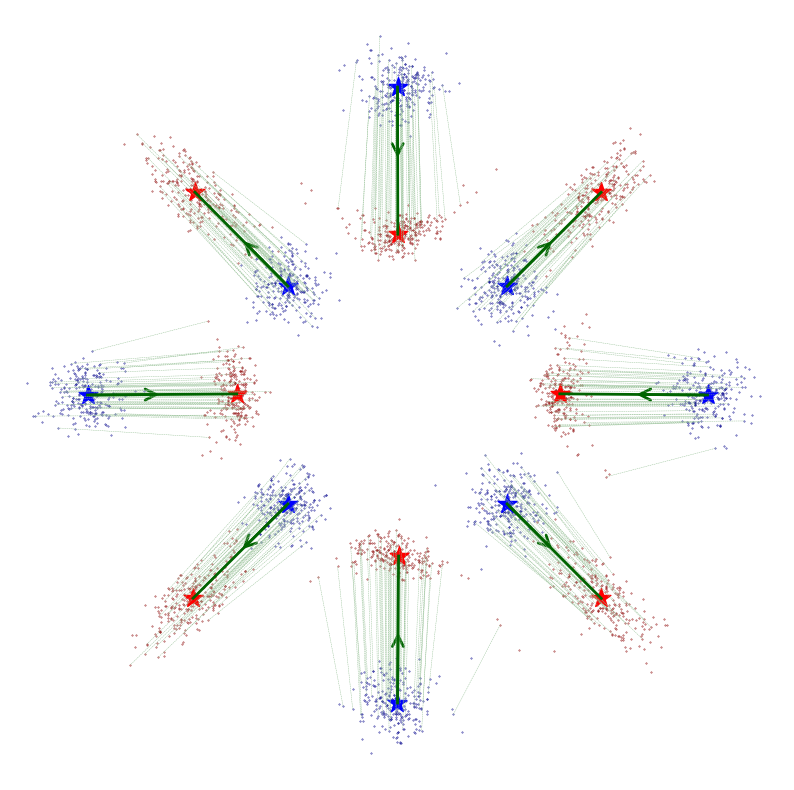}\\
    \end{tabular}
    \caption{The performance of RF, CFM and Bi-DPM on the partially paired 8-Gaussian to 8-Gaussian toy example with the number of step is set to 10 for all methods.}
    \label{fig:toy comparison}
\end{figure}

% \begin{figure*}[!h]
% \centering
% \scalebox{0.95}{
%     \begin{tabular}{c@{\hspace{2pt}}c@{\hspace{2pt}}c@{\hspace{2pt}}|c@{\hspace{2pt}}c@{\hspace{2pt}}c@{\hspace{2pt}}c@{\hspace{2pt}}c@{\hspace{2pt}}c@{\hspace{2pt}}}
%     \textbf{Input} & $\mathbf{\processx_0}$ & $\mathbf{\processx_1}$ & \textbf{RF} & \textbf{CFM} & \textbf{Bi-DPM} \\
%       \includegraphics[width=.16\textwidth]{figures/toy1/gt.png}& 
%       \includegraphics[width=.16\textwidth]{figures/toy1/gt_source.png}& 
%       \includegraphics[width=.16\textwidth]{figures/toy1/gt_target.png}& 
%       \includegraphics[width=.16\textwidth]{figures/toy1/rf_100.png}&
%       \includegraphics[width=.16\textwidth]{figures/toy1/cfm_100.png}&
%        \includegraphics[width=.16\textwidth]{figures/toy1/dpm_10.png}\\
%     \end{tabular}}
%     \caption{The performance of RF, CFM and Bi-DPM on the partially paired 8-Gaussian to 8-Gaussian toy example with the number of step is set to 10 for all methods.}
%     \label{fig:toy comparison}
% \end{figure*}

\section{Experiments}\label{sec:experiments}

We start from visualized 2D toy examples in \Figref{fig:toy comparison} and \Figref{fig:toy} to demonstrate the effectiveness of the proposed model, with detailed illustrations provided in Section \ref{subsec:comparisons}. Then we mainly focus on the synthesis task between different medical image modalities, including MRI T1-T2 and CT-MRI. And for image synthesis tasks, we evaluate our model on both totally paired and partially paired settings, providing some quantitative comparisons with several SOTA flow-based methods, along with image quality assessments. Additionally, we extend our model to 3D medical images synthesis, generating high-quality 3D images with visually superior results.

\subsection{Low dimensional examples}

In addition to the example in \Figref{fig:toy comparison}, we present two more cases involving two sets of 8 Gaussians with different paired data relationships. As shown in \Figref{fig:toy}, in both cases Bi-DPM can successfully approximates the relationships under the varying guidance from the paired data.

\begin{figure}[htbp]
    \centering
    \begin{tabular}{c@{\hspace{2pt}}c@{\hspace{2pt}}c@{\hspace{2pt}}c@{\hspace{2pt}}c@{\hspace{2pt}}}
    & \textbf{GT} & \textbf{Bi-DPM} \\
    \put(-10,45){\rotatebox{90}{\textbf{Case 1}}} &
    \includegraphics[height=.23\textwidth]{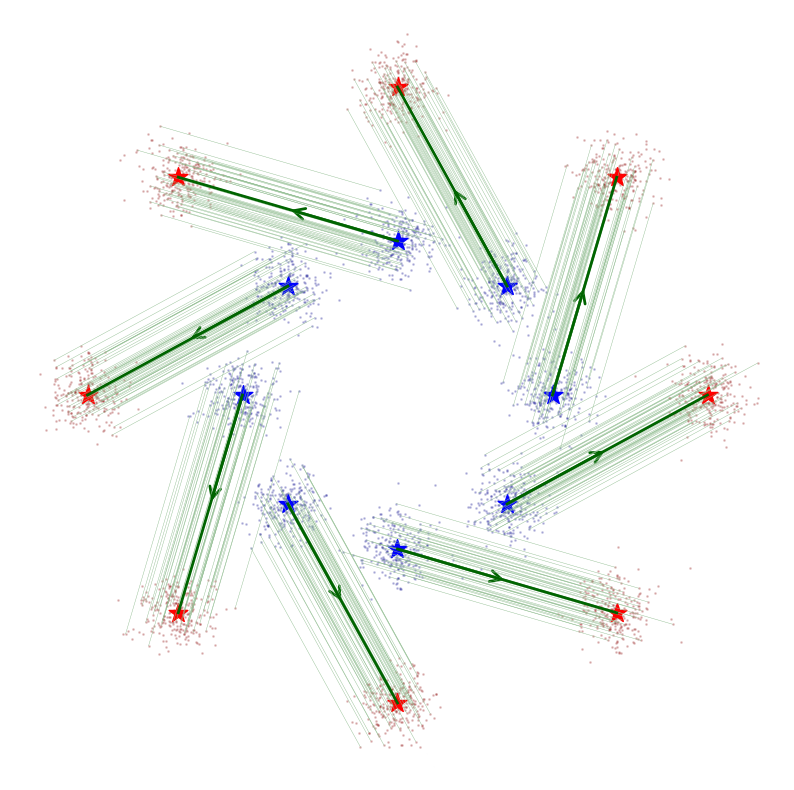}&
    \includegraphics[height=.23\textwidth]{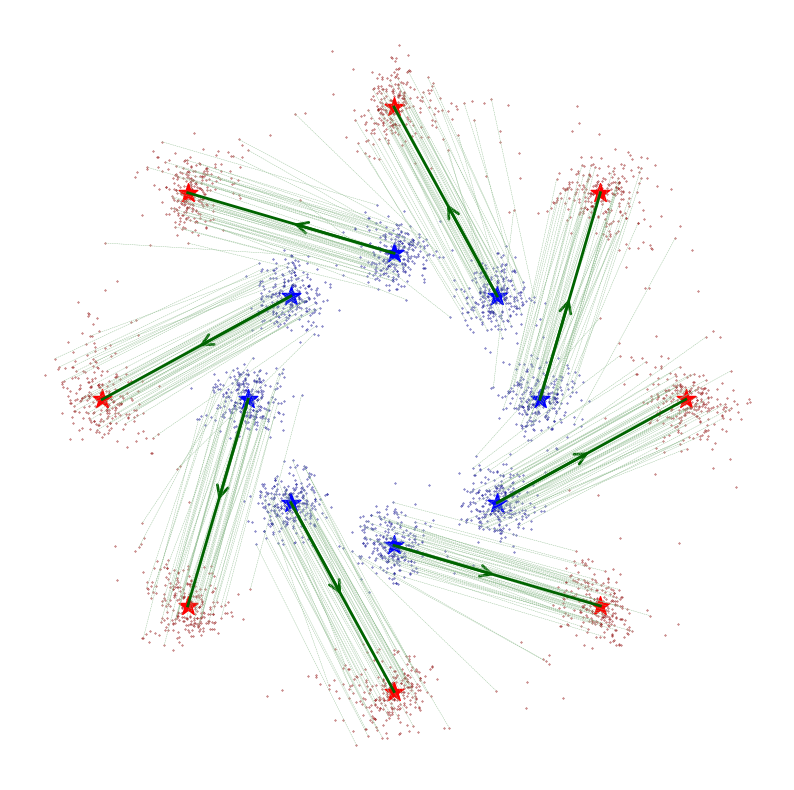} \\
      \put(-10,45){\rotatebox{90}{\textbf{Case 2}}} &
    \includegraphics[height=.23\textwidth]{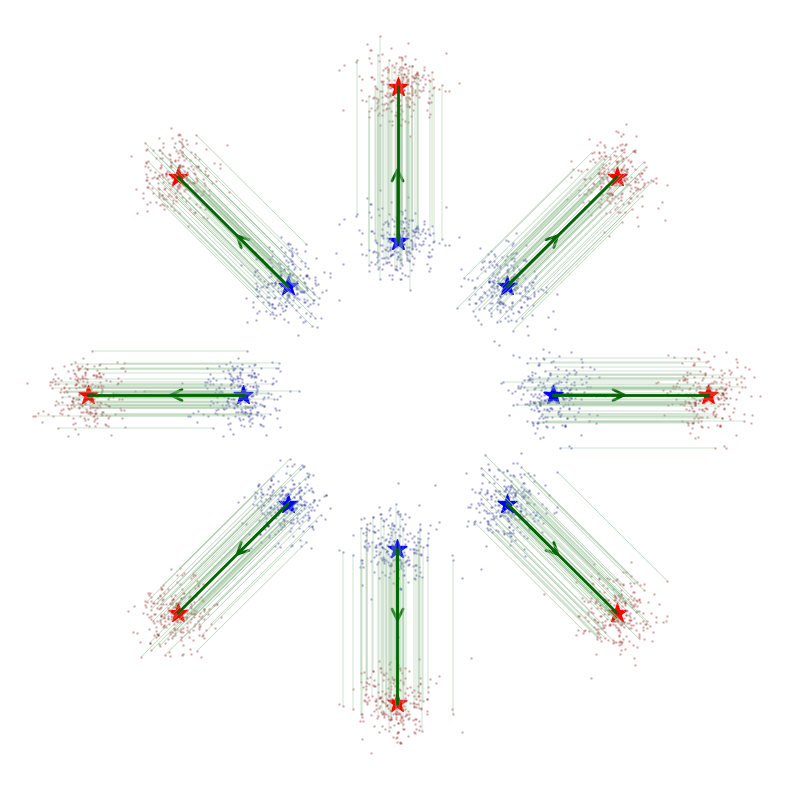}&
    \includegraphics[height=.23\textwidth]{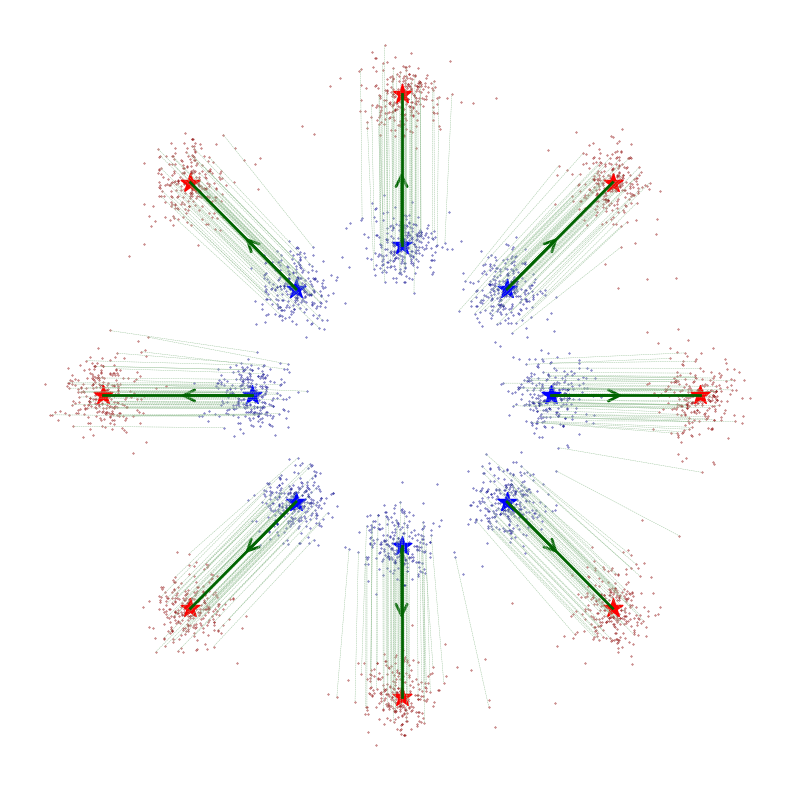} \\
    \end{tabular}
    \caption{Toy Examples with different paired data relationships. In each case, the left figure represents the true relationship, and the right one illustrates the transformation learned by our Bi-DPM.}
    \label{fig:toy}
\end{figure}

\subsection{Bi-Modality Medical Image Synthesis}

%\subsubsection{Dataset and Data Pre-processing}
For medical image synthesis task, we perform a synthesis task between the medical image modalities, specially MRI T1/T2 and CT/MRI. The MRI T1/T2 dataset is sourced from BraTS 3D MRI images \cite{baid2021rsna,menze2014multimodal} and the CT/MRI datasets are obtained from SynthRAD2023 images \cite{adrian_thummerer_2023_7746020}. Since the original datasets contain three-dimensional images, we first extract 2D slices from each image to build our training and testing datsets. The MRI T1/T2 dataset comprises \textbf{1000} images pairs for training and \textbf{251} for testing. And for the CT/MRI task, we construct two datasets for different anatomical regions: the brain and the pelvis. Each dataset is split into \textbf{170} pairs for training and \textbf{10} for testing, among which we select $100$ central slices for the brain and $50$ central slices for the pelvis.

In training, all images are first resized to the resolution of $(192, 192)$ and then normalized to the range of $[-1, 1]$ \cite{ho2020denoising,song2019generative}. The step size for the $n-$step Bi-DPM is $1/n$, with the weights assigned as $w=1$ for $t = 0, 1$ and $w=0.5$ for all the intermediate states respectively. For all the experiments, we use UNet~\cite{ronneberger2015u} structure parameterize the velocity field, as adopted in other flow-based methods  \cite{lipman2022flow,liu2022flow,tong2023improving}. The optimizer for Bi-DPM is Adam \cite{kingma2014adam}, with a constant learning rate of $10^{-4}$ in the training process. Besides, Exponential Moving Average~\cite{klinker2011exponential}(EMA) is used to update the flow-based models, and the two modality images are trained in pairs. Then we compare our method against other SOTA transfer techniques such as CycleGAN~\cite{zhu2017unpaired} and RegGAN~\cite{kong2021breaking}, Conditional Flow Matching(CFM)~\cite{tong2023improving}, Rectified Flow(RF)~\cite{liu2022flow}, Diffusion Models(SynDiff)~\cite{ozbey2023unsupervised}. All results are evaluated with regard to 
%Frechet Inception Distance~\cite{heusel2017gans}(FID, lower is better),
Structure Similarity Index Measure~\cite{wang2004image}(SSIM, higher is better), and Peak Signal-to-Noise Ratio~\cite{hore2010image}(PSNR, higher is better). Because of space limitations, all the results related to CT/MRI Pelvis are provides in Supplementary Materials.

\subsubsection {Results with totally paired data}

%\subsubsection{Quantitative Results}
The final comparison results on %FID, 
SSIM and PSNR are summarized in Table \ref{tab:all_results}. Due to space constraints, we display the synthesis results of BraTS MRI T1/T2 in Fig. \ref{fig:brain_mri}. Please refer to the Supplementary Materials for the synthesis results of Brain CT/MRI and Pelvis CT/MRI.
%with additional comparisons provided in Supplementary Materials.
In Table \ref{tab:all_results}, the Bi-DPM (1-step) and Bi-DPM (2-step) refer to time points set at $\{0, 1\}$ and $\{0, 0.5, 1.0\}$ respectively.  For CFM methods, we evaluate various formulations proposed in \cite{lipman2022flow}, including the basic CFM (I-CFM), optimal transport CFM (OT-CFM) and variance-preserving CFM (VP-CFM). Additionally, for all the flow-based methods, we experimented with several different steps and selected the best-performing results for the synthesis process. As shown in Table \ref{tab:all_results}, our method outperforms the other models across all three metrics (SSIM, %FID 
and PSNR) on all the three tasks. Furthermore, for MRI T1/T2 task the 1-step Bi-DPM achieves the best results, while for both of CT/MRI experiments, the 2-step Bi-DPM yields optimal outcomes, which indicates that for images with complex structures, the inclusion of intermediate time points is both necessary and effective. Besides, as illustrated in \Figref{fig:brain_mri}, the images generated by Bi-DPM preserve more details from the original input and are closer to the ground truth.

\begin{figure}[htbp]
\centering
    \begin{tabular}{c@{\hspace{0pt}}c@{\hspace{4pt}}c@{\hspace{4pt}}c@{\hspace{2pt}}c@{\hspace{2pt}}c@{\hspace{2pt}}c@{\hspace{2pt}}c@{\hspace{2pt}}c@{\hspace{2pt}}}
\multicolumn{2}{c}{\fontsize{15}{10}\selectfont T2$\rightarrow$T1}&
\multicolumn{2}{c}{\fontsize{15}{10}\selectfont T1$\rightarrow$T2}\\
    \includegraphics[width=.12\textwidth]{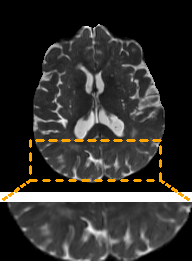}&
    \includegraphics[width=.12\textwidth]{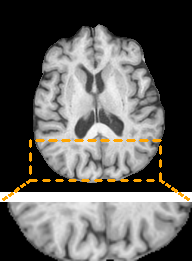}&
    \includegraphics[width=.12\textwidth]{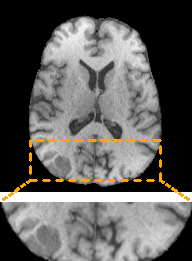}&
    \includegraphics[width=.12\textwidth]{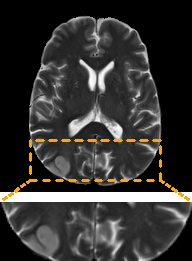}\\
    {\textbf{Input}} & {\textbf{GT}} &{\textbf{Input}} & {\textbf{GT}}\\   
    \includegraphics[width=.12\textwidth]{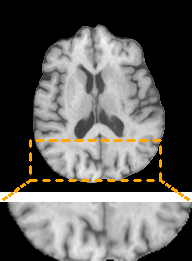}&
    \includegraphics[width=.12\textwidth]{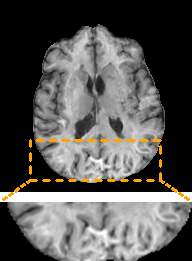}&
     \includegraphics[width=.12\textwidth]{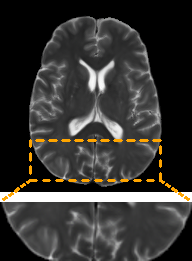}&
    \includegraphics[width=.12\textwidth]{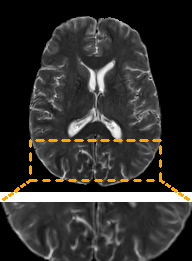}\\ 
    {\textbf{RF}} & {\textbf{I-CFM}} &{\textbf{RF}} & {\textbf{I-CFM}} \\  
     \includegraphics[width=.12\textwidth]{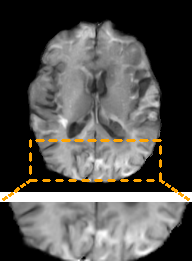}&
     \includegraphics[width=.12\textwidth]{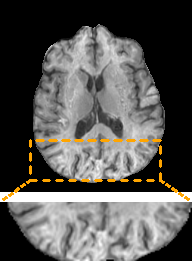}&
      \includegraphics[width=.12\textwidth]{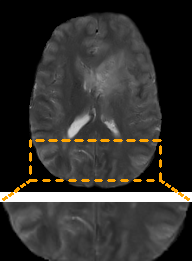}&
      \includegraphics[width=.12\textwidth]{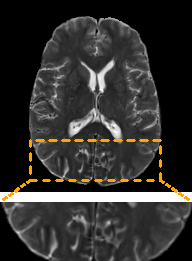}\\
       {\textbf{OT-CFM}}& {\textbf{VP-CFM}} & {\textbf{OT-CFM}}& {\textbf{VP-CFM}} \\
     \includegraphics[width=.12\textwidth]{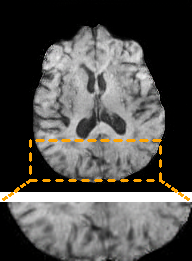}&
      \includegraphics[width=.12\textwidth]{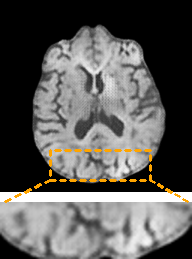}&
      \includegraphics[width=.12\textwidth]{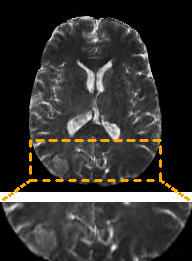}&
      \includegraphics[width=.12\textwidth]{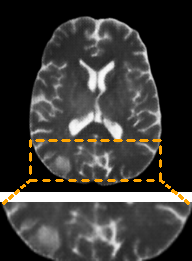}\\
      {\textbf{CycleGAN}}&{\textbf{RegGAN}}&{\textbf{CycleGAN}}&{\textbf{RegGAN}}\\
    \includegraphics[width=.12\textwidth]{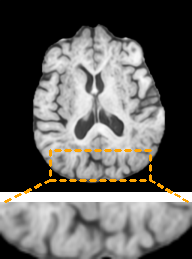}&
      \includegraphics[width=.12\textwidth]{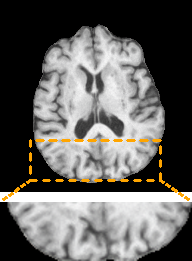}&
      \includegraphics[width=.12\textwidth]{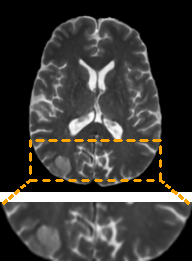}&
      \includegraphics[width=.12\textwidth]{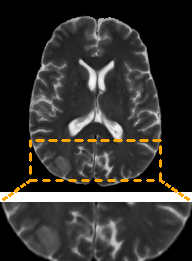}\\
         {\textbf{SynDiff}}& {\textbf{Bi-DPM}}&{\textbf{SynDiff}}& {\textbf{Bi-DPM}}
     \end{tabular}
    \caption{The synthetic images of \textbf{MRI T1/T2} dataset for different methods.}
    \label{fig:brain_mri}
\end{figure}

\begin{table*}[htbp]
    \centering
    \caption{Quantitative comparison on %FID,
    SSIM and PSNR with 100\% paired data. The \textbf{bold} data represent the best results and the \underline{underlined} ones indicates the second best. }
    \label{tab:all_results}
    \resizebox{1\textwidth}{!}{
    \centering
    \begin{tabular}{cccccccccccccc}
    \specialrule{0.2em}{0pt}{1pt}
         &  \multicolumn{4}{c}{\makecell[c]{MRI T1/T2}} &\multicolumn{4}{c}{\makecell[c]{CT/MRI Brain}} &\multicolumn{4}{c}{\makecell[c]{CT/MRI Pelvis}} \\\hline
         & \multicolumn{2}{c}{T2$\rightarrow$T1}&\multicolumn{2}{c}{T1$\rightarrow$T2}& \multicolumn{2}{c}{MRI$\rightarrow$CT}& \multicolumn{2}{c}{CT$\rightarrow$MRI}& \multicolumn{2}{c}{MRI$\rightarrow$CT}& \multicolumn{2}{c}{CT$\rightarrow$MRI}\\\hline
         & SSIM$\uparrow$&PSNR$\uparrow$&SSIM$\uparrow$&PSNR$\uparrow$&SSIM$\uparrow$&PSNR$\uparrow$&SSIM$\uparrow$&PSNR$\uparrow$&SSIM$\uparrow$&PSNR$\uparrow$&SSIM$\uparrow$&PSNR$\uparrow$\\\hline
         
        RF  &\makecell[c]{0.841\\$\pm$ 0.038}& \makecell[c]{22.175\\$\pm$ 2.385}  &\makecell[c]{0.833\\$\pm$ 0.040}    & \makecell[c]{23.307\\$\pm$ 1.839}  & \makecell[c]{0.828\\$\pm$ 0.046}   &\makecell[c]{23.817 \\$\pm$ 1.856}    &\makecell[c]{0.678\\$\pm$ 0.044}    &\makecell[c]{21.121\\$\pm$ 1.174}
        &\textbf{\makecell[c]{0.815\\$\pm$ 0.046}}&\makecell[c]{23.591\\$\pm$ 2.598} 
        & \makecell[c]{0.535\\$\pm$ 0.052}&\makecell[c]{17.892\\$\pm$ 1.568} \\
        
        \hline
     
        I-CFM  &\makecell[c]{0.837\\$\pm$ 0.040}& \makecell[c]{21.843\\$\pm$ 2.392}  & \makecell[c]{0.822\\$\pm$ 0.040}   & \makecell[c]{22.743\\$\pm$ 1.829}     &\makecell[c]{0.828\\$\pm$ 0.046}    &\textbf{\makecell[c]{24.122\\$\pm$ 1.951}}   & \makecell[c]{0.685\\$\pm$ 0.046}   &\makecell[c]{21.131\\$\pm$ 1.159}
        &\makecell[c]{0.810\\$\pm$ 0.042}& \makecell[c]{23.779\\$\pm$ 2.279}
        &\makecell[c]{0.532\\$\pm$ 0.054}&\makecell[c]{17.641\\$\pm$ 1.497} \\
        
        \hline
        
        OT-CFM &\makecell[c]{0.729\\$\pm$ 0.070}& \makecell[c]{19.131\\$\pm$ 2.294}  & \makecell[c]{0.666\\$\pm$ 0.063}  & \makecell[c]{19.182\\$\pm$ 1.857}     & \makecell[c]{0.672\\$\pm$ 0.081}   & \makecell[c]{19.252\\$\pm$ 1.956}   & \makecell[c]{0.467\\$\pm$ 0.071}    &\makecell[c]{18.451\\$\pm$ 1.521} 
        &\makecell[c]{0.788\\$\pm$ 0.048}&\makecell[c]{22.104\\$\pm$ 2.538}
        &\makecell[c]{0.471\\$\pm$ 0.081}&\makecell[c]{16.090\\$\pm$ 2.460}\\
        
        \hline
        
        VF-CFM &\makecell[c]{0.710\\$\pm$ 0.042}& \makecell[c]{18.648\\$\pm$ 2.124}  & \makecell[c]{0.660\\$\pm$ 0.044}   &  \makecell[c]{18.843\\$\pm$ 1.686}     & \makecell[c]{0.694\\$\pm$ 0.058}   & \makecell[c]{18.832\\$\pm$ 0.985}   &\makecell[c]{0.597\\$\pm$ 0.047}    &\makecell[c]{19.650\\$\pm$ 1.175} 
        &\makecell[c]{0.710\\$\pm$ 0.058}&\makecell[c]{22.010\\$\pm$ 1.932} 
        &\makecell[c]{0.415\\$\pm$ 0.038}&\makecell[c]{16.364\\$\pm$ 1.512}\\
        
        \hline
        
        CycleGAN&\makecell[c]{0.652\\$\pm$ 0.025}& \makecell[c]{17.588\\$\pm$ 1.716}  &\makecell[c]{0.633\\$\pm$ 0.033}    & \makecell[c]{19.121\\$\pm$ 1.257}      &  \makecell[c]{0.650\\$\pm$ 0.057}  &\makecell[c]{23.656\\$\pm$ 2.124}    & \makecell[c]{0.445\\$\pm$ 0.037}   &\makecell[c]{16.744\\$\pm$ 0.927}
        &\makecell[c]{0.642\\$\pm$ 0.054}&\makecell[c]{19.672\\$\pm$ 2.128}
         & \makecell[c]{0.231\\$\pm$ 0.042}&\makecell[c]{14.428\\$\pm$ 1.338}\\
        
        \hline
        
        Reg-GAN & \makecell[c]{0.809\\$\pm$ 0.038} & \makecell[c]{20.676\\$\pm$ 1.926}   & \makecell[c]{0.805\\$\pm$ 0.039}   & \makecell[c]{21.589\\$\pm$ 1.451}    & \makecell[c]{0.817\\$\pm$ 0.045}    & \makecell[c]{23.148\\$\pm$ 1.790}  & \makecell[c]{0.683\\$\pm$ 0.050}    & \makecell[c]{20.436\\$\pm$ 1.339} & 
        \makecell[c]{0.772\\$\pm$ 0.048} &
        \makecell[c]{22.882\\$\pm$ 2.356} &
        \makecell[c]{0.535\\$\pm$ 0.050} &
        \makecell[c]{18.269\\$\pm$ 1.468} \\
        
        \hline
        
        SynDiff & \makecell[c]{0.832\\$\pm$ 0.041}  & \makecell[c]{20.377\\$\pm$ 2.491}   & \makecell[c]{0.823\\$\pm$ 0.043}  & \makecell[c]{21.965\\$\pm$ 1.945}   &\makecell[c]{0.796\\$\pm$ 0.056}   & \makecell[c]{22.344\\$\pm$ 1.966}   &\makecell[c]{0.503\\$\pm$ 0.052}    & \makecell[c]{17.172\\$\pm$ 0.713}
        & \makecell[c]{0.803\\$\pm$ 0.042} & \makecell[c]{22.539\\$\pm$ 2.091} & \makecell[c]{0.453\\$\pm$ 0.049} & \makecell[c]{14.123\\$\pm$ 1.749} \\
        
        \hline
        
        \makecell[c]{Bi-DPM\\(1-step)}&\textbf{\makecell[c]{0.869\\$\pm$ 0.038}}     & \textbf{\makecell[c]{23.117\\$\pm$ 2.471}}   & \textbf{\makecell[c]{0.866\\$\pm$ 0.041}}   & \textbf{\makecell[c]{24.845\\$\pm$ 1.994}}   & \makecell[c]{\underline{0.831}\\$\pm$ \underline{0.046}}   & \makecell[c]{21.366\\$\pm$ 1.351} &   \makecell[c]{\underline{0.723}\\$\pm$ \underline{0.047}}  & \makecell[c]{\underline{21.140} \\$\pm$ \underline{1.364}}
        &\makecell[c]{0.806\\$\pm$ 0.052} &\makecell[c]{\underline{23.796}\\$\pm$ \underline{2.313}}
        &\makecell[c]{\underline{0.571} \\$\pm$ \underline{0.051}}&\makecell[c]{\underline{18.675} \\$\pm$ \underline{1.687}}\\
        
        \hline

        \makecell[c]{Bi-DPM\\(2-step)}& \makecell[c]{\underline{0.862}\\$\pm$ \underline{0.038}}   &\makecell[c]{\underline{22.886}\\$\pm$ \underline{2.449}}    &\makecell[c]{\underline{0.857}\\$\pm$ \underline{0.039}}  & \makecell[c]{\underline{24.302}\\$\pm$ \underline{1.944}}   & \textbf{\makecell[c]{0.832\\$\pm$ 0.046}}    &  \makecell[c]{\underline{23.853}\\$\pm$ \underline{2.080}}  &\textbf{\makecell[c]{0.726\\$\pm$ 0.044}}    &\textbf{\makecell[c]{21.158\\$\pm$ 1.340}}
        &\makecell[c]{\underline{0.812}\\$\pm$ \underline{0.049}}& \textbf{\makecell[c]{24.033\\$\pm$ 2.596}}
        &\textbf{\makecell[c]{0.577\\$\pm$ 0.052}} &\textbf{\makecell[c]{18.930\\$\pm$ 1.668}}\\
 \specialrule{0.2em}{0pt}{1pt}
    \end{tabular}}
\end{table*}

\subsubsection{Results with partially paired data}

For the partially paired case,  we use a combination of LPIPS and MMD as the loss function, as defined in (\ref{eq:partial paired loss}), where LPIPS serves as the metric for paired data and MMD for unpaired data respectively. In our experiments, the weight $\lambda_u$ of MMD is fixed at either $0.2$ or $0.3$ during training. To further improve the training stability, each batch consists of an equal proportion of paired data and unpaired data, and the MMD is calculated between the unpaired data and all data in the same batch. 

We conducted comparisons on the MRI T1/T2 and CT/MRI Brain datasets. Based on the experiments using fully paired data, we utilize the same training dataset but varied the proportion of paired data to 1\%, 10\% and 50\%. The quantitative results of CT/MRI Brain dataset in terms of %FID, 
SSIM and PSNR with respect to different ratio are presented in \Figref{fig:paired ratio}. As shown, the quality of generated images improves as the ratio of paired data increases. Notably, our Bi-DPM achieves relatively high-quality performance with only 10\% paired data, demonstrating that with even minimal partial guidance allows Bi-DPM to produce impressive results.

Additionally, Table \ref{tab:0.1 results} presents the quantitative results of Bi-DPM alongside other ODE-based methods. Based on the behaviors in Table \ref{tab:all_results}, we only compare our method with the two best-performing methods, RF and I-CFM. The values in the brackets denote the corresponding results for the fully paired dataset, as shown in Table \ref{tab:all_results}. Evidently, the performances of both RF and I-CFM are markedly inferior compared to the results with completely pairing, highlighting their strong dependence on the proportion of paired data. In contrast, our Bi-DPM incorporating the MMD loss for unpaired data, experiences only a slight decrease in performance compared to the fully paired case, demonstrating the robustness of Bi-DPM.

\begin{table*}[htbp]
    \centering
    \caption{Quantitative comparison on %FID,
    SSIM and PSNR with 10\% paired data.}
    \label{tab:0.1 results}
    \resizebox{0.95\textwidth}{!}{
    \centering
    \begin{tabular}{ccccccccccc}
    \specialrule{0.2em}{0pt}{1pt}
         % & & & \multicolumn{4}{c}{10$\%$ paired (100$\%$ paired)} \\
         & & & RF & I-CFM & \makecell[c]{Bi-DPM\\(1-step)} & \makecell[c]{Bi-DPM\\(2-step)} \\
         \specialrule{0.1em}{0pt}{1pt}
          & \multirow{2}{*}[-1.5ex]{\makecell[c]{T2$\rightarrow$T1}} & SSIM $\uparrow$ & \makecell[c]{0.587\\$\pm$ 0.055} $\left(\makecell[c]{0.841\\ \pm 0.038}\right)$ & \makecell[c]{0.496\\$\pm$ 0.053} $\left(\makecell[c]{0.837\\ \pm 0.040}\right)$ & \makecell[c]{0.840\\$\pm$ 0.037} $\left(\makecell[c]{0.869\\ \pm 0.038}\right)$ & \textbf{\makecell[c]{0.848\\$\pm$ 0.038}} $\left( \makecell[c]{0.862\\ \pm  0.038}\right)$ \\
         \cline{3-10}
         %& & FID $\downarrow$ & 107.551 (49.004) & 77.242 (85.069) &  \textbf{44.835} (40.611) & 60.581 (57.800) \\
         \cline{3-10}
         \multirow{4}{*}{\makecell[c]{MRI\\T1/T2}}& & PSNR $\uparrow$ & \makecell[c]{16.520\\$\pm$ 1.679} $\left(\makecell[c]{22.175\\ \pm 2.385}\right)$ & \makecell[c]{14.718\\$\pm$ 1.404} $\left(\makecell[c]{21.843\\ \pm  2.392}\right)$ & \textbf{\makecell[c]{21.862\\$\pm$ 2.513}} $\left( \makecell[c]{23.117\\ \pm  2.471}\right)$ & \makecell[c]{21.809\\$\pm$ 2.224} $\left(\makecell[c]{22.886\\ \pm  2.449}\right)$ \\
         \cline{2-10}
         & \multirow{2}{*}[-1.5ex]{\makecell[c]{T1$\rightarrow$T2}} & SSIM $\uparrow$ & \makecell[c]{0.557\\$\pm$ 0.050} $\left(\makecell[c]{0.833\\ \pm  0.040}\right)$ & \makecell[c]{0.534\\$\pm$ 0.041} $\left( \makecell[c]{0.822\\ \pm  0.040} \right)$ & \makecell[c]{0.828\\$\pm$ 0.042} $\left(\makecell[c]{0.866\\ \pm  0.041}\right)$ & \textbf{\makecell[c]{0.838\\$\pm$ 0.040}} $\left( \makecell[c]{0.857\\ \pm  0.039}\right)$ \\
         \cline{3-10}
         %& & FID $\downarrow$ & 77.351 (37.825) & 55.676 (41.780) &  \textbf{34.545} (32.366) & 38.294 (38.960) \\
         \cline{3-10}
         & & PSNR $\uparrow$ & \makecell[c]{17.054\\$\pm$ 1.479} $\left(\makecell[c]{23.307\\ \pm  1.839}\right)$ & \makecell[c]{16.853\\$\pm$ 1.258} $\left(\makecell[c]{22.743\\ \pm  1.829}\right)$ & \makecell[c]{22.796\\$\pm$ 1.993} $\left(\makecell[c]{24.845\\ \pm  1.994}\right)$ & \textbf{\makecell[c]{23.118\\$\pm$ 1.962}} $\left(\makecell[c]{24.302\\ \pm  1.944}\right)$  \\
         
         \specialrule{0.1em}{0pt}{1pt}
         
         & \multirow{2}{*}[-1.5ex]{\makecell[c]{MRI$\rightarrow$CT}} & SSIM $\uparrow$ & \makecell[c]{0.735\\$\pm$ 0.086} $\left(\makecell[c]{0.828\\ \pm 0.046} \right)$ & \makecell[c]{0.594\\$\pm$ 0.099} $\left( \makecell[c]{0.828\\ \pm 0.046}\right)$ &  \makecell[c]{0.803\\$\pm$ 0.051} $\left(\makecell[c]{0.831\\ \pm 0.046}\right)$ & \textbf{\makecell[c]{0.808\\$\pm$ 0.051}} $\left(\makecell[c]{0.832\\ \pm 0.046}\right)$ \\
         \cline{3-10}
         % & & FID $\downarrow$ & 100.631 (62.425) & 104.077 (70.691) & \textbf{39.437} (33.426) & 41.035 (34.557)\\
         \cline{3-10}
         \multirow{4}{*}{\makecell[c]{CT/MRI\\Brain}}& & PSNR $\uparrow$ & \makecell[c]{20.332\\$\pm$ 2.293} $\left(\makecell[c]{23.817\\ \pm 1.856}\right)$ & \makecell[c]{16.755\\$\pm$ 2.180} $\left( \makecell[c]{24.122\\ \pm 1.951} \right)$ & \makecell[c]{22.770\\$\pm$ 1.838} $\left(\makecell[c]{23.656\\ \pm 2.124}\right)$ & \textbf{\makecell[c]{22.846\\$\pm$ 1.786}} $\left( \makecell[c]{23.853\\ \pm 2.080}\right)$ \\
         \cline{2-10}
         & \multirow{4}{*}[-1.5ex]{\makecell[c]{CT$\rightarrow$MRI}} & SSIM $\uparrow$ & \makecell[c]{0.545\\$\pm$ 0.080} $\left(\makecell[c]{0.678\\ \pm 0.044}\right)$ & \makecell[c]{0.460\\$\pm$ 0.082} $\left(\makecell[c]{0.685\\ \pm 0.046}\right)$ & \makecell[c]{0.678\\$\pm$ 0.049} $\left( \makecell[c]{0.723\\ \pm 0.047}\right)$ & \textbf{\makecell[c]{0.684\\$\pm$ 0.053}} $\left(\makecell[c]{0.726\\ \pm 0.044}\right)$ \\
         \cline{3-10}
         %& & FID $\downarrow$ & 99.623 (56.972) & 82.988 (85.528) & \textbf{28.976} (29.991) & 31.430 (31.452) \\
         \cline{3-10}
         & & PSNR $\uparrow$ & \makecell[c]{18.087\\$\pm$ 1.657} $\left(\makecell[c]{21.121\\ \pm 1.174} \right)$ & \makecell[c]{16.604\\$\pm$ 1.304} $\left(\makecell[c]{21.131\\ \pm 1.159} \right)$ & \makecell[c]{20.685\\$\pm$ 1.140} $\left(\makecell[c]{21.140\\ \pm 1.364} \right)$ &  \textbf{\makecell[c]{20.696\\$\pm$ 1.241}} $\left( \makecell[c]{21.158\\ \pm 1.340}\right)$ \\
                \specialrule{0.1em}{0pt}{1pt}
         
         & \multirow{2}{*}[-1.5ex]{\makecell[c]{MRI$\rightarrow$CT}} & SSIM $\uparrow$ & \makecell[c]{0.705\\$\pm$ 0.038} $\left(\makecell[c]{0.815\\ \pm 0.046}\right)$ & \makecell[c]{0.684\\$\pm$ 0.057} $\left( \makecell[c]{0.810\\ \pm 0.042}\right)$ &  \makecell[c]{0.783\\$\pm$ 0.053} $\left(\makecell[c]{0.806\\\pm 0.052} \right)$ & \textbf{\makecell[c]{0.785\\$\pm$ 0.057}} $\left( \makecell[c]{0.812\\\pm 0.049}\right)$ \\
         \cline{3-10}
         %  & & FID $\downarrow$ & 147.101 (71.626) & 102.730 (72.556) & \textbf{52.517} (46.600) & 54.528 (48.363)\\
         % \cline{3-10}
         \multirow{4}{*}{\makecell[c]{CT/MRI\\Pelvis}}& & PSNR $\uparrow$ & \makecell[c]{18.518\\$\pm$ 1.692} $\left(\makecell[c]{23.591\\ \pm 2.598} \right)$ & \makecell[c]{19.381\\$\pm$ 2.085} $\left(\makecell[c]{23.779\\\pm 2.279}\right)$ & \makecell[c]{22.531\\$\pm$ 2.306} $\left(\makecell[c]{23.796\\ \pm 2.313} \right)$ & \textbf{\makecell[c]{22.847\\$\pm$ 2.373}} $\left(\makecell[c]{24.033\\ \pm 2.596}\right)$ \\
         \cline{2-10}
         & \multirow{2}{*}[-1.5ex]{\makecell[c]{CT$\rightarrow$MRI}} & SSIM $\uparrow$ & \makecell[c]{0.339\\$\pm$ 0.078} $\left(\makecell[c]{0.535\\ \pm 0.052}\right)$ & \makecell[c]{0.255\\$\pm$ 0.063}  $\left(\makecell[c]{0.532\\ \pm 0.054}\right)$ & \makecell[c]{0.523\\$\pm$ 0.048} $\left( \makecell[c]{0.571\\ \pm 0.051} \right)$ & \textbf{\makecell[c]{0.532\\$\pm$ 0.056}} $\left( \makecell[c]{0.577\\ \pm 0.052}\right)$ \\
         \cline{3-10}
         % & & FID $\downarrow$ & 138.121 (73.998) & 166.533 (71.110) & \textbf{76.436} (68.381) & 77.927 (68.209) \\
         % \cline{3-10}
         & & PSNR $\uparrow$ & \makecell[c]{13.696\\$\pm$ 2.662} $\left(\makecell[c]{17.892\\ \pm 1.568} \right)$ & \makecell[c]{12.062\\$\pm$ 2.566} $\left(\makecell[c]{17.641\\ \pm  1.497} \right)$ & \makecell[c]{17.718\\$\pm$ 1.559} $\left( \makecell[c]{18.675\\ \pm  1.687}\right)$ & \textbf{\makecell[c]{17.917\\$\pm$ 1.760}} $\left( \makecell[c]{18.930\\ \pm 1.668}\right)$ \\
         
        \specialrule{0.2em}{0pt}{1pt}
    \end{tabular}}
\end{table*}
\begin{figure}[htbp]
    \centering
    \begin{tabular}{c@{\hspace{2pt}}c@{\hspace{2pt}}c@{\hspace{2pt}}c@{\hspace{2pt}}c@{\hspace{2pt}}}
    \put(-10,45){\rotatebox{90}{\textbf{MRI}$\rightarrow$\textbf{CT}}} &
    \includegraphics[height=.22\textwidth]{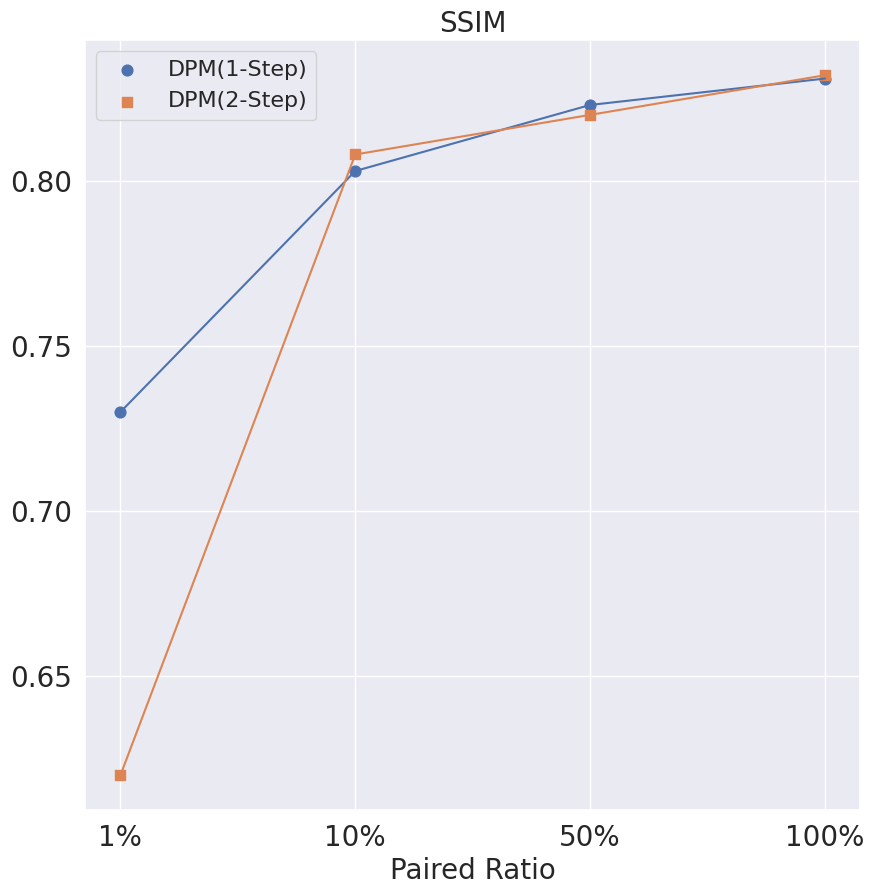}&
    \includegraphics[height=.22\textwidth]{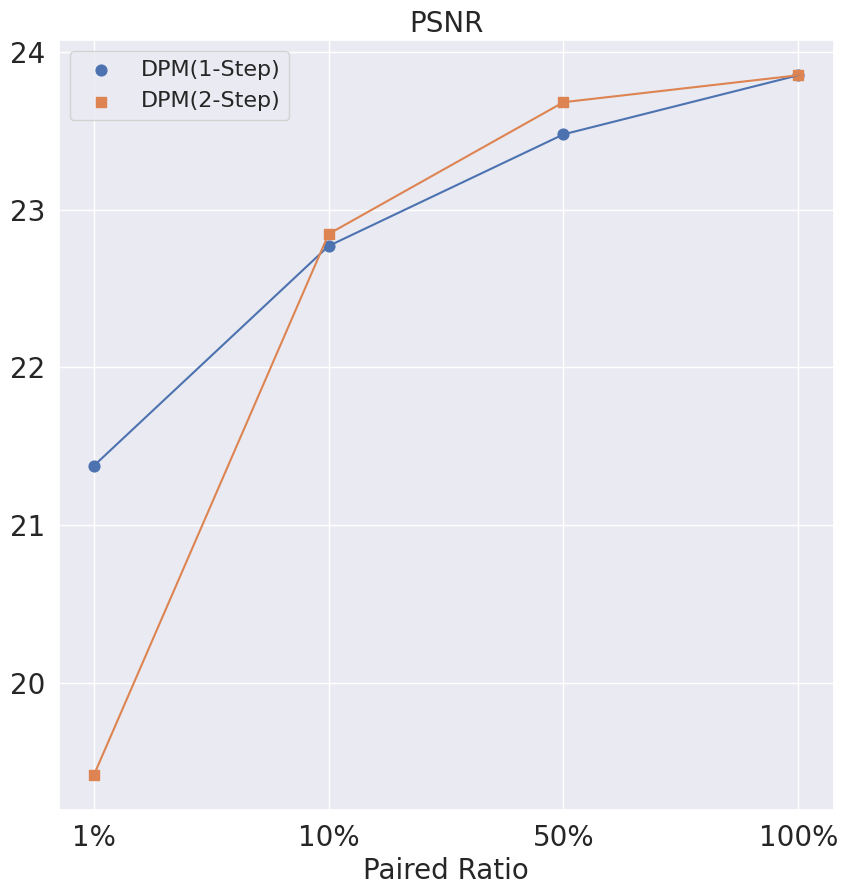} \\
      \put(-10,45){\rotatebox{90}{\textbf{CT}$\rightarrow$\textbf{MRI}}} &
    \includegraphics[height=.22\textwidth]{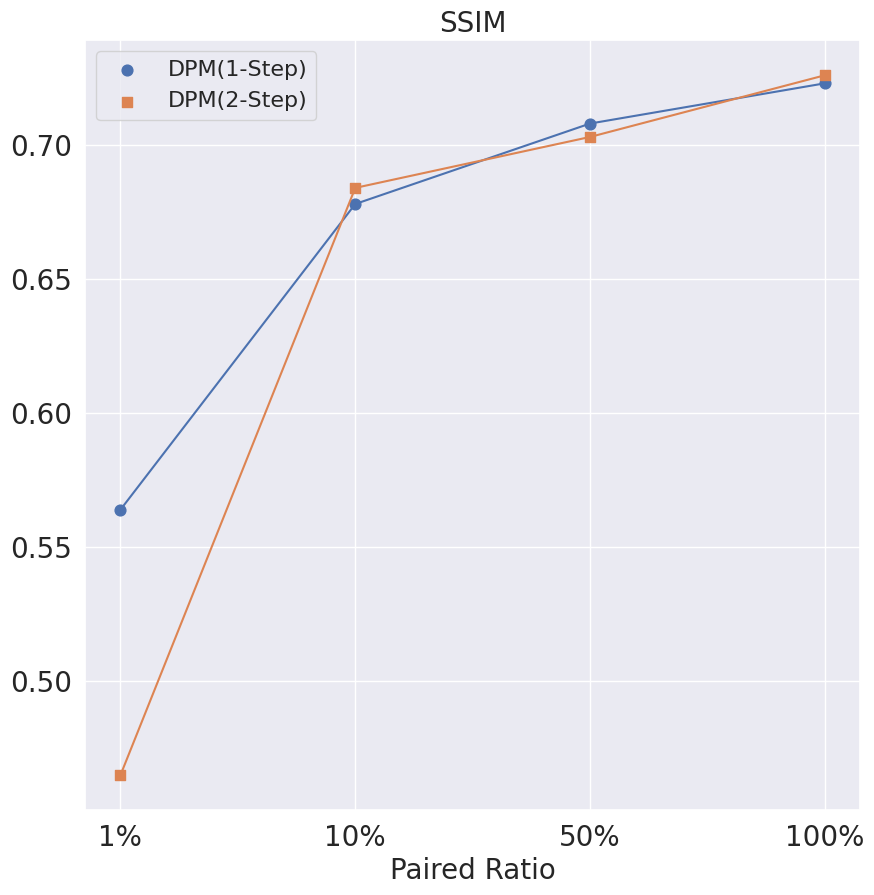}&
    \includegraphics[height=.22\textwidth]{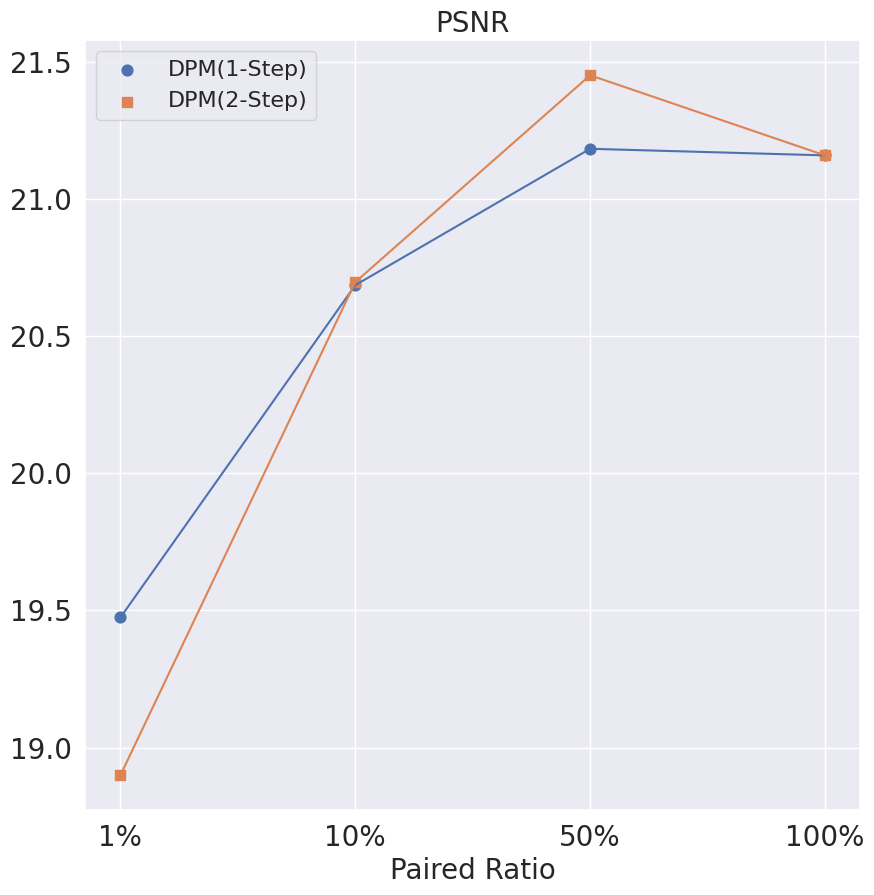} \\
      & \textbf{SSIM} & \textbf{PSNR} 
    \end{tabular}
    \caption{The quantitative results for the \textbf{CT/MRI Brain} dataset with various paired ratio. The left two columns show the results for synthetic CT images, while the right two columns correspond to synthetic MRI images. And for each modality, the evaluation indices include SSIM and PSNR.}
    \label{fig:paired ratio}
\end{figure}

% \begin{figure*}[htbp]
% \centering
%     \begin{tabular}{c@{\hspace{2pt}}c@{\hspace{2pt}}c@{\hspace{2pt}}c@{\hspace{2pt}}}
%     \multicolumn{2}{c}{\textbf{CT}} & \multicolumn{2}{c}{\textbf{MRI}} \\
%     % \includegraphics[width=.32\textwidth]{figures/index/ct_mri_brain/FID_steps_0.png}&
%       \includegraphics[width=.24\textwidth]{figures/index/ct_mri_brain/SSIM_steps_0.png}&
%       \includegraphics[width=.233\textwidth]{figures/index/ct_mri_brain/PSNR_steps_0.png}&
%       % \multicolumn{3}{c}{\textbf{MRI}} \\
%     % \includegraphics[width=.32\textwidth]{figures/index/ct_mri_brain/FID_steps_1.png}&
%       \includegraphics[width=.24\textwidth]{figures/index/ct_mri_brain/SSIM_steps_1.png}&
%       \includegraphics[width=.24\textwidth]{figures/index/ct_mri_brain/PSNR_steps_1.png}\\
%       % \multicolumn{3}{c}{\includegraphics[width=0.99\textwidth]{figures/index/time.png}} \\
%     \end{tabular}
%     \caption{The quantitative results for the \textbf{CT/MRI Brain} dataset with various paired ratio. The left two columns show the results for synthetic CT images, while the right two columns correspond to synthetic MRI images. And for each modality, the evaluation indices include SSIM and PSNR.}
%     \label{fig:paired ratio}
% \end{figure*}

\subsubsection{Synthesized images  quality assessment by doctors}

To further evaluate the quality of the synthetic images, we invite three physicians from nationally high ranked local hospital for  visual judgement, including an attending physician and two chief physicians. We set three levels of scores ranged from $0$ to $2$ for the realism of synthetic images, where score $0$ indicates unrealistic and $2$ indicates closed to real images. The test synthetic set consist of $5$ MRT-T1, $5$ MRI-T2, $5$ Brain CT and $5$ Brain MRI images. The results are presented in Table \ref{tab:turing test1}, with the scores representing the average ratings of 5 images for each modality. These results suggest that the majority of our synthetic images are judged as being close to realistic images. Specifically, only 3 images are rated as unrealistic (0 score) and 8 of them received a score of 1.

\begin{table}[htbp]
    \centering
    \caption{The evaluations of three physicians (Average score of 5 images).}
    \renewcommand{\arraystretch}{1.}{
    \begin{tabular}{cccccc}
    \toprule
    & MRI-T1 & MRI-T2 & CT & MRI  & Average\\
    \midrule
       Attending Physician  &  1.8 & 2 & 1.6 & 1.2 & 1.65 \\
       Chief Physician 1  & 1.8 & 2 & 2 & 1.8 & 1.9 \\ 
       Chief Physician 2 & 1.2 & 2 & 2 & 2 & 1.8 \\
    \midrule
       Average & 1.6 & 2 & 1.86 & 1.66 & 1.78 \\
    \bottomrule
    \end{tabular}}
    \label{tab:turing test1}
\end{table}

Additionally, a Turing test is conducted on $20$ CT/MRI Brain images, consisting of $10$ CT images and $10$ MRI images. For each modality, there are $5$ real images and $5$ synthetic ones. The results of accuracy are shown in Table \ref{tab:turing test2}. As observed, only Chief physician 2 get accuracy above 50\% while the other two physicians have an accuracy of only 40\%, which indicates that our synthetic images are quite realistic and difficult to distinguish from real ones.

\begin{table}[htbp]
    \centering
    \caption{The accuracy of Turing Test on Brain CT/MRI dataset.}
    \begin{tabular}{cccc}
    \toprule
         & CT & MRI & Overall \\
    \midrule 
         Attending Physician &  20\% & 60\% & 40\%\\
         Chief Physician 1 &  30\% & 50\% & 40\%\\
         Chief Physician 2 & 50\% & 60\% & 55\%\\
    \bottomrule
    \end{tabular}
    \label{tab:turing test2}
\end{table}

\begin{table}[htbp]
    \centering
    \caption{The quantitave comparison between Bi-DPM with the MRI-to-CT baseline.}
    \label{tab:my_label}
    \begin{tabular}{ccccc}
    \toprule
    & \multicolumn{2}{c}{CT} & \multicolumn{2}{c}{MRI} \\
    \midrule
    & SSIM & PSNR & SSIM & PSNR \\
    \midrule
       Bi-DPM  &  \textbf{0.887} & \textbf{29.413} & 0.844 & 25.926 \\
       Baseline & 0.871 & 29.307 & - & - \\ 
    \bottomrule
    \end{tabular}
    \label{tab:3d_mri_to_ct}
\end{table}

\subsubsection{3D images Synthesis}

We also apply our Bi-DPM to the task of 3D medical images synthesis. To deal with the 3D images, we slice each image along the transverse plane and convert it into a 2d task. Following the same setting as before, we resized the slices to (192,192) and scaled to the range of $[-1,1]$ for training. During the transformation process, each slice is transferred from one modality to the other, and the slices are then reassembled in sequence. The experiments are conducted on the MRI-to-CT Brain task in SynthRAD2023 challenge,  where we evaluate the performance of our Bi-DPM by comparing it against the baseline results from the competition leaderboard.

In the MRI-to-CT task, we follow the settings and make comparison to the baseline \cite{chen2023hybrid}, where the dataset is randomly split into 171 for training and 9 for testing. We calculate SSIM and PSNR for the generated 3D images, and the quantitative results are presented in Table \ref{tab:3d_mri_to_ct}. As shown, the synthetic CT images generated by our Bi-DPM achieve higher SSIM and PSNR values compared to the baseline. Moreover, with Bi-DPM we can also obtain the high-quality synthetic MRI images from the given CTs in the meanwhile, achieving SSIM of 0.844 and PSNR of 25.926. Additionally, comparisons between Bi-DPM-generated images and the ground truth for both  CT and MRI are displayed in \Figref{fig:3d_mri_to_ct}.

% \begin{table}[!h]
%     \centering
%     \caption{The quantitave comparison between Bi-DPM with the MRI-to-CT baseline.}
%     \label{tab:my_label}
%     \begin{tabular}{ccccc}
%     \hline
%     & \multicolumn{2}{c}{CT} & \multicolumn{2}{c}{MRI} \\
%     \hline
%     & SSIM & PSNR & SSIM & PSNR \\
%     \hline
%        Bi-DPM  &  \textbf{0.887} & \textbf{29.413} & 0.844 & 25.926 \\
%        Baseline & 0.871 & 29.307 & - & - \\ 
%     \hline
%     \end{tabular}
%     \label{tab:3d_mri_to_ct}
% \end{table}

% \begin{figure}[htbp]
%     \centering
%     \begin{tabular}{c@{\hspace{2pt}}c@{\hspace{2pt}}c@{\hspace{2pt}}c@{\hspace{2pt}}c@{\hspace{2pt}}}
%     \put(-10,22){\rotatebox{90}{\textbf{MRI}$\rightarrow$\textbf{CT}}} &
%     \includegraphics[width=.22\textwidth]{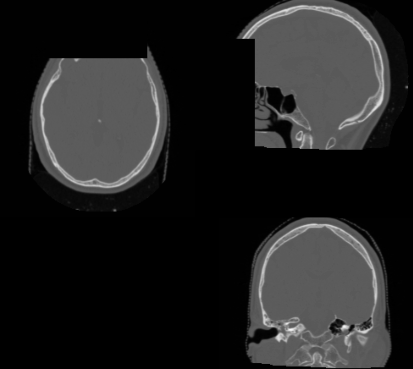}&
%     \includegraphics[width=.22\textwidth]{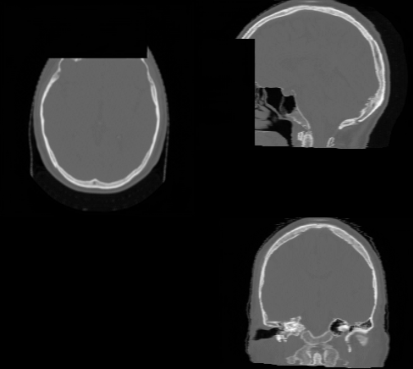} \\
%       \put(-10,22){\rotatebox{90}{\textbf{CT}$\rightarrow$\textbf{MRI}}} &
%       \includegraphics[width=.22\textwidth]{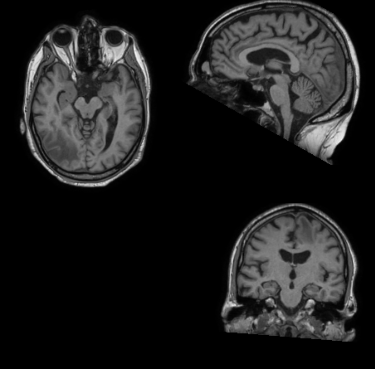}&
%       \includegraphics[width=.22\textwidth]{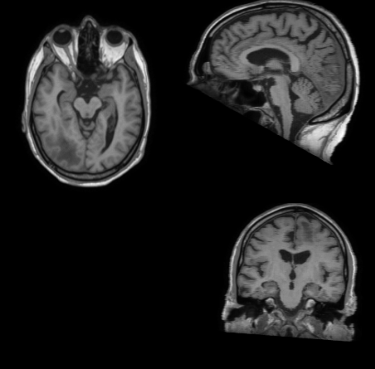}\\
%       & \textbf{GT} & \textbf{Ours} 
%     \end{tabular}
%     \caption{The synthetic 3D figures on axial, coronal and sagittal planes.}
%     \label{fig:3d_mri_to_ct}
% \end{figure}

\begin{figure}[!h]
\centering
    \begin{tabular}{c@{\hspace{2pt}}c@{\hspace{2pt}}c@{\hspace{2pt}}c@{\hspace{2pt}}c@{\hspace{2pt}}}

    & \textbf{MRI-to-CT}&\textbf{CT-to-MRI}\\
    \put(-15,20){ \rotatebox{90}{\textbf{Input}}}   & 
    \includegraphics[width=.23\textwidth,height=.23\textwidth]{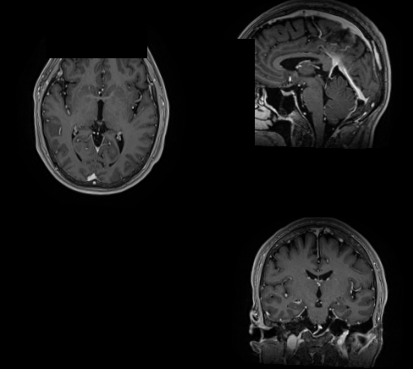}&
    \includegraphics[width=.23\textwidth,height=.23\textwidth]{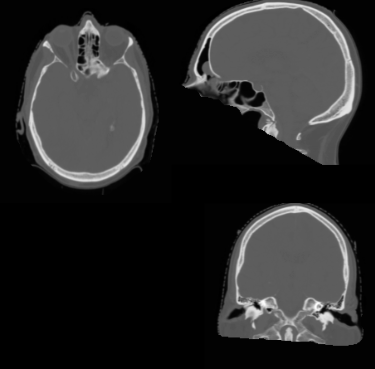}\\
    \put(-15,20){ \rotatebox{90}{\textbf{GT}}} &
    \includegraphics[width=.23\textwidth,height=.23\textwidth]{figures/3d_examples/ct_mri/ct/gt.png}&
    \includegraphics[width=.23\textwidth,height=.23\textwidth]{figures/3d_examples/ct_mri/mri/gt.png}\\
    \put(-15,20){ \rotatebox{90}{\textbf{Ours}}} &
    \includegraphics[width=.23\textwidth,height=.23\textwidth]{figures/3d_examples/ct_mri/ct/ours.png}&
    \includegraphics[width=.23\textwidth,height=.23\textwidth]{figures/3d_examples/ct_mri/mri/ours.png}\\
    \end{tabular}
    \caption{The comparisons between our generative figures and the ground truth on axial, coronal and sagittal planes.}
    \label{fig:3d_mri_to_ct}
\end{figure}

\subsection{Iteration Steps of ODE}

    In this part, we use totally paired CT/MRI Brain dataset to evaluate the influence of the number of ODE iteration steps on the synthesis process. The corresponding results for the other two datasets are displayed in Supplementary Materials. For all the other flow-based methods, we compare the synthesis results with 4 different ODE steps, including 1, 2, 5 and 10 steps. For simplicity, we treat CycleGAN as a one-step transformation method, and for our Bi-DPM, we calculate both one-step and two-step formulations. As shown in \Figref{fig:ode step}, for VP-CFM, the evaluation indices improve as the number of ODE steps increases, which indicates that a higher number of steps is required to generate high-quality images in most cases. However, this comes at the cost of significantly increased computational demands. In contrast, for I-CFM, OT-CFM and RF, the results of 1-step perform best and the indices of multi-step remain relatively consistent or even degrade with more iteration steps, suggesting that the number of ODE steps needs careful tuning for each task to achieve optimal results, which complicates the synthetic process. However, despite of a slightly lower PSNR value compared to the best result of CFM on CT images with 1-step and 2-step generations, our Bi-DPM exhibits superior performances across both SSIM and PSNR indices, which makes Bi-DPM an effective model for generating high-quality images for both modalities. 

\begin{figure}[htbp]
    \centering
    \begin{tabular}{c@{\hspace{2pt}}c@{\hspace{2pt}}c@{\hspace{2pt}}c@{\hspace{2pt}}c@{\hspace{2pt}}}
    \put(-10,45){\rotatebox{90}{\textbf{MRI}$\rightarrow$\textbf{CT}}} &
    \includegraphics[height=.22\textwidth]{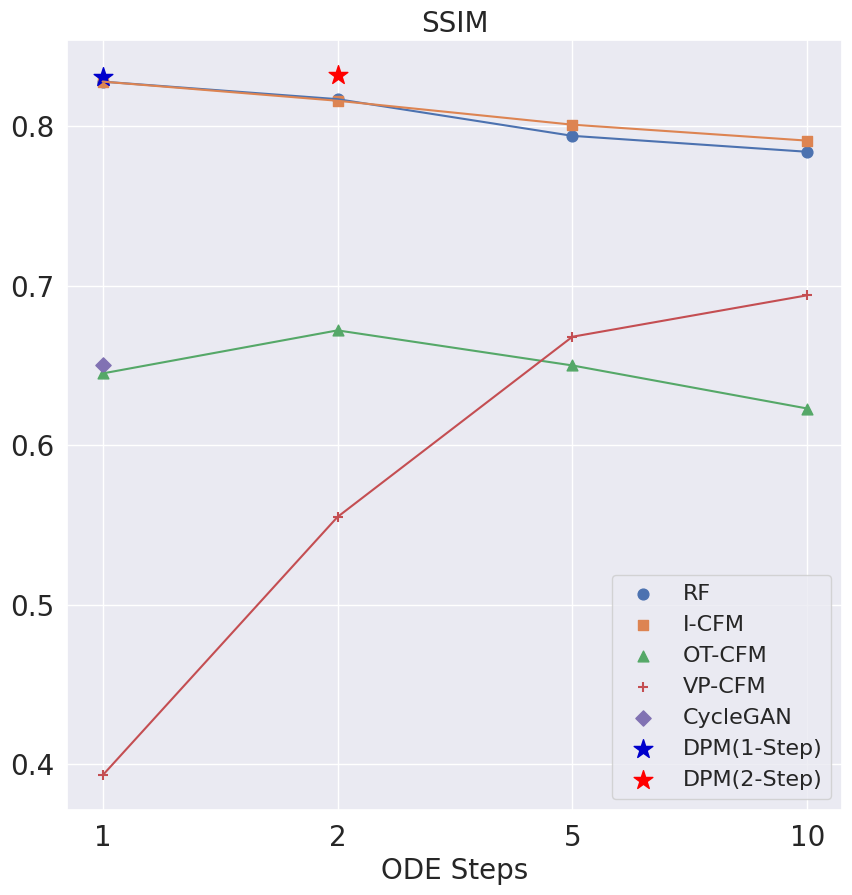}&
    \includegraphics[height=.22\textwidth]{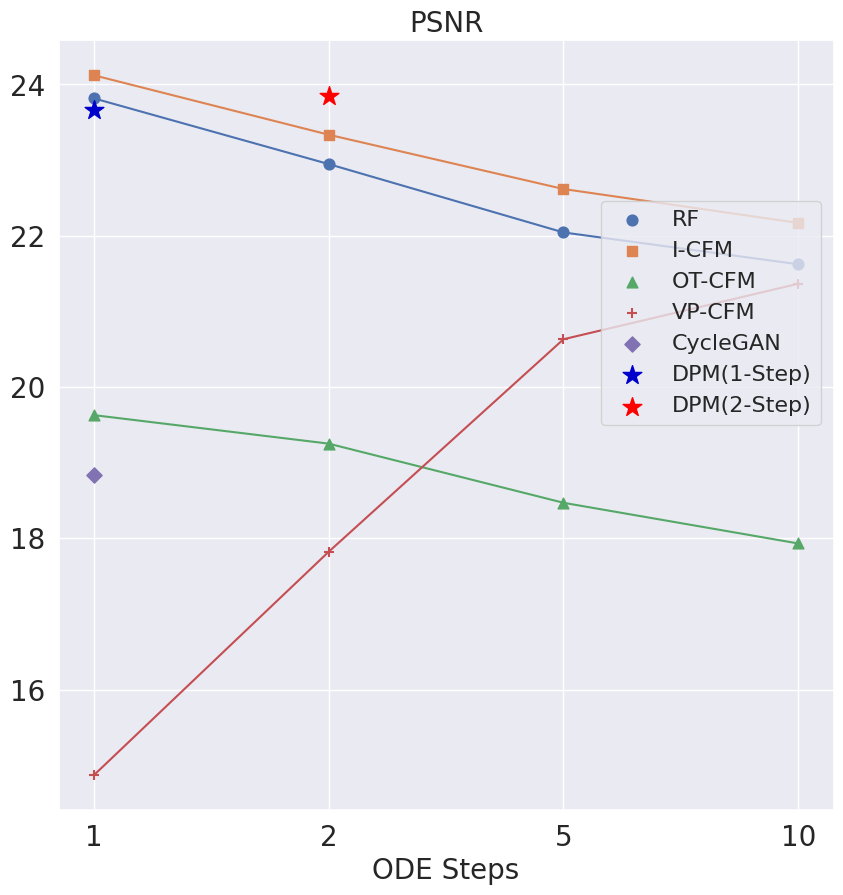} \\
      \put(-10,45){\rotatebox{90}{\textbf{CT}$\rightarrow$\textbf{MRI}}} &
    \includegraphics[height=.22\textwidth]{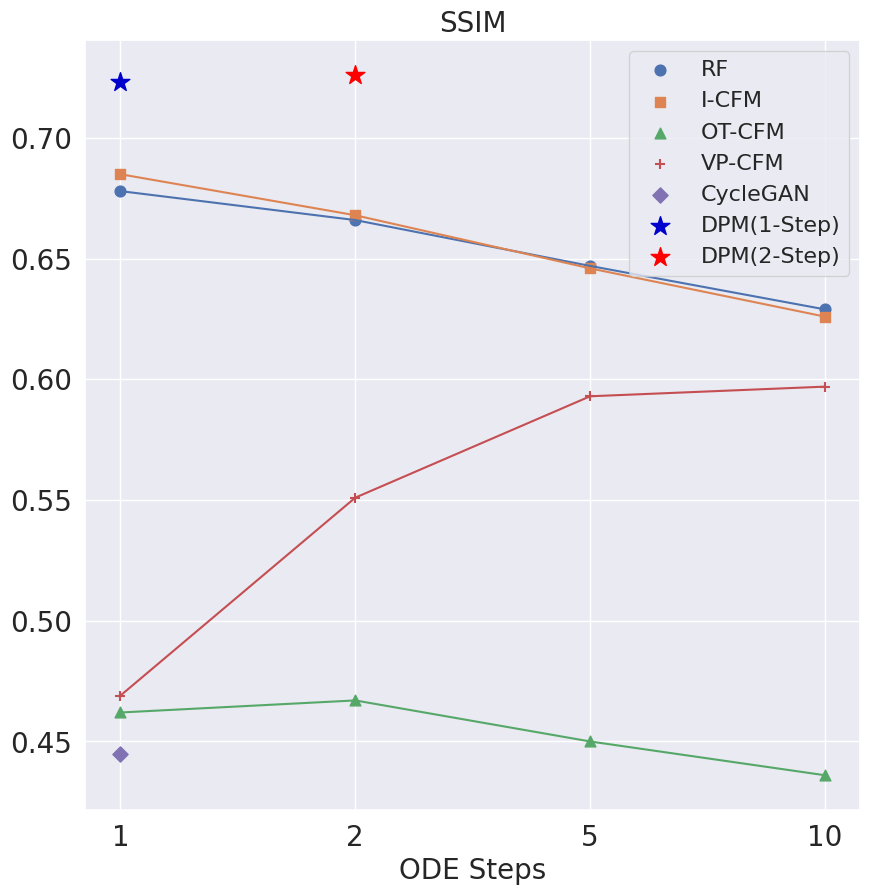}&
    \includegraphics[height=.22\textwidth]{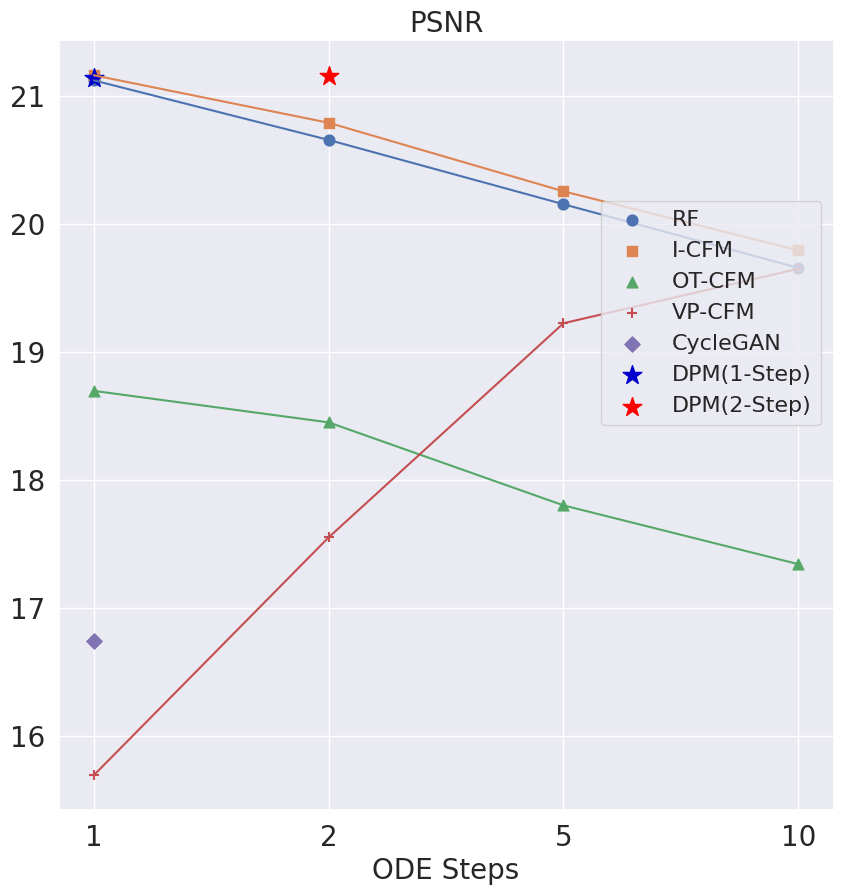} \\
      & \textbf{SSIM} & \textbf{PSNR} 
    \end{tabular}
    \caption{The quantitative comparison results on \textbf{CT/MRI Brain} dataset between different methods with various discrete ODE steps. The left two columns show the results for synthetic CT images, while the right two columns correspond to synthetic MRI images. For each modality, the evaluation metrics include SSIM and PSNR.}
    \label{fig:ode step}
\end{figure}
    
% \begin{figure*}[!h]
% \centering
%     \begin{tabular}{c@{\hspace{2pt}}c@{\hspace{2pt}}c@{\hspace{2pt}}c@{\hspace{2pt}}}
%     \multicolumn{2}{c}{\textbf{CT}} & \multicolumn{2}{c}{\textbf{MRI}} \\
%     % \includegraphics[width=.32\textwidth]{figures/index/ct_mri_brain/FID_0.png}&
%       \includegraphics[width=.235\textwidth]{figures/index/ct_mri_brain/SSIM_0.png}&
%       \includegraphics[width=.233\textwidth]{figures/index/ct_mri_brain/PSNR_0.png} & 
%       % \multicolumn{3}{c}{\textbf{MRI}} \\
%     % \includegraphics[width=.32\textwidth]{figures/index/ct_mri_brain/FID_1.png}&
%       \includegraphics[width=.24\textwidth]{figures/index/ct_mri_brain/SSIM_1.png}&
%       \includegraphics[width=.233\textwidth]{figures/index/ct_mri_brain/PSNR_1.png}\\
%       % \multicolumn{3}{c}{\includegraphics[width=0.99\textwidth]{figures/index/time.png}} \\
%     \end{tabular}
%     \caption{The quantitative comparison results on \textbf{CT/MRI Brain} dataset between different methods with various discrete ODE steps. The left two columns show the results for synthetic CT images, while the right two columns correspond to synthetic MRI images. For each modality, the evaluation metrics include SSIM and PSNR.}
%     \label{fig:ode step}
% \end{figure*}

\subsection{Time and Memory Cost}

We also conduct a comparative analysis of the memory consumption during training and the computational efficiency during inference across different methods. In training process, all models are evaluated under a fixed batch size of 10 to ensure the consistency of memory usage. In inference process, we benchmark the synthesis time on the MRI T1/T2 dataset, which contains a total of  $251\times 2$ test images. Each method processes one image at a time, and the total synthesis time for the entire test dataset is recorded. Here are the results:

% \begin{table}[H]
%     \centering
%     \caption{Inference time cost comparison}
%     \begin{tabular}{ccccccc}
         
%     \end{tabular}
%     \caption{Caption}
%     \label{tab:my_label}
% \end{table}

\begin{table}[H]
    \centering
    \caption{Inference time cost comparison}
    \scalebox{1}{
    \begin{tabular}{ccccccccc}
    \toprule 
       %\multirow{4}{*}{Time Cost}
       \multirow{4}{*}{\makecell[c]{Time\\Cost}}
       &  RF (RK45) & CFM (RK45) & CycleGAN & Reg-GAN  \\
       & 834s & 834s & 31s & 10s \\
       \cline{2-5}
       & SynDiff & DPM (1-step) & DPM (2-step) &\\
       &  394s & 20s & 38s&\\
    \bottomrule
    \end{tabular}}
    \label{tab:inference time comparison}
\end{table}

\begin{table}[H]
    \centering
    \caption{Training memory cost comparison}
    \scalebox{1}{
    \begin{tabular}{ccccccccc}
    \toprule 
        \multirow{4}{*}{\makecell[c]{Memory\\Cost}
        } &  RF (RK45) & CFM (RK45) & CycleGAN & Reg-GAN   \\
        & 26G & 36G & 22G & 12G \\
        \cline{2-5}
       & SynDiff & DPM (1-step) & DPM (2-step)&\\
       & 56G & 48G & 63G &\\
    \bottomrule
    \end{tabular}}
    \label{tab:training time comparison}
\end{table}

As shown in Table \ref{tab:training time comparison} and Table \ref{tab:inference time comparison}, our DPM achieves an optimal balance between the training memory efficiency and inference speed. In Table \ref{tab:inference time comparison}, DPM significantly outperforms other flow-based models (RF and CFM: 834s), and the diffusion-based method SynDiff (394s). While Reg-GAN (10s) remains the fastest, DPM offers a favorable trade-off, providing better generated quality with an acceptable inference time cost (20s). Table \ref{tab:training time comparison} further shows that DPM reduces training memory costs by 14\% compared to SynDiff (48G vs. 56G). Despite a modest increase in memory cost for its 2-step variant (63G), DPM achieves state-of-art performance, making it ideal for scenarios where computational resources are available.

\subsection{Segmentation Results}
To assess the quality of the synthetic medical images, we evaluate their performance in a downstream segmentation task using the BraTS2021 dataset, which represents a practical application of image synthesis. We first train an nnUNet model on real paired T1 and T2 scans as our baseline, and then evaluate two test configurations: synthetic T1 + real T2 and real T1 and synthetic T2. The segmentation performance is measured via Dice Similarity Coefficient \cite{dice1945measures}(DSC, higher is better)  which are shown in Table \ref{tab:segmentation}:

\begin{table}[H]
    \centering
    \caption{The segmentation results on MRI T1/T2 datasets.}
    \scalebox{0.9}[1]{
    \begin{tabular}{ccccccc}
        \toprule
         & RF & CFM & CycleGAN & Reg-GAN & SynDiff & DPM  \\
         \midrule
         \makecell{T1 Synthetic \\ T2 Real} & 0.814 & 0.812 & 0.758 & 0.774 & 0.781 & \textbf{0.816} \\
         \midrule
         \makecell{T2 Synthetic \\ T1 Real} & 0.690 & 0.662 & 0.519 & 0.642 & 0.619 & \textbf{0.716} \\
         \midrule
         \makecell{Baseline \\ Both Real} & \multicolumn{6}{c}{0.818} \\
         \bottomrule
    \end{tabular}
    }
    \label{tab:segmentation}
\end{table}

The comparative results demonstrate DPM's consistent superiority in preserving diagnostically relevant features. When evaluating synthetic T1 with real T2 data, DPM achieves the highest Dice score at 0.816, marginally outperforming RF (0.814) and much exceeding CycleGAN (0.758), which is also closed to the baseline (0.818). While in the scenario with synthetic T2 + real T1, the Dice score of DPM has a slight drop below the baseline, it still has remarkable improvement over other methods, which provides its large potential for medical image synthesis.

\section{Conclusions} \label{sec:conclusion}
We propose a novel flow-based method, termed Bi-DPM for bi-modality images synthesis. Unlike other commonly used flow-based models, Bi-DPM accounts for both directions of the flow ODE and ensures the consistency in the intermediate states of the synthesis process. This approach effectively leverages the guidance from the paired data to generate high-quality synthetic images while preserving anatomical structure. Experiments on three independent datasets demonstrate that Bi-DPM outperforms other SOTA flow-based image transfer models in MRI T1/T2 and CT/MRI synthesis tasks.

\section*{Acknowledgments}
This work is partially supported by NSFC (12090024, 12201402) and the Natural Science Foundation of Chongqing, China (CSTB2023NSCQ-LZX0054). We thank the Student Innovation Center at Shanghai Jiao Tong University for providing us the computing services.

\bibliography{ref}
\bibliographystyle{IEEEtran}

\clearpage
\appendices
\section{Low Dimensional Examples}
In this section we provide a detailed comparison of the toy example discussed in Section \ref{subsec:comparisons}. In addition to the results of 10 steps, we also provide the outcomes of 2 steps and 5 steps for each method. As shown in \Figref{app_fig:toy comparison}, both RF and CFM perform poorly with only 2 steps and 5 steps, whereas our Bi-DPM successfully learns the transformation between the two sets and preserves the relationships of the paired data in the meanwhile.
% \begin{figure}[!h]
% \centering
% \scalebox{0.95}{
%     \begin{tabular}{c@{\hspace{2pt}}|c@{\hspace{2pt}}c@{\hspace{2pt}}c@{\hspace{2pt}}c@{\hspace{2pt}}c@{\hspace{2pt}}c@{\hspace{2pt}}c@{\hspace{2pt}}c@{\hspace{2pt}}}
%       $X_0$ & \multicolumn{3}{c}{Rectified Flow}\\
%       \includegraphics[width=.12\textwidth]{figures/toy1/gt_source.png}& 
%       \includegraphics[width=.12\textwidth]{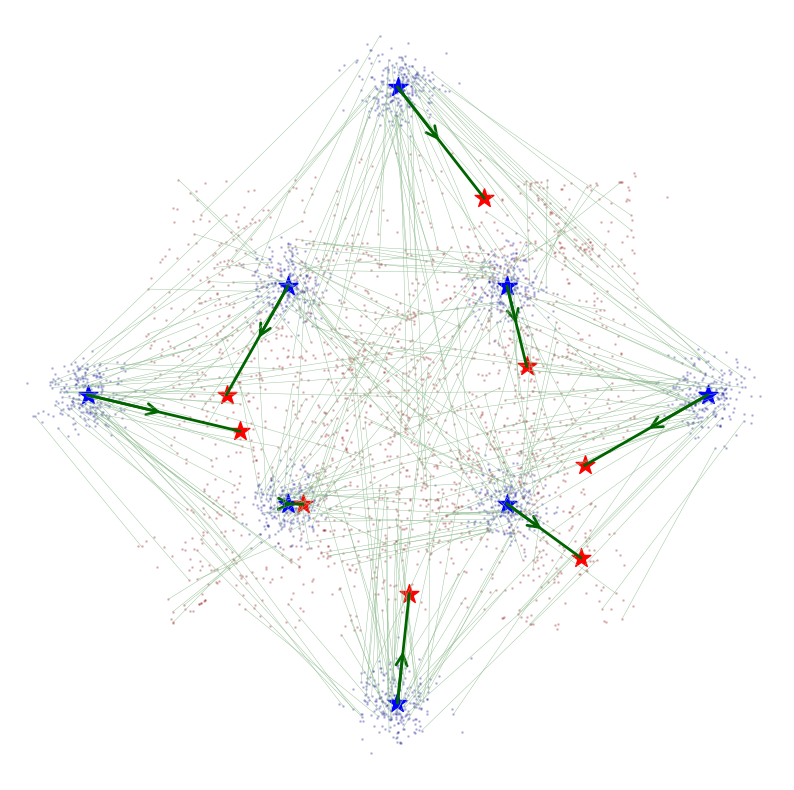}&
%       \includegraphics[width=.12\textwidth]{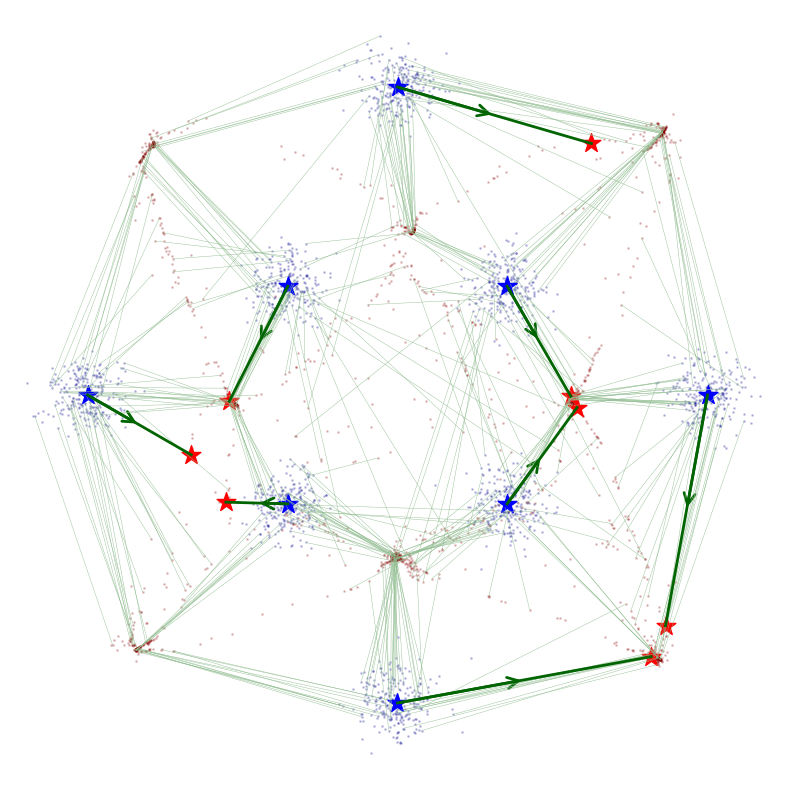}&
%        \includegraphics[width=.12\textwidth]{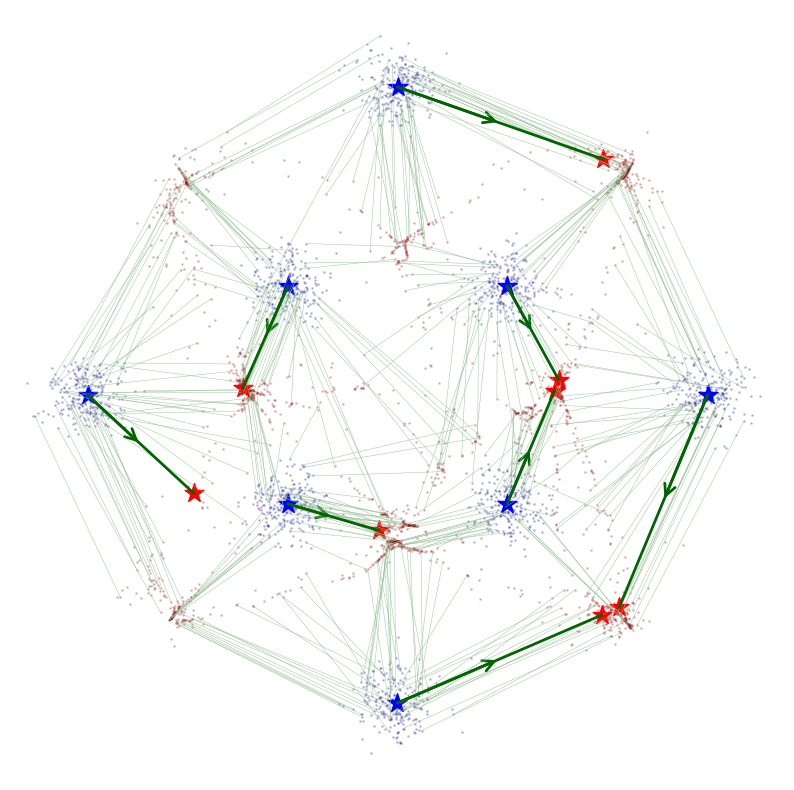}\\
%        Input & \multicolumn{3}{c}{Conditional Flow Matching} \\
%       \includegraphics[width=.12\textwidth]{figures/toy1/gt.png} &
%       \includegraphics[width=.12\textwidth]{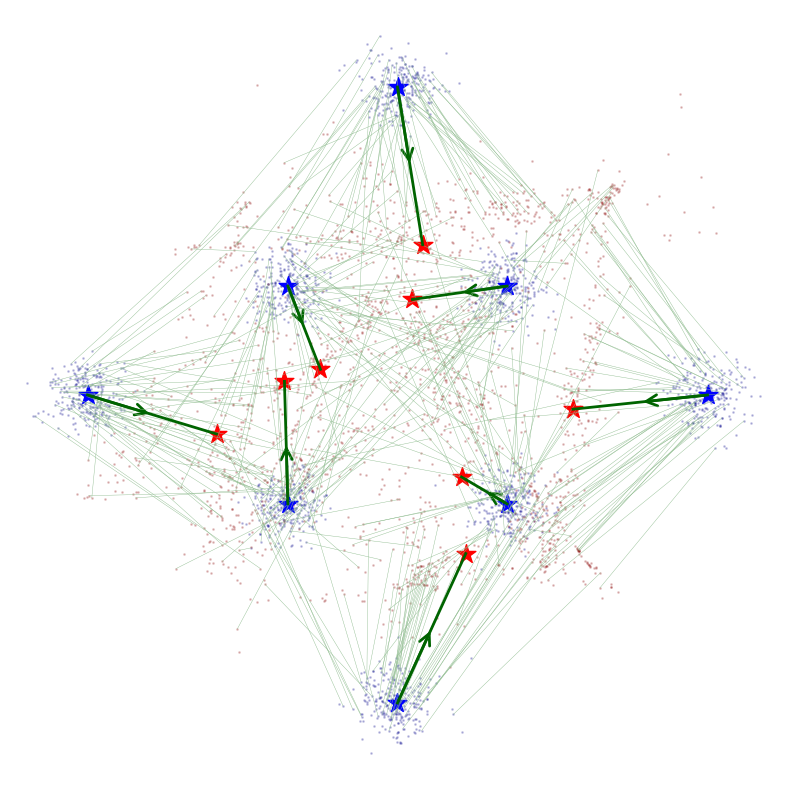}&
%       \includegraphics[width=.12\textwidth]{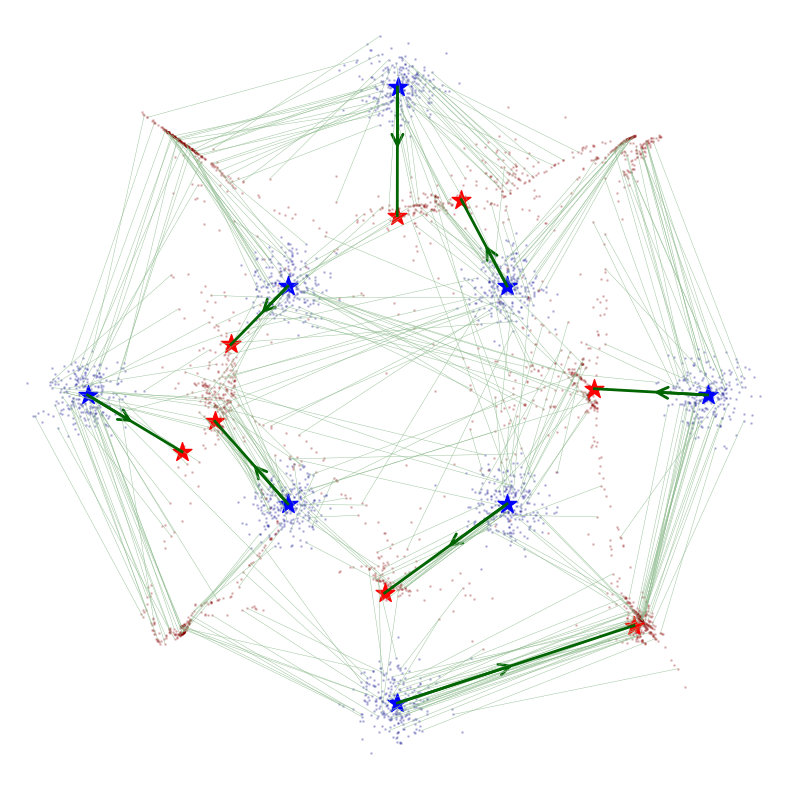}&
%       \includegraphics[width=.12\textwidth]{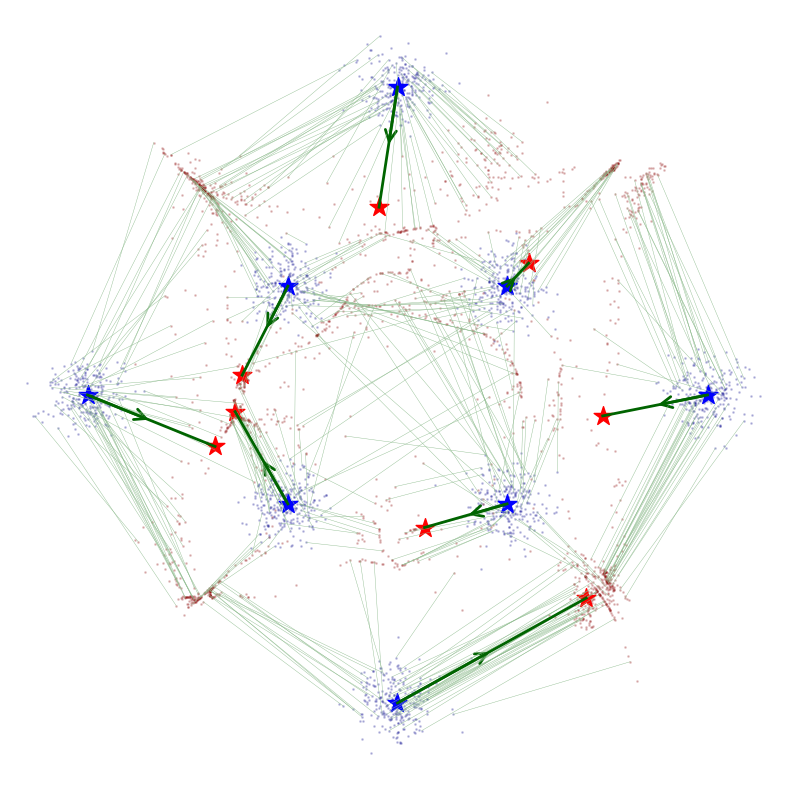}\\
%       $X_1$ & \multicolumn{3}{c}{Bi-DPM} \\
%       \includegraphics[width=.12\textwidth]{figures/toy1/gt_target.png}& 
%       \includegraphics[width=.12\textwidth]{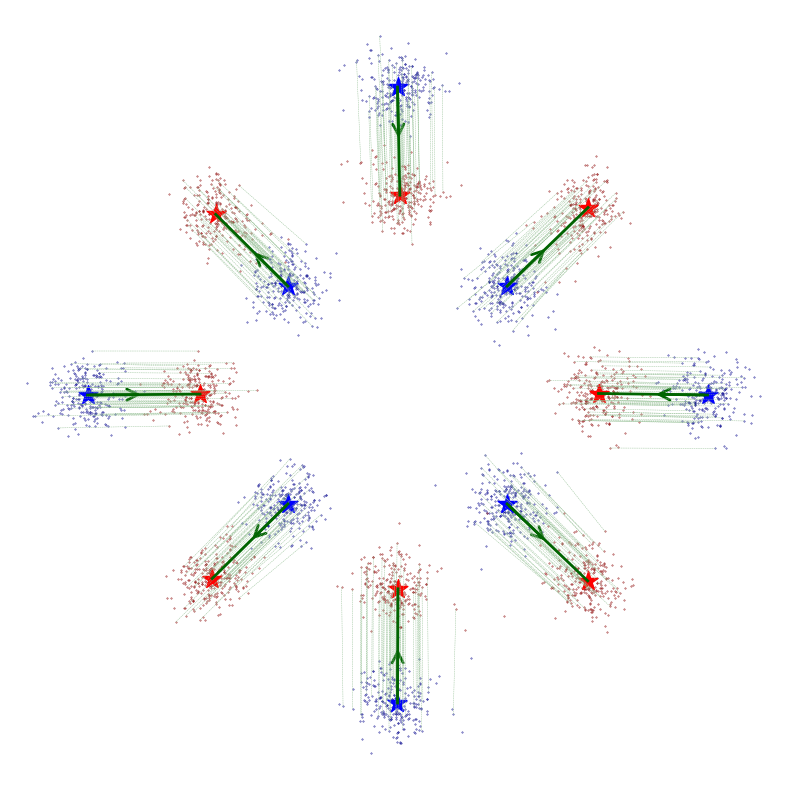}&
%       \includegraphics[width=.12\textwidth]{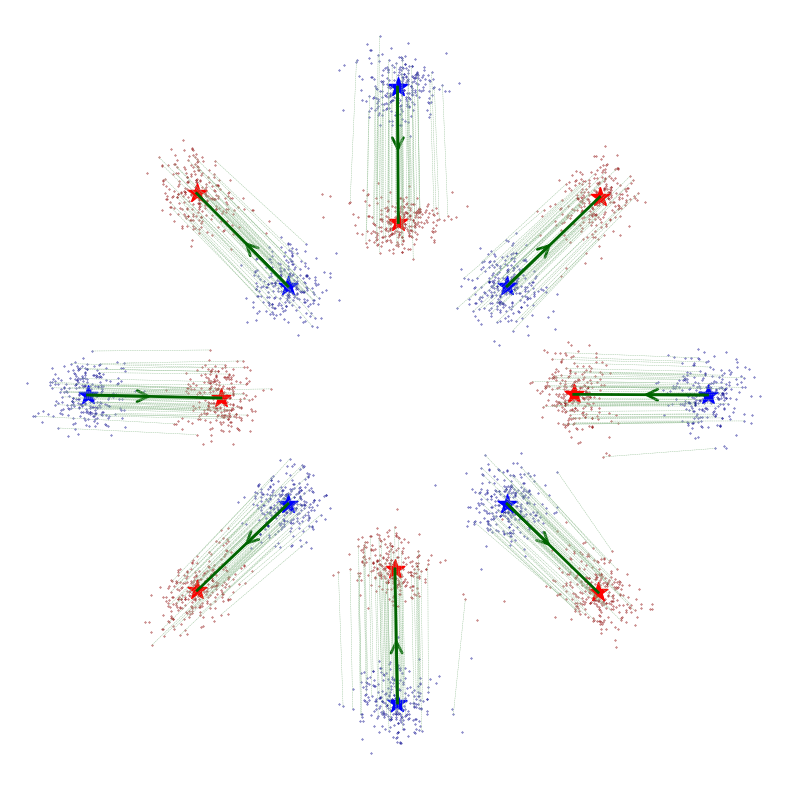}&
%       \includegraphics[width=.12\textwidth]{figures/toy1/dpm_10.png}\\
%        & 2 Steps & 5 Steps & 10 Steps \\
%     \end{tabular}}
%     \caption{The performance of RF, CFM and Bi-DPM on the partially paired 8-Gaussian to 8-Gaussian toy example. For each methods, from left to right, the figure represents the results with ODE steps set to 2, 5, and 10.}
%     \label{app_fig:toy comparison}
% \end{figure}

\begin{figure}[!h]
\centering
\scalebox{0.95}{
    \begin{tabular}{c@{\hspace{2pt}}c@{\hspace{2pt}}c@{\hspace{2pt}}c@{\hspace{2pt}}c@{\hspace{2pt}}c@{\hspace{2pt}}c@{\hspace{2pt}}c@{\hspace{2pt}}c@{\hspace{2pt}}}
    &$X_0$ &Input &$X_1$ \\
    &
    \includegraphics[width=.15\textwidth]{figures/toy1/gt_source.png}&
     \includegraphics[width=.15\textwidth]{figures/toy1/gt.png} &
     \includegraphics[width=.15\textwidth]{figures/toy1/gt_target.png}\\
     \put(-10,25){\rotatebox{90}{\textbf{2 steps}}} &
     \includegraphics[width=.15\textwidth]{figures/toy1/rf_2.png}&
     \includegraphics[width=.15\textwidth]{figures/toy1/cfm_2.png}&
     \includegraphics[width=.15\textwidth]{figures/toy1/dpm_2.png}\\
     \put(-10,25){\rotatebox{90}{\textbf{5 steps}}} &
     \includegraphics[width=.15\textwidth]{figures/toy1/rf_5.png}&
     \includegraphics[width=.15\textwidth]{figures/toy1/cfm_5.png}&
     \includegraphics[width=.15\textwidth]{figures/toy1/dpm_5.png}\\
     \put(-10,25){\rotatebox{90}{\textbf{10 steps}}} &
      \includegraphics[width=.15\textwidth]{figures/toy1/rf_10.png}&
      \includegraphics[width=.15\textwidth]{figures/toy1/cfm_10.png}&
      \includegraphics[width=.15\textwidth]{figures/toy1/dpm_10.png}\\
     &Rectified Flow&CFM&Bi-DPM
     %  $X_0$ & \multicolumn{3}{c}{Rectified Flow}\\
     %  \includegraphics[width=.12\textwidth]{figures/toy1/gt_source.png}& 
     %  \includegraphics[width=.12\textwidth]{figures/toy1/rf_2.png}&
     %  \includegraphics[width=.12\textwidth]{figures/toy1/rf_5.png}&
     %   \includegraphics[width=.12\textwidth]{figures/toy1/rf_10.png}\\
     %   Input & \multicolumn{3}{c}{Conditional Flow Matching} \\
     %  \includegraphics[width=.12\textwidth]{figures/toy1/gt.png} &
     %  \includegrap–hics[width=.12\textwidth]{figures/toy1/cfm_2.png}&
     %  \includegraphics[width=.12\textwidth]{figures/toy1/cfm_5.png}&
     %  \includegraphics[width=.12\textwidth]{figures/toy1/cfm_10.png}\\
     %  $X_1$ & \multicolumn{3}{c}{Bi-DPM} \\
     %  \includegraphics[width=.12\textwidth]{figures/toy1/gt_target.png}& 
     %  \includegraphics[width=.12\textwidth]{figures/toy1/dpm_2.png}&
     %  \includegraphics[width=.12\textwidth]{figures/toy1/dpm_5.png}&
     %  \includegraphics[width=.12\textwidth]{figures/toy1/dpm_10.png}\\
     %   & 2 Steps & 5 Steps & 10 Steps \\
    \end{tabular}}
    \caption{The performance of RF, CFM and Bi-DPM on the partially paired 8-Gaussian to 8-Gaussian toy example. For each methods,  the figure represents the results with ODE steps set to 2, 5, and 10.}
    \label{app_fig:toy comparison}
\end{figure}

Due to computation and memory constraints, here we test different number of steps on the toy examples using the datasets in Figure \ref{app_fig:toy comparison}, and the L2 distance between the generated and true data is as follows:

\begin{table}[H]
    \centering
    \caption{The $L_2$ error of different step sizes on the totally paired 8-Gaussian to 8-Gaussian toy example. }
    \begin{tabular}{ccccc}
    \hline
         & 1-step & 2-step & 5-step & 10-step \\
    \hline
      \makecell{$L_2$ error\\(forward/backward)}   & 0.015/0.015 & 0.009/0.008 & 0.011/0.012 & 0.013/0.019 \\
    \hline
    \end{tabular}
    \label{app_tab: step_size}
\end{table}

As shown, 2-step achieves the best performance, while 1-step also performs comparably well compared to 5-step and 10-step. This partially justifies our choice of using only 1-step and 2-step in the image experiments. One intuition to use less steps is that the introduction of many intermediate steps may lead to unstable approximation, which may degrade the performance. %A thorough stability analysis will be conduct in an ongoing work.

\section{Visualization of CT/MRI  Image Synthesis}
We present more visualized comparisons on CT/MRI Brain and CT/MRI Pelvis datasets.
The synthetic images of CT/MRI Brain  are presented in \Figref{app_fig: ct_mri_brain}.
The synthetic images of CT Pelvis with MRI Pelvis and MRI Pelvis with CT Pelvis  are presented in \Figref{app_fig: ct_mri_pelvis1}\Figref{app_fig: ct_mri_pelvis2} respectively. 

% \section{Results with partially paired data}
% The quantitative results of the comparison to other baselines on CT/MRI Pelvis with 0.1 ratio of paired data are displayed in Table \ref{app_tab:0.1 results}. Besides, more results about the tendency of the quality evalution indices with regard to paired ratio are shown in \Figref{app_fig:ratio_eval_mri} and \Figref{app_fig:ratio_eval_pelvis}.

\begin{figure}[htbp!]
\centering
    \begin{tabular}{c@{\hspace{0pt}}c@{\hspace{4pt}}c@{\hspace{4pt}}c@{\hspace{2pt}}c@{\hspace{2pt}}c@{\hspace{2pt}}c@{\hspace{2pt}}c@{\hspace{2pt}}c@{\hspace{2pt}}}
\multicolumn{2}{c}{\fontsize{15}{10}\selectfont MRI$\rightarrow$CT}&
\multicolumn{2}{c}{\fontsize{15}{10}\selectfont CT$\rightarrow$MRI}\\
    \includegraphics[width=.12\textwidth]{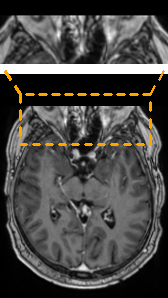}&
    \includegraphics[width=.12\textwidth]{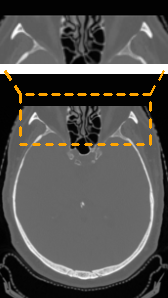}&
    \includegraphics[width=.12\textwidth]{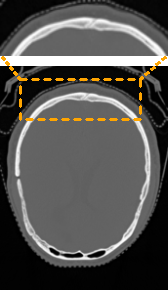}&
    \includegraphics[width=.12\textwidth]{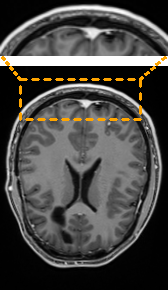}\\
    {\textbf{Input}} & {\textbf{GT}} &{\textbf{Input}} & {\textbf{GT}}\\   
    \includegraphics[width=.12\textwidth]{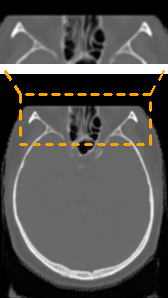}&
    \includegraphics[width=.12\textwidth]{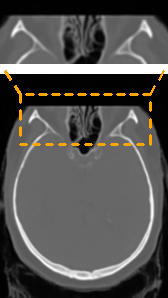}&
     \includegraphics[width=.12\textwidth]{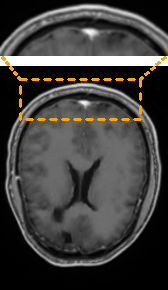}&
    \includegraphics[width=.12\textwidth]{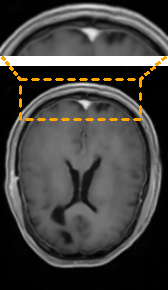}\\ 
    {\textbf{RF}} & {\textbf{I-CFM}} &{\textbf{RF}} & {\textbf{I-CFM}} \\  
     \includegraphics[width=.12\textwidth]{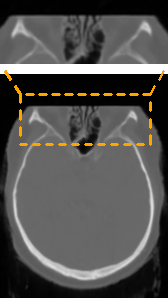}&
     \includegraphics[width=.12\textwidth]{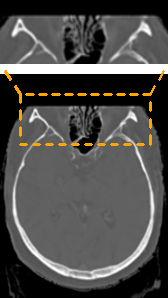}&
      \includegraphics[width=.12\textwidth]{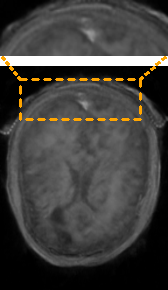}&
      \includegraphics[width=.12\textwidth]{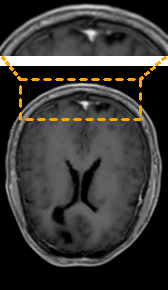}\\
       {\textbf{OT-CFM}}& {\textbf{VP-CFM}} & {\textbf{OT-CFM}}& {\textbf{VP-CFM}} \\
     \includegraphics[width=.12\textwidth]{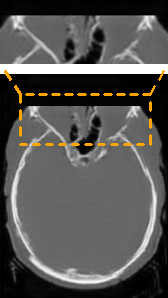}&
      \includegraphics[width=.12\textwidth]{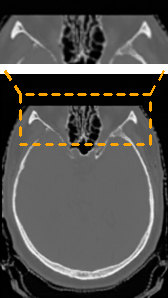}
      &
      \includegraphics[width=.12\textwidth]{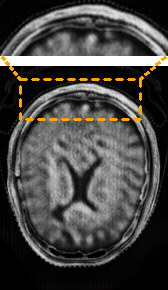}&
      \includegraphics[width=.12\textwidth]{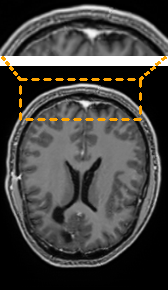}
      \\
      {\textbf{CycleGAN}}&
      {\textbf{RegGAN}}
      &{\textbf{CycleGAN}} &
      {\textbf{RegGAN}}
      \\
    \includegraphics[width=.12\textwidth]{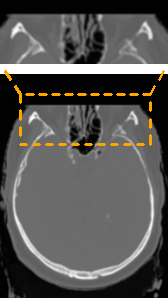}
    &
      \includegraphics[width=.12\textwidth]{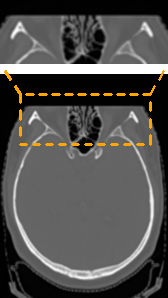}&
      \includegraphics[width=.12\textwidth]{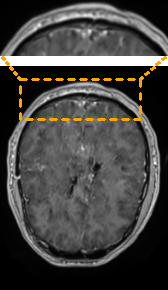}
      &
      \includegraphics[width=.12\textwidth]{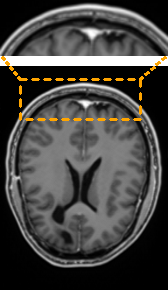}\\
         {\textbf{SynDiff}}
         & {\textbf{Bi-DPM}}&
         {\textbf{SynDiff}}
         & {\textbf{Bi-DPM}}
     \end{tabular}
    \caption{The synthetic images of \textbf{CT/MRI Brain} dataset for different methods.}
    \label{app_fig: ct_mri_brain}
\end{figure}

\begin{figure}[htbp]
\centering  
    \begin{tabular}{c@{\hspace{2pt}}c@{\hspace{2pt}}c@{\hspace{2pt}}c@{\hspace{2pt}}c@{\hspace{2pt}}}
    \multicolumn{2}{c}{\fontsize{15}{10}\selectfont MRI$\rightarrow$CT}\\
        \includegraphics[width=.22\textwidth]{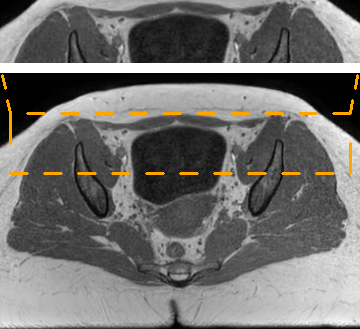}&
      \includegraphics[width=.22\textwidth]{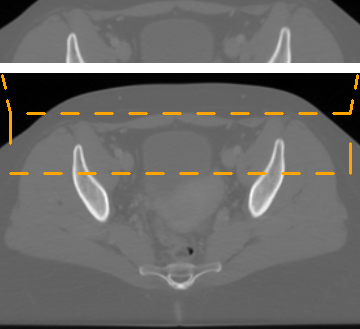}\\
          { \textbf{Input}} & {\textbf{GT}} \\
          \includegraphics[width=.22\textwidth]{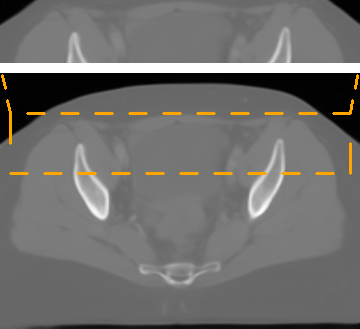}&
       \includegraphics[width=.22\textwidth]{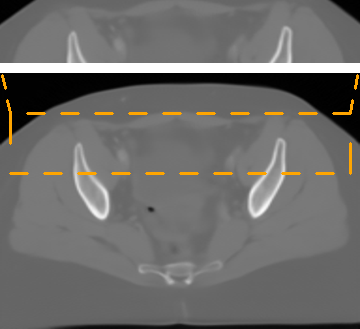}\\
           { \textbf{RF}} & { \textbf{I-CFM}} \\
       \includegraphics[width=.22\textwidth]{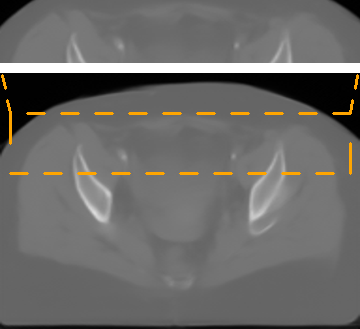}&
        \includegraphics[width=.22\textwidth]{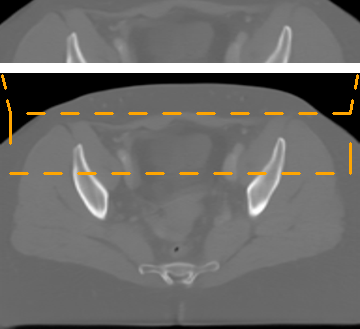}\\
            { \textbf{OT-CFM}} &{\textbf{VP-CFM}}\\  
      \includegraphics[width=.22\textwidth]{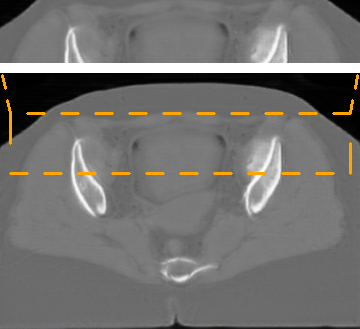}&
      \includegraphics[width=.22\textwidth]{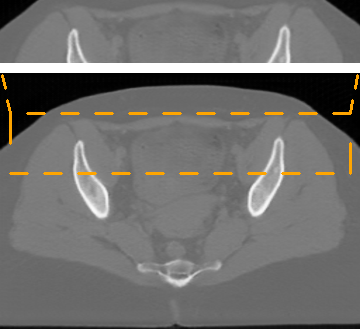}\\
      { \textbf{CycleGAN}}&
       { \textbf{RegGAN}}\\
      \includegraphics[width=.22\textwidth]{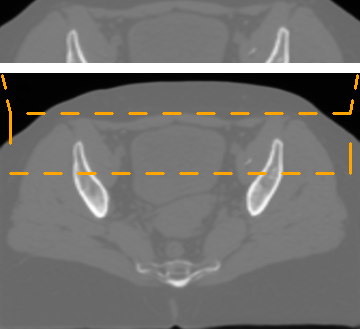}&
      \includegraphics[width=.22\textwidth]{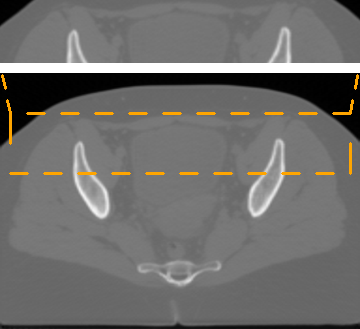}\\
          {\textbf{SynDiff}}&
    {\textbf{Bi-DPM}}\\
      \end{tabular}
    \caption{The synthetic images of \textbf{CT Pelvis} with \textbf{MRI Pelvis} for different methods.}
    \label{app_fig: ct_mri_pelvis1}
\end{figure}
\begin{figure}[htbp]
\centering  
    \begin{tabular}{c@{\hspace{2pt}}c@{\hspace{2pt}}c@{\hspace{2pt}}c@{\hspace{2pt}}c@{\hspace{2pt}}}
\multicolumn{2}{c}{\fontsize{15}{10}\selectfont CT$\rightarrow$MRI}\\
     \includegraphics[width=.22\textwidth]{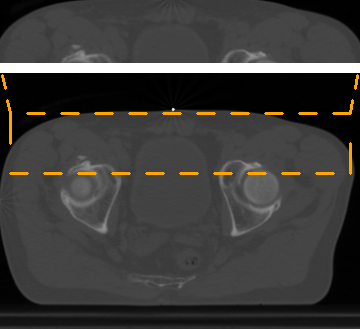}&
      \includegraphics[width=.22\textwidth]{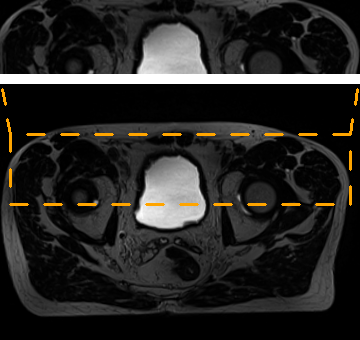}\\
      {\textbf{Input}} & {\textbf{GT}} \\ 
    \includegraphics[width=.22\textwidth]{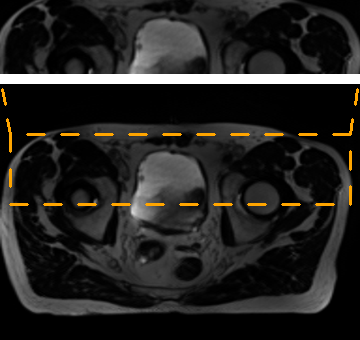}&
      \includegraphics[width=.22\textwidth]{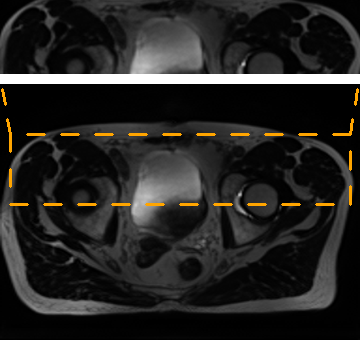}\\
      {\textbf{RF}} & {\textbf{I-CFM}}\\
      \includegraphics[width=.22\textwidth]{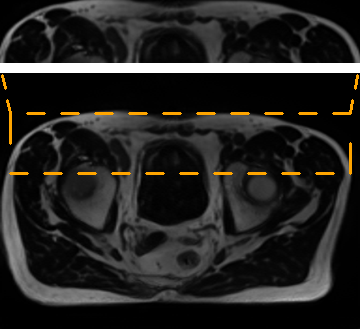}&
      \includegraphics[width=.22\textwidth]{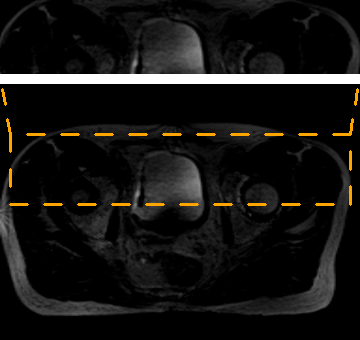}\\
       {\textbf{OT-CFM}} &  {\textbf{VP-CFM}}  \\
      \includegraphics[width=.22\textwidth]{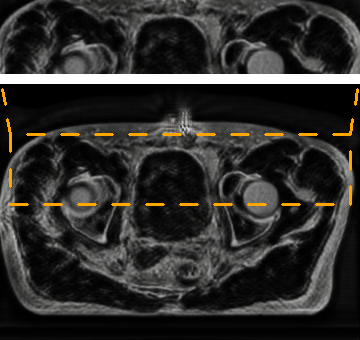}&
      \includegraphics[width=.22\textwidth]{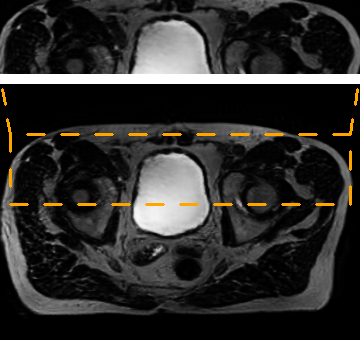}\\
      {\textbf{CycleGAN}}&
     {\textbf{RegGAN}}\\     
      \includegraphics[width=.22\textwidth]{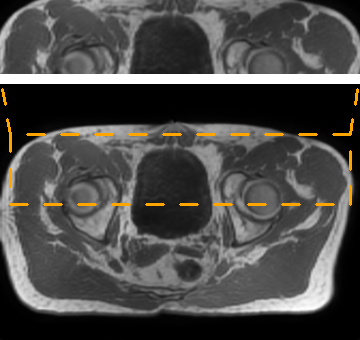}&
      \includegraphics[width=.22\textwidth]{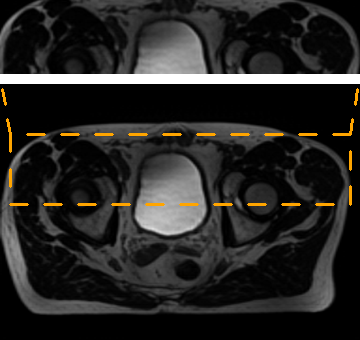}\\
      {\textbf{SynDiff}}&
    {\textbf{Bi-DPM}}\\
    \end{tabular}
    \caption{The synthetic images of \textbf{MRI Pelvis} with \textbf{CT Pelvis} for different methods.}
    \label{app_fig: ct_mri_pelvis2}
\end{figure}

\section{Quantitative Comparison}

Figure \ref{app_fig:ratio_eval_mri}  gives the quantitative results for the \textbf{MRI T1/T2}  dataset with various paired ratio. The first row show the results for synthetic MRI T1 images, while the second row correspond to synthetic MRI T2 images. The indices are SSIM, and PSNR from left to the right.

Figure \ref{app_fig:ratio_eval_pelvis}  shows the quantitative results for the \textbf{CT/MRI Pelvis}  dataset with various paired ratio. The first row show the results for synthetic CT images, while the second row correspond to synthetic MRI images. The indices are SSIM, and PSNR from left to the right.

 Figure \ref{ODE_Step_MRI} demostrates the quantitative comparison results on \textbf{MRI T1/T2} dataset between different methods with various discrete ODE steps. From the top to the bottom, the figures show the results of synthetic MRI T1 and synthetic MRI T2. The indices are SSIM, and PSNR from left to the right.
    
   Figure \ref{ODE_Step_Pelvis} illustrates the quantitative comparison results on \textbf{CT/MRI Pelvis} dataset between different methods with various discrete ODE steps. From the top to the bottom, the figures show the results of synthetic CT and synthetic MRI. The indices are SSIM, and PSNR from left to the right. 
         
% \begin{figure*}[htbp]
% \centering
%     \begin{tabular}{c@{\hspace{2pt}}c@{\hspace{2pt}}c@{\hspace{2pt}}c@{\hspace{2pt}}}
%     \multicolumn{2}{c}{\textbf{T1}} &  \multicolumn{2}{c}{\textbf{T2}} \\
%       \includegraphics[width=.24\textwidth]{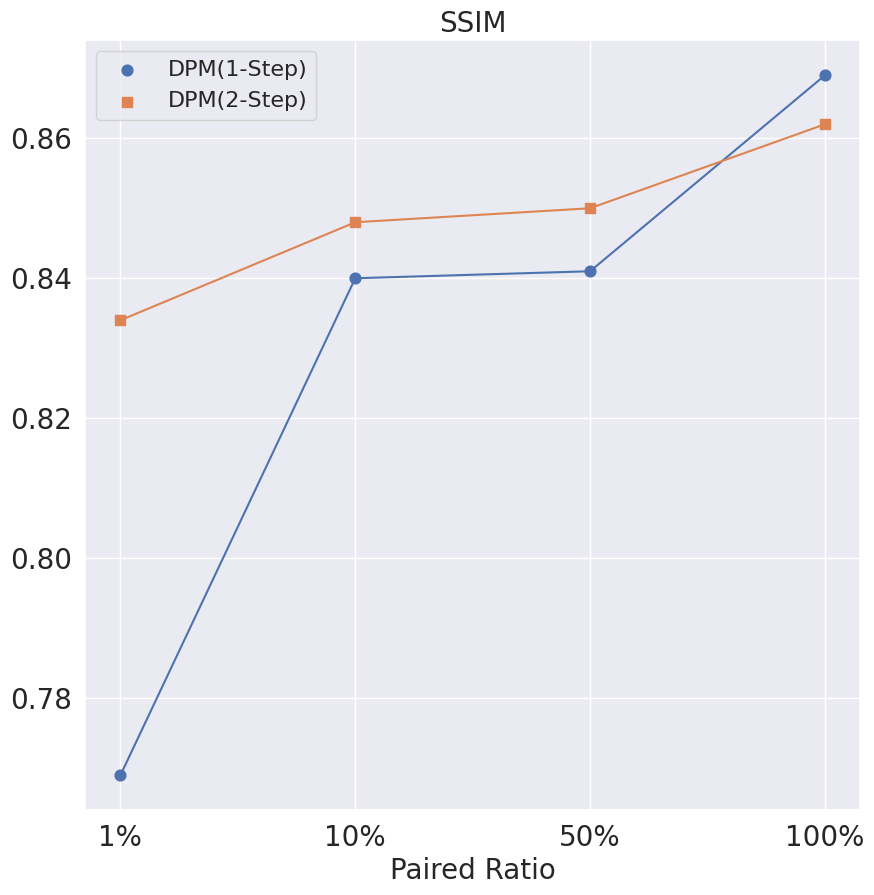}&
%       \includegraphics[width=.24\textwidth]{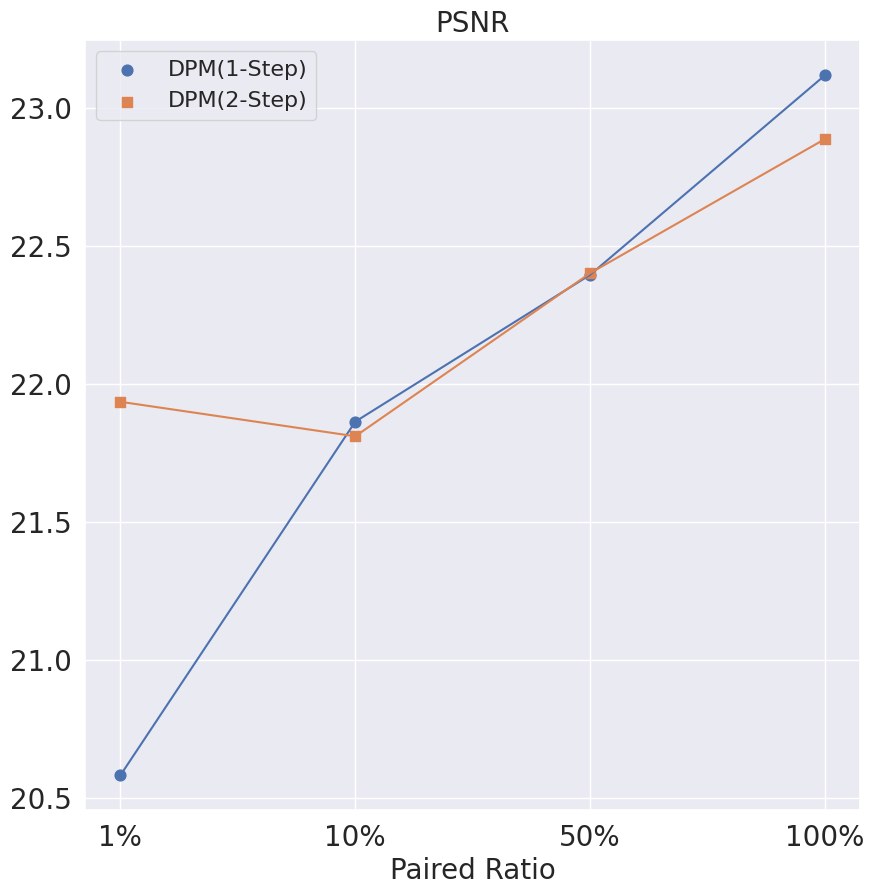}&
%       \includegraphics[width=.24\textwidth]{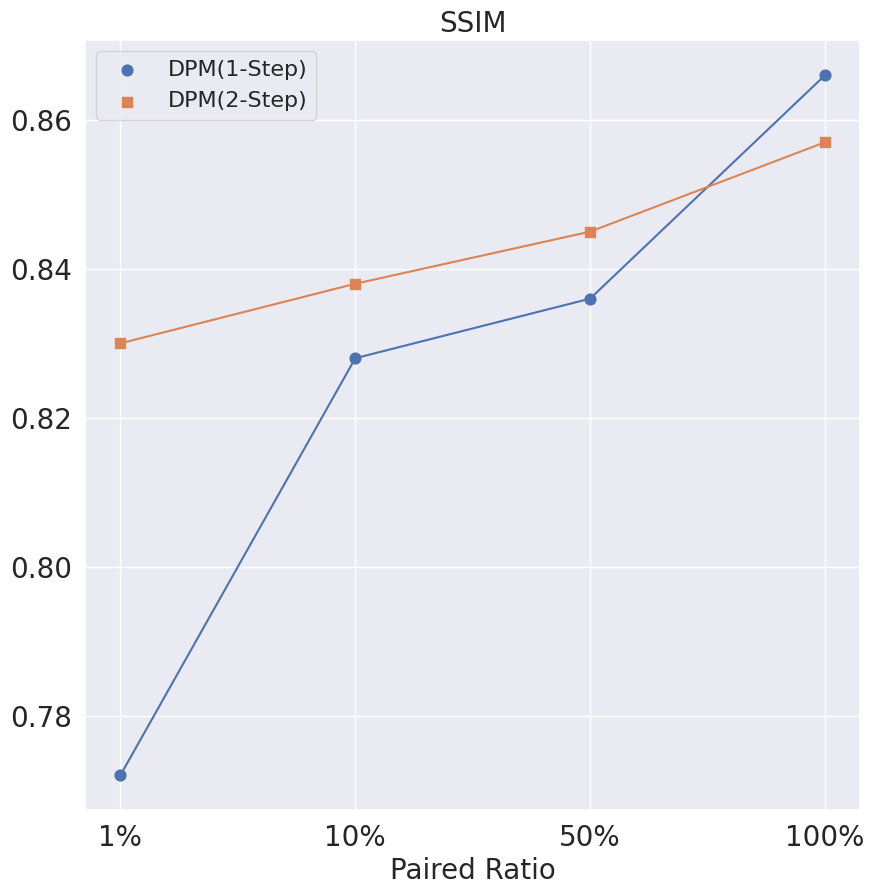}&
%       \includegraphics[width=.24\textwidth]{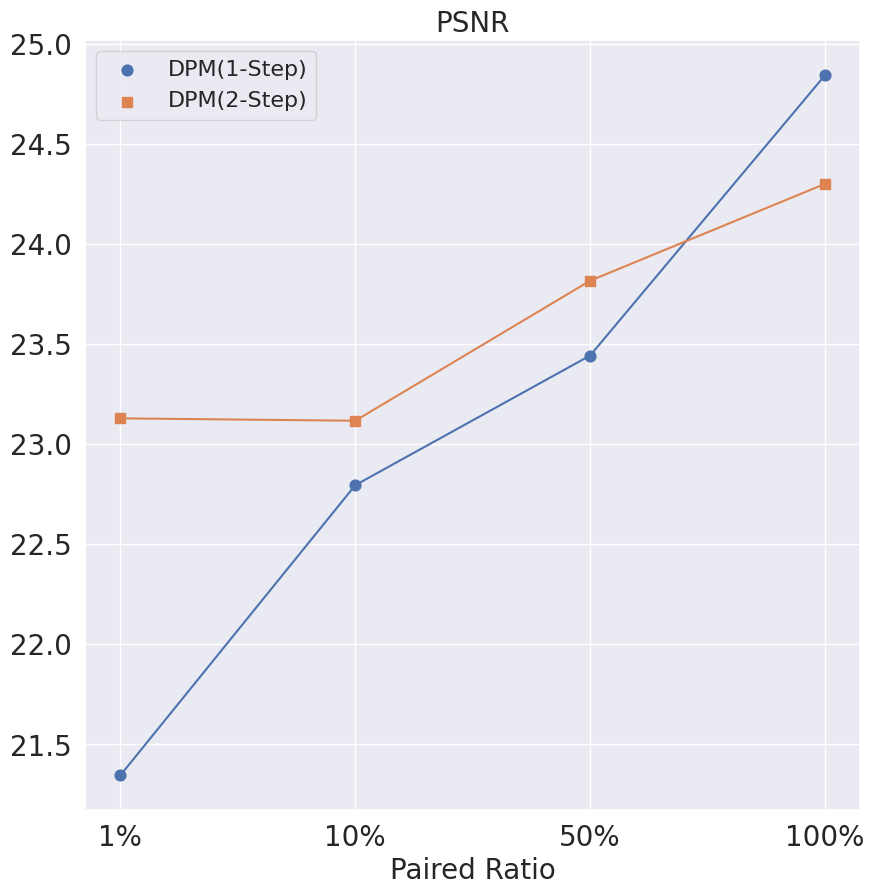}\\
%       % \multicolumn{3}{c}{\includegraphics[width=0.99\textwidth]{figures/index/time.png}} \\
%     \end{tabular}
%     \caption{The quantitative comparison results on \textbf{MRI T1/T2} dataset between different methods with various discrete ODE steps. From the top to the bottom, the figures show the results of synthetic CT and synthetic MRI. The indices are SSIM, and PSNR from left to the right.}
%     \label{app_fig:ratio_eval_mri}
% \end{figure*}

\begin{figure}[htbp]
\centering
    \begin{tabular}{c@{\hspace{2pt}}c@{\hspace{2pt}}c@{\hspace{2pt}}c@{\hspace{2pt}}}
        \put(-10,45){\rotatebox{90}{\textbf{T2}$\rightarrow$\textbf{T1}}} &
\includegraphics[width=.24\textwidth]{figures/index/mri_brain/SSIM_steps_0.png}&
      \includegraphics[width=.24\textwidth]{figures/index/mri_brain/PSNR_steps_0.png}\\
          \put(-10,45){\rotatebox{90}{\textbf{T1}$\rightarrow$\textbf{T2}}} &    
      \includegraphics[width=.24\textwidth]{figures/index/mri_brain/SSIM_steps_1.png}&
      \includegraphics[width=.24\textwidth]{figures/index/mri_brain/PSNR_steps_1.png}\\
      % \multicolumn{3}{c}{\includegraphics[width=0.99\textwidth]{figures/index/time.png}} \\
      &SSIM& PSNR
    \end{tabular}
    \caption{
    The quantitative results for the\textbf{MRI T1/T2} Brain dataset with various
paired ratio. The first row show the results for synthetic MRI T1 images,
while the second row correspond to synthetic MRI T2 images. And for
each modality, the evaluation indices include SSIM and PSNR.
    }
    \label{app_fig:ratio_eval_mri}
\end{figure}

% \begin{figure*}[htbp]
% \centering
%     \begin{tabular}{c@{\hspace{2pt}}c@{\hspace{2pt}}c@{\hspace{2pt}}c@{\hspace{2pt}}}
%     \multicolumn{2}{c}{\textbf{CT}} & \multicolumn{2}{c}{\textbf{MRI}} \\
%       \includegraphics[width=.24\textwidth]{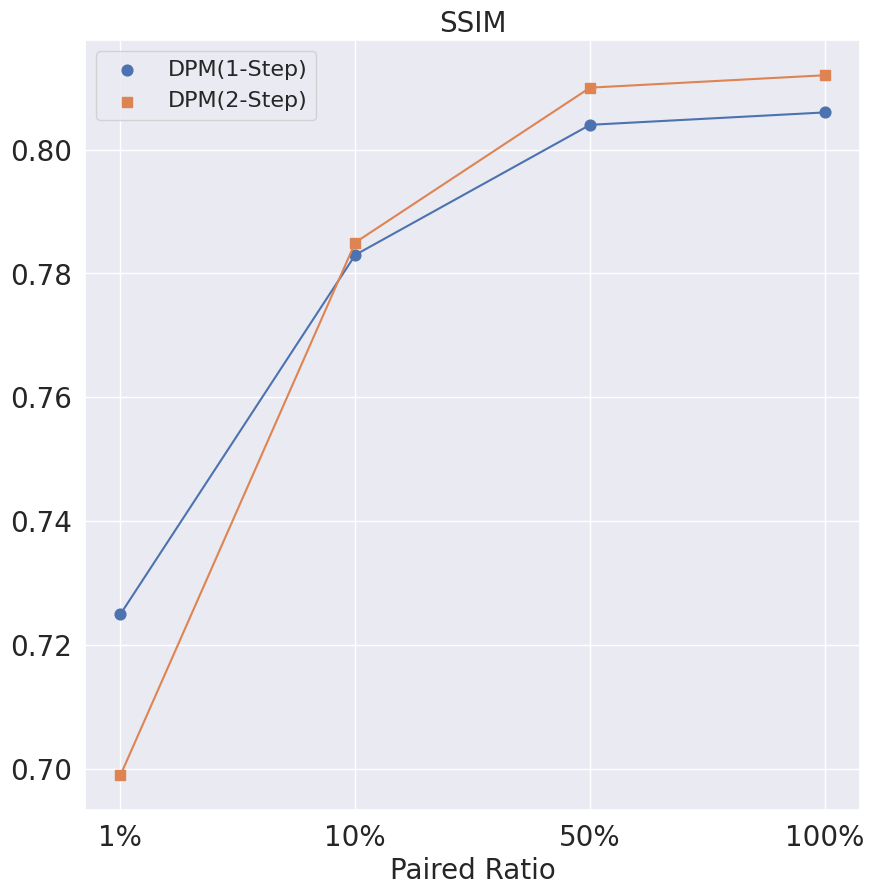}&
%       \includegraphics[width=.24\textwidth]{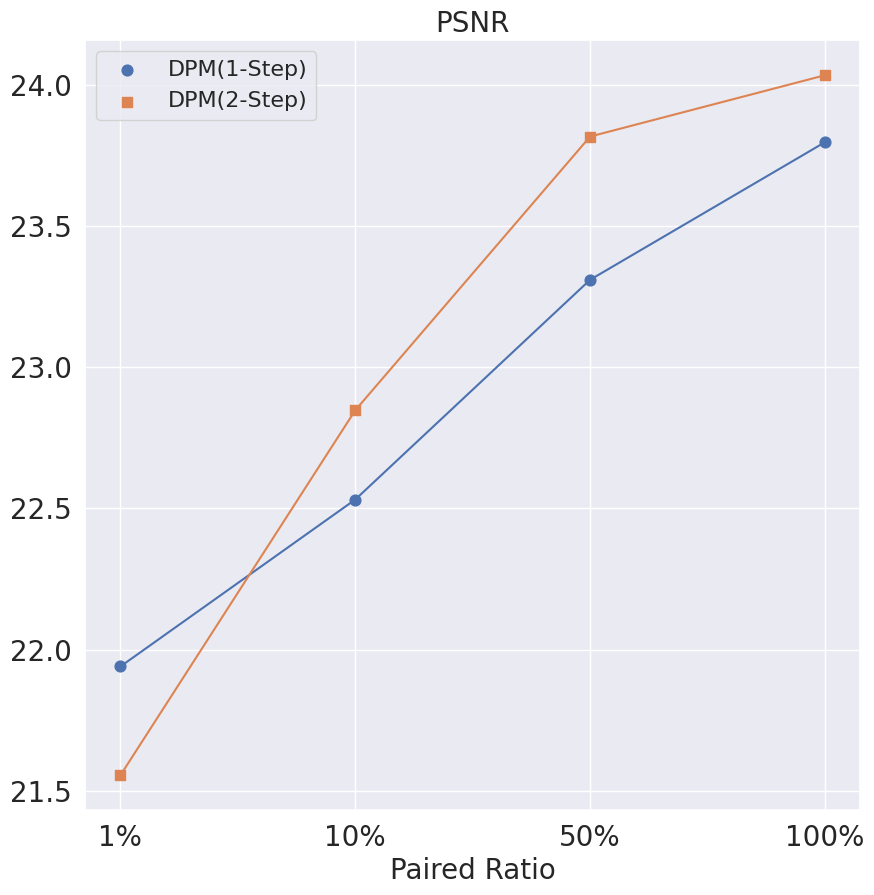}&
%       \includegraphics[width=.24\textwidth]{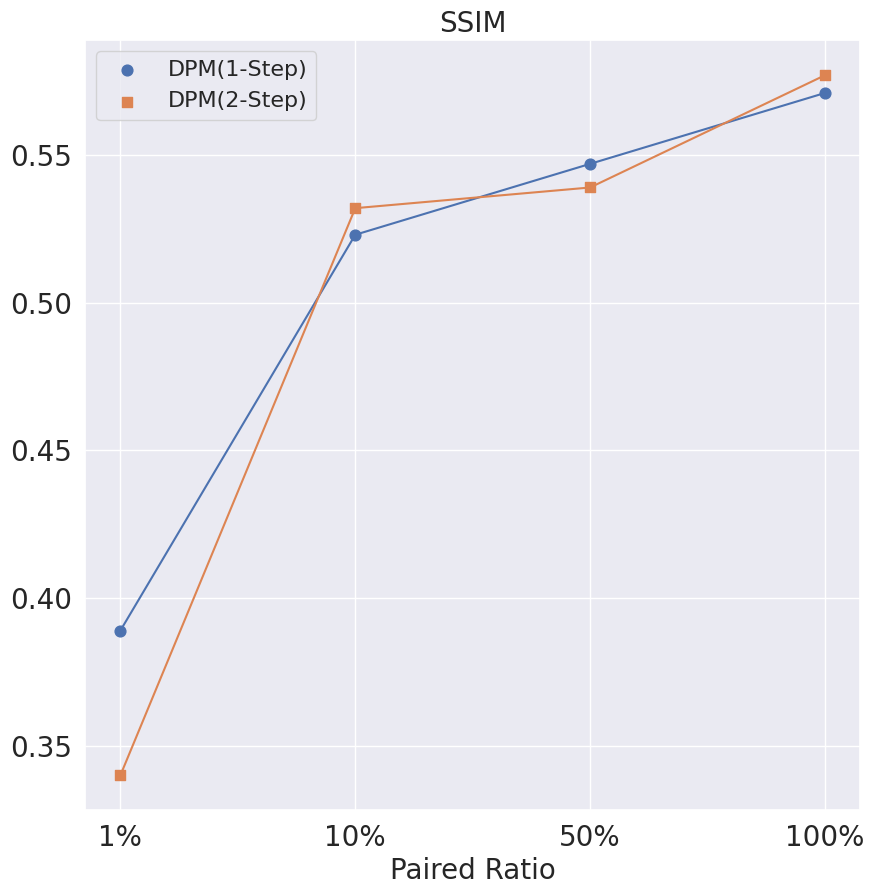}&
%       \includegraphics[width=.24\textwidth]{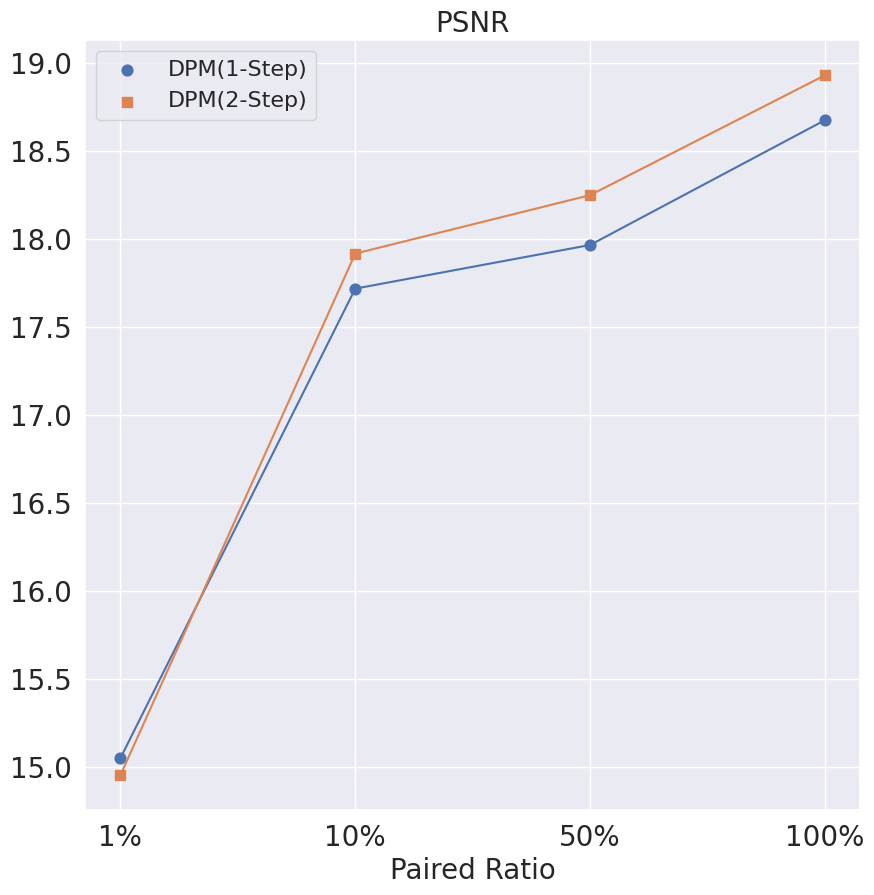}\\
%       % \multicolumn{3}{c}{\includegraphics[width=0.99\textwidth]{figures/index/time.png}} \\
%     \end{tabular}
%     \caption{The quantitative comparison results on \textbf{CT/MRI Pelvis} dataset between different methods with various discrete ODE steps. From the top to the bottom, the figures show the results of synthetic CT and synthetic MRI. The indices are SSIM, and PSNR from left to the right.}
%     \label{app_fig:ratio_eval_pelvis}
% \end{figure*}

\begin{figure}[htbp]
\centering
    \begin{tabular}{c@{\hspace{2pt}}c@{\hspace{2pt}}c@{\hspace{2pt}}c@{\hspace{2pt}}}
           \put(-10,45){\rotatebox{90}{\textbf{MRI}$\rightarrow$\textbf{CT}}} &
      \includegraphics[width=.24\textwidth]{figures/index/ct_mri_pelvis/SSIM_steps_0.png}&
      \includegraphics[width=.24\textwidth]{figures/index/ct_mri_pelvis/PSNR_steps_0.png}\\
     \put(-10,45){\rotatebox{90}{\textbf{CT}$\rightarrow$\textbf{MRI}}} &
      \includegraphics[width=.24\textwidth]{figures/index/ct_mri_pelvis/SSIM_steps_1.png}&
      \includegraphics[width=.24\textwidth]{figures/index/ct_mri_pelvis/PSNR_steps_1.png}\\
      &SSIM & PSNR
    \end{tabular}
    \caption{
     The quantitative results for the \textbf{CT/MRI Pelvis}  dataset with various paired ratio. The first row show the results for synthetic CT images,
while the second row correspond to synthetic MRI images. The indices are SSIM, and PSNR from left to the right.}
    \label{app_fig:ratio_eval_pelvis}
\end{figure}

% \begin{figure*}[htbp]
% \centering
%     \begin{tabular}{c@{\hspace{2pt}}c@{\hspace{2pt}}c@{\hspace{2pt}}c@{\hspace{2pt}}}
%     \multicolumn{2}{c}{\textbf{T1}} & \multicolumn{2}{c}{\textbf{T2}} \\
%       \includegraphics[width=.24\textwidth]{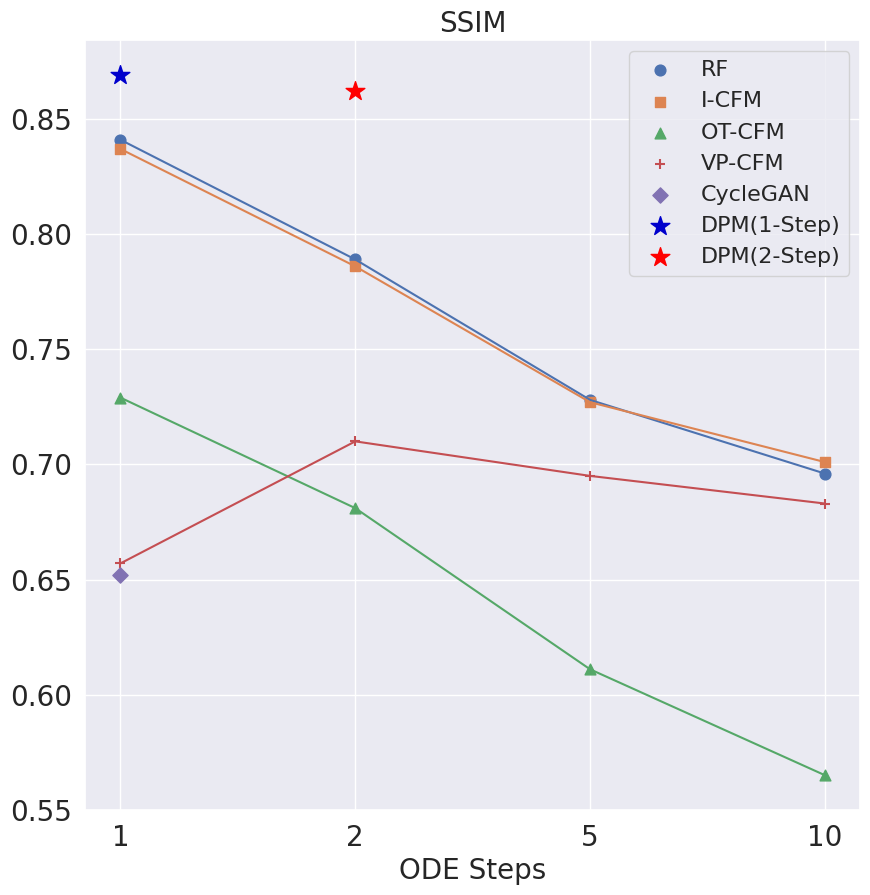}&
%       \includegraphics[width=.24\textwidth]{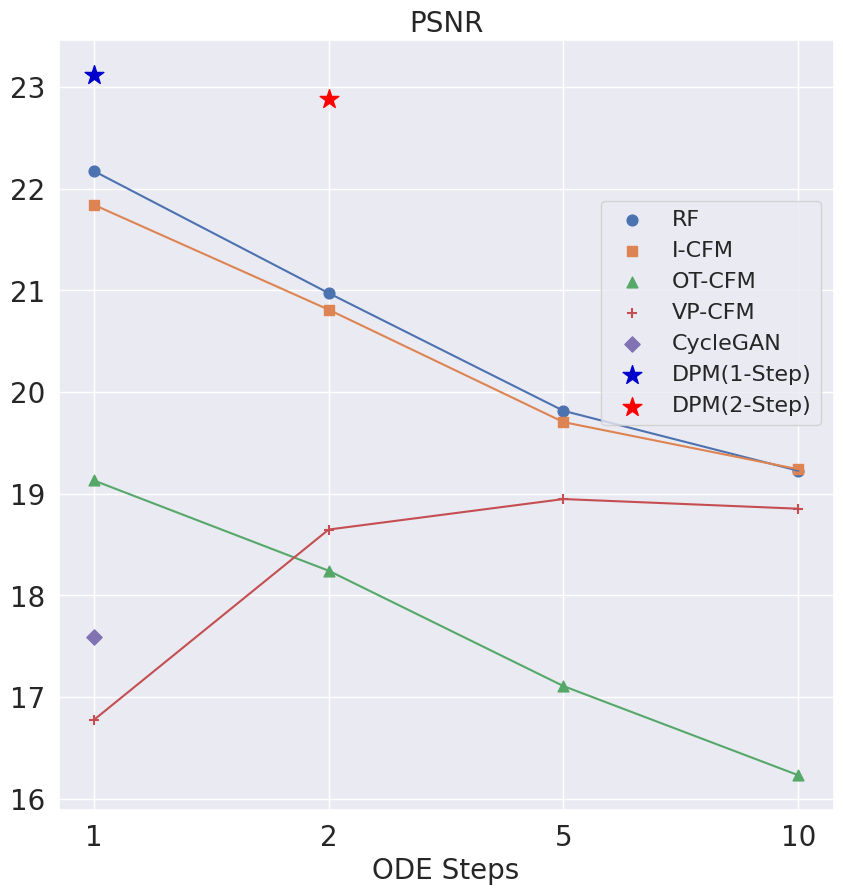}&
%       \includegraphics[width=.24\textwidth]{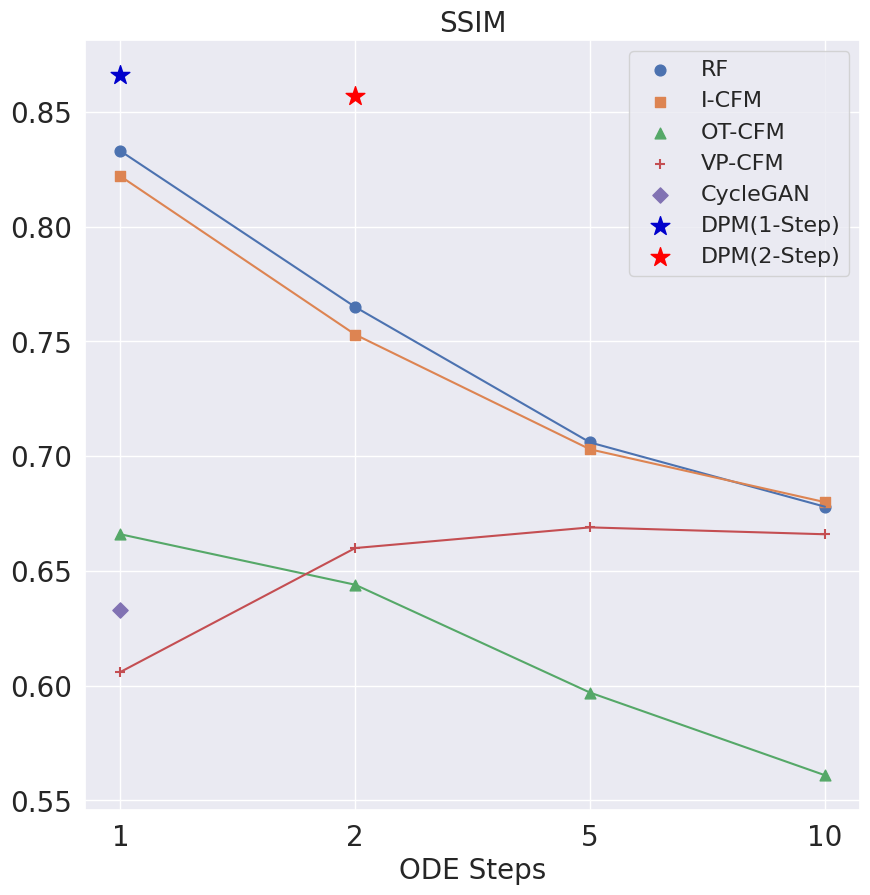}&
%       \includegraphics[width=.24\textwidth]{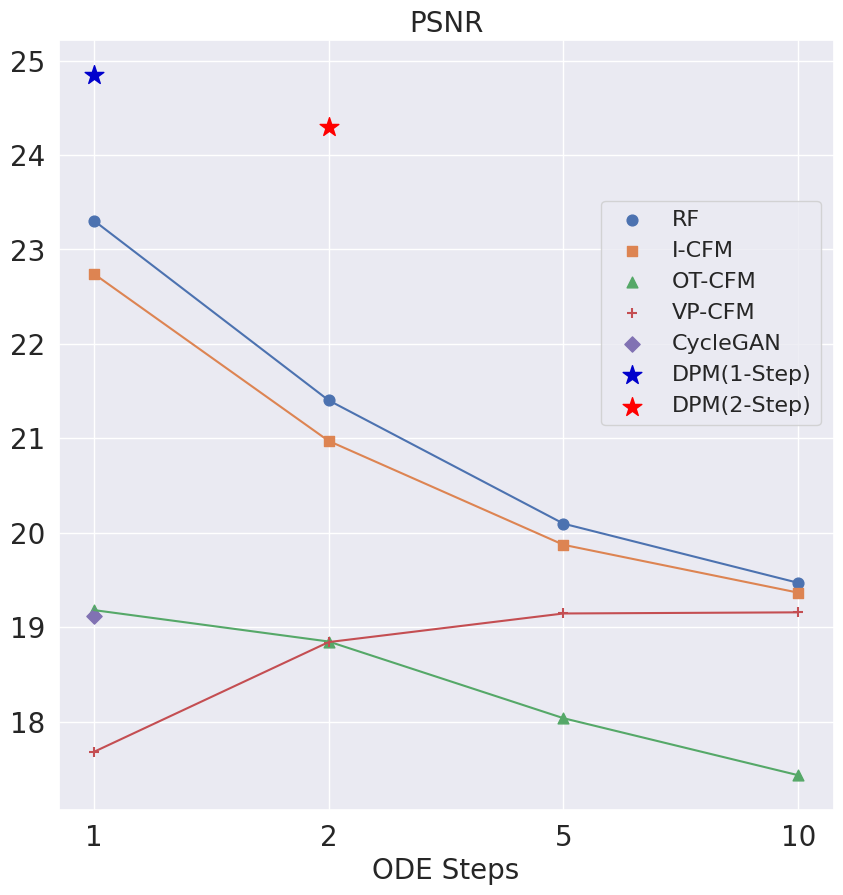}\\
%       % \multicolumn{3}{c}{\includegraphics[width=0.99\textwidth]{figures/index/time.png}} \\
%     \end{tabular}
%     \caption{The quantitative comparison results on \textbf{MRI T1/T2} dataset between different methods with various discrete ODE steps. }
% \end{figure*}

\begin{figure}[htbp]
\centering
    \begin{tabular}{c@{\hspace{2pt}}c@{\hspace{2pt}}c@{\hspace{2pt}}c@{\hspace{2pt}}}
        \put(-10,45){\rotatebox{90}{\textbf{T2}$\rightarrow$\textbf{T1}}} &
    \includegraphics[width=.24\textwidth]{figures/index/mri_brain/SSIM_0.png}&
      \includegraphics[width=.24\textwidth]{figures/index/mri_brain/PSNR_0.png}\\
        \put(-10,45){\rotatebox{90}{\textbf{T1}$\rightarrow$\textbf{T2}}} &
      \includegraphics[width=.24\textwidth]{figures/index/mri_brain/SSIM_1.png}&
      \includegraphics[width=.24\textwidth]{figures/index/mri_brain/PSNR_1.png}\\
     & SSIM & PSNR
    \end{tabular}
    \caption{
    The quantitative comparison results on \textbf{MRI T1/T2} dataset between different methods with various discrete ODE steps. From the top to the bottom, the figures show the results of synthetic MRI T1 and synthetic MRI T2. The indices are SSIM, and PSNR from left to the right.}
    \label{ODE_Step_MRI}
\end{figure}
% \begin{figure*}[htbp]
% \centering
%     \begin{tabular}{c@{\hspace{2pt}}c@{\hspace{2pt}}c@{\hspace{2pt}}c@{\hspace{2pt}}}
%     \multicolumn{2}{c}{\textbf{CT}} & \multicolumn{2}{c}{\textbf{MRI}} \\
%       \includegraphics[width=.24\textwidth]{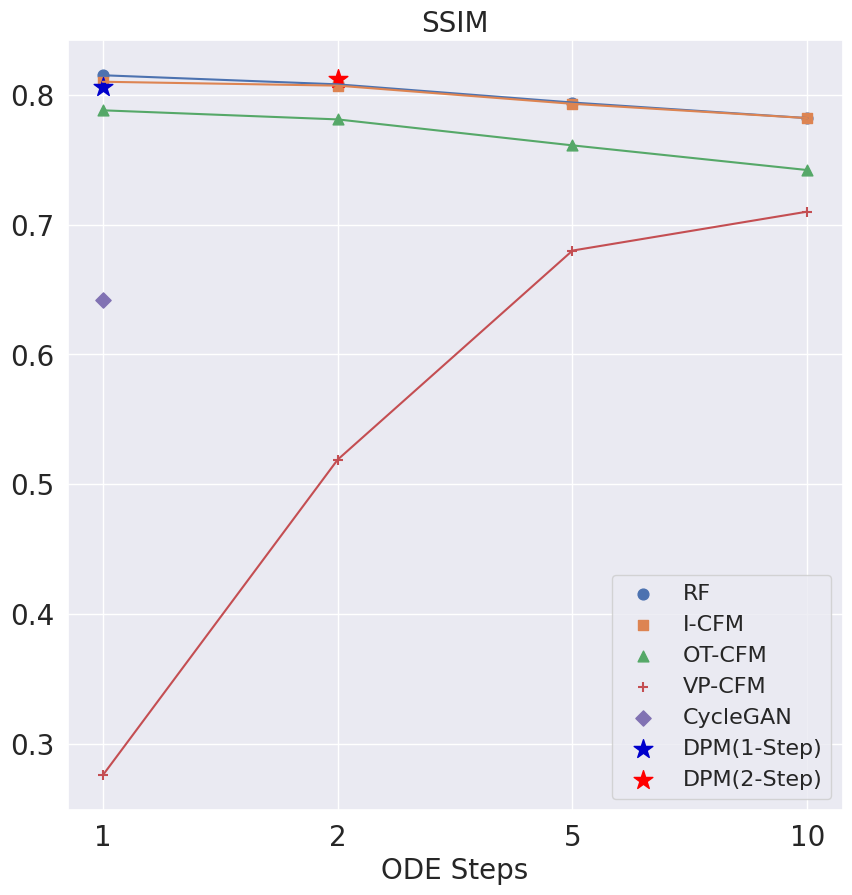}&
%       \includegraphics[width=.24\textwidth]{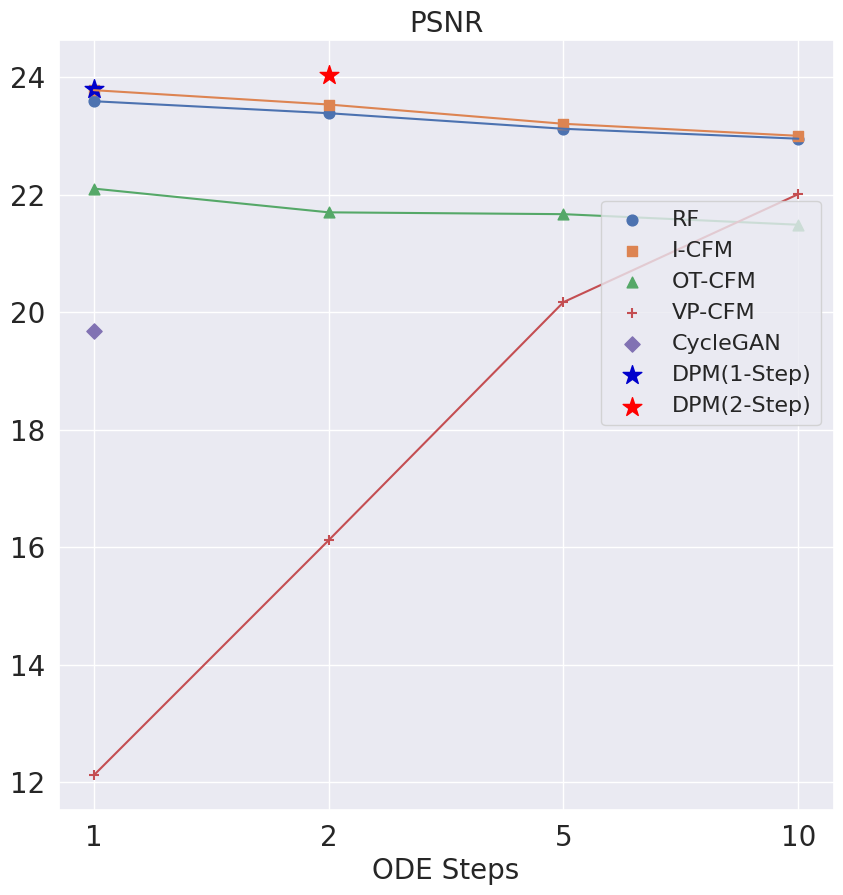}&
%       \includegraphics[width=.24\textwidth]{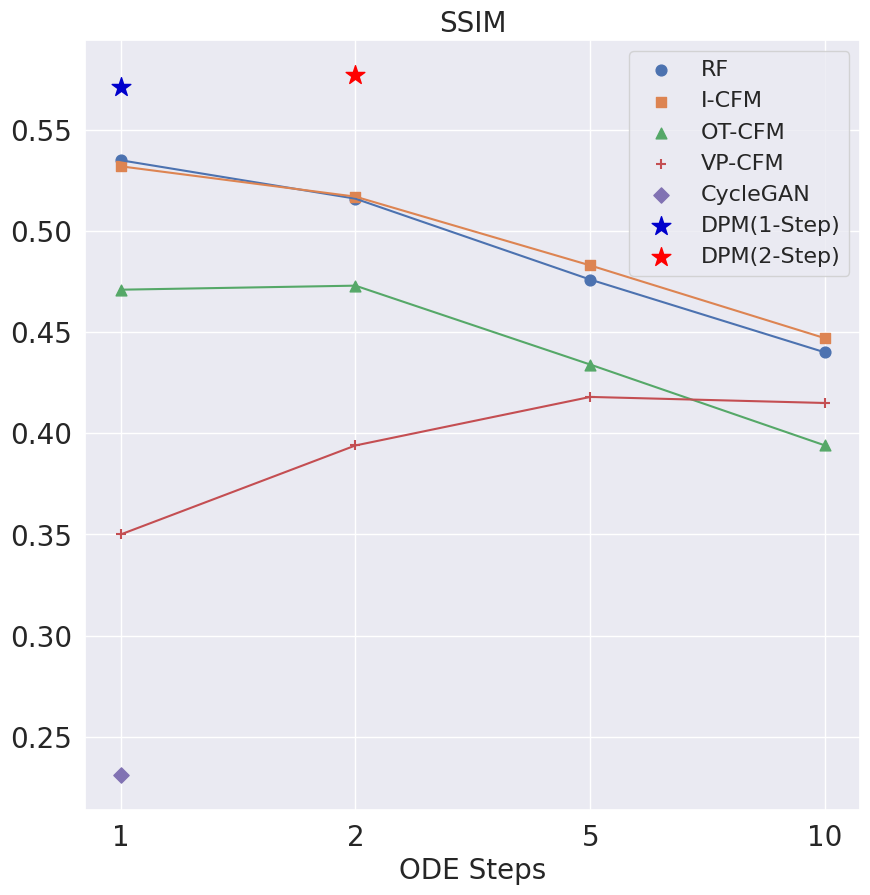}&
%       \includegraphics[width=.24\textwidth]{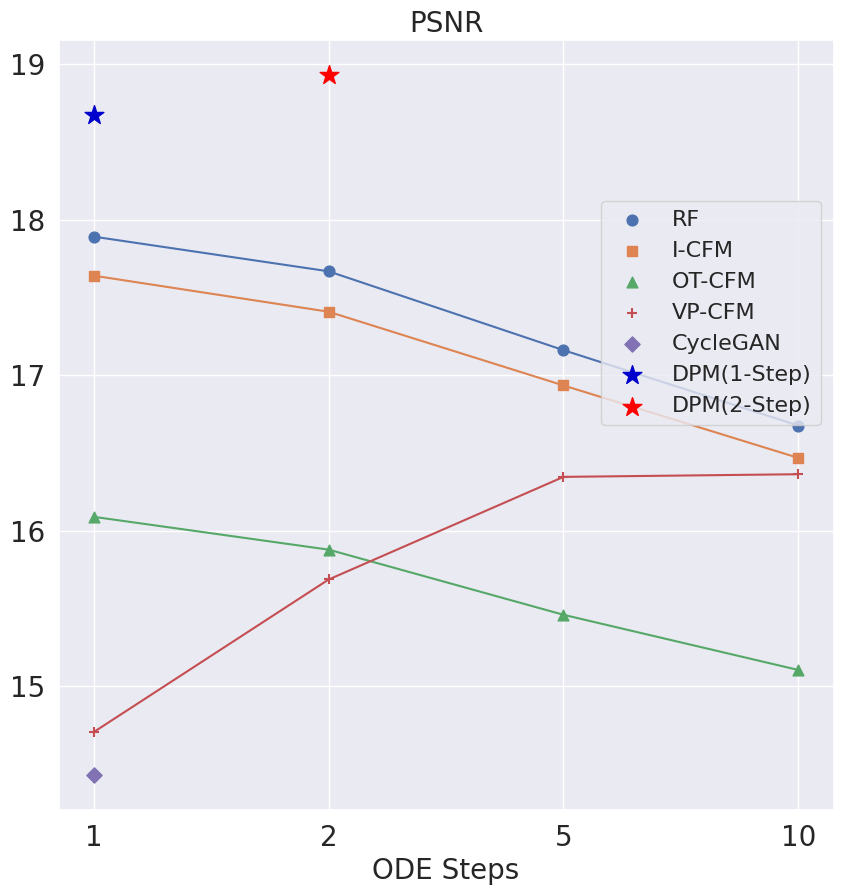}\\
%       % \multicolumn{3}{c}{\includegraphics[width=0.99\textwidth]{figures/index/time.png}} \\
%     \end{tabular}
%     \caption{The quantitative comparison results on \textbf{CT/MRI Pelvis} dataset between different methods with various discrete ODE steps. }  
% \end{figure*}
\begin{figure}[htbp]
\centering
    \begin{tabular}{c@{\hspace{2pt}}c@{\hspace{2pt}}c@{\hspace{2pt}}c@{\hspace{2pt}}}
    \put(-10,45){\rotatebox{90}{\textbf{MRI}$\rightarrow$\textbf{CT}}} &
      \includegraphics[width=.24\textwidth]{figures/index/ct_mri_pelvis/SSIM_0.png}&
      \includegraphics[width=.24\textwidth]{figures/index/ct_mri_pelvis/PSNR_0.png}\\
      \put(-10,45){\rotatebox{90}{\textbf{CT}$\rightarrow$\textbf{MRI}}} &
       \includegraphics[width=.24\textwidth]{figures/index/ct_mri_pelvis/SSIM_1.png}&
      \includegraphics[width=.24\textwidth]{figures/index/ct_mri_pelvis/PSNR_1.png}\\
      &SSIM & PSNR
    \end{tabular}
        \caption{The quantitative comparison results on \textbf{CT/MRI Pelvis} dataset between different methods with various discrete ODE steps. From the top to the bottom, the figures show the results of synthetic CT and synthetic MRI. The indices are SSIM, and PSNR from left to the right.}  
         \label{ODE_Step_Pelvis}
\end{figure}

\end{document}